\theoremstyle{plain}
\newtheorem{thm}{Theorem}
\newtheorem{thmintro}{Theorem}
\newtheorem*{thm*}{Theorem}
\newtheorem{prop}{Proposition}
\newtheorem*{prop*}{Proposition}
\newtheorem{dfn}{Definition}
\newtheorem*{dfn*}{Definition}
\theoremstyle{definition}
\newtheorem{rmk}{Lemma}
\newtheorem*{rmk*}{Lemma}
\newtheorem{cor}{Corollary}
\newtheorem*{cor*}{Corollary}
\begin{document}
\title{Selling Multiple Items to a Unit-Demand Buyer\\via Automated Mechanism Design\footnote{We would like to thank Michihiro Kandori for his guidance and comments. We are also grateful to Itai Ashlagi, Nima Haghpanah, Jason D. Hartline, Shengwu Li, Elliot Lipnowski, Paul Milgrom, Ryuji Sano, Ilya Segal, Alex Teytelboym, Frank Yang, Donghao Zhua as well as participants at the 7th World Congress of the Game Theory Society, Beijing, China and 2025 Asian School in Economic Theory (formerly the Summer School of Econometric Society), Seoul, South Korea.}}
\author{
Kento Hashimoto\thanks{Faculty of Economics, The University of Tokyo. Email: academic.research.purpose@gmail.com},
Keita Kuwahara\thanks{Graduate School of Economics, The University of Tokyo. Email: kuke0303@g.ecc.u-tokyo.ac.jp},
and Reo Nonaka\thanks{Department of Economics, Yale University. Email: lemonnade@g.ecc.u-tokyo.ac.jp}
}

\date{\today}

\maketitle

\begin{abstract}   
    Finding the optimal (revenue-maximizing) mechanism to sell multiple items has been a prominent and notoriously difficult open problem. Existing work has mainly focused on deriving analytical results tailored to a particular class of problems (for example, \cite{giannakopoulos2015bounding,yang2025nested}). The present paper explores the possibility of a generally applicable methodology of the Automated Mechanism Design (AMD). We first employ the deep learning algorithm developed by \textcite{dutting2023optimal} to numerically solve small-sized problems, and the results are then generalized by educated guesswork and finally rigorously verified through duality. By focusing on a single buyer who can consume one item, our approach leads to two key contributions: establishing a much simpler way to verify the optimality of a wide range of problems and discovering a completely new result about the optimality of uniform pricing. First, we show that selling each item at an identical price (or equivalently, selling the grand bundle of all items) is optimal for any number of items when the value distributions belong to a class that includes the uniform distribution as a special case. Different items are allowed to have different distributions. Second, for each number of items, we established necessary and sufficient conditions that $c$ must satisfy for uniform pricing to be optimal when the value distribution is uniform over an interval $[c, c+1]$. This latter model does not satisfy the previously known sufficient conditions for the optimality of grand bundling \parencite{haghpanah2021pure}. Our results are in contrast to the only known results for $n$ items (for any $n$), \textcite{giannakopoulos2015bounding} and \textcite{daskalakis2017strong}, which consider a single buyer with additive preferences, where the values of items are narrowly restricted to i.i.d. according to a uniform or exponential distribution.
\end{abstract}

\section{Introduction}
The optimal auction design problem is an important yet notoriously difficult open question in economic theory. It considers the revenue-maximizing selling procedure when buyers’ willingness to pay is private information. The seminal work \textcite{myerson} solved the problem of selling a single item, but for more than 40 years, the optimal mechanism for multiple items has not been known. Even when there is one buyer, we are yet to obtain a general result. The only known results for the one-buyer and arbitrary number of items (for any $n$-item) case are \textcite{giannakopoulos2015bounding}, \textcite{daskalakis2017strong}, \textcite{haghpanah2021pure}, and \textcite{yang2025nested}. \textcite{giannakopoulos2015bounding} and \textcite{daskalakis2017strong} considered a buyer with additive preferences, where the value of a set of items is the sum of the values of each item and derived the optimal mechanism when the values of items are i.i.d. according to a particular distribution.   \textcite{haghpanah2021pure} covers broader settings but relies on conditions that are not easy to check in practice. Similarly, \textcite{yang2025nested} provides a general result, but their approach is limited to a one-dimensional type space, reducing its applicability to more complex multi-dimensional scenarios. In summary, results that apply to cases with an arbitrary number of goods are quite rare, and even when they exist, they hold only under limited settings. Furthermore, although they show sufficiency conditions for a specific mechanism to be optimal, it remains unsolved whether those conditions are also necessary. The necessary condition \textcite{haghpanah2021pure} showed is partial inverse. Although \textcite{daskalakis2017strong} showed a necessary and sufficient condition for grand bundling to be optimal, that condition is defined by the notion of stochastic dominance and is not easy to verify.\footnote{ Indeed, even when the value of each item uniformly distributed over $[c, c+1]$ independently, \textcite{daskalakis2017strong} only prove the sufficient condition for the grand bundling to be optimal and does not show necessary and sufficient characterization for what value of $c\ge0$, grand bundling is optimal. In contrast, focusing on the case the buyer has unit-demand preference, we characterized that condition and that is quite easy to check. }  A necessary and sufficient characterization in multiple goods auction theory is significantly rare and even if they exist, the condition is not easy to verify. 
\par
In contrast, we derive thus far the most general results in terms of the number of items and the range of possible value distributions and established easy-to-verify necessary and sufficient condition for the uniform pricing to be optimal  by focusing on a \textit{unit-demand} buyer who can consume at most one item. We have two contributions. The first theorem characterizes the conditions under which bundling (selling all items together) is optimal, regardless of the item types or the values of $\alpha$. The required condition, quantile-scaled monotonicity, is satisfied by a broad class of distributions, including the uniform and beta distributions. we assume that the cumulative distribution functions (CDFs) for the values of items $j=1,\ldots, n$ are given by $F_j(x_j)=F(x_j)^{\alpha_j}$ for $\alpha_j > 0$ and some base CDF $F(x)$. Here, the parameters $\alpha_1,\ldots, \alpha_n$ may be distinct. This setup represents a scenario where all items share a common underlying distribution but differ specifically by their $\alpha$ exponents. Second, we find easy-to-verify necessary and sufficient conditions for uniform pricing to be optimal when the value distribution is uniform. Suppose each item's value distribution is uniform over $[c,c+1]$. For any number of items, the uniform pricing is optimal if and only if $c$ is no greater than a certain threshold, which is uniquely determined according to the number of items. We succeeded in showing a necessary and sufficient characterization of the optimality. \par
 Our results showcase the Automated Mechanism Design (AMD) approach developed in computer science \parencite{feng2018deep,conitzer2002complexity}
, which is not yet well-known and appreciated by economic researchers. Mechanism design problems, in general, are constrained optimization problems, but analytical solutions are often hard to find.  This comes from the complexity of incentive constraints, especially for multi-dimensional private information, and this is also true for the optimal auction problem with multiple items. The Automated Mechanism Design approach numerically solves the optimal mechanism using a computer algorithm for small-sized problems for specific parameter values, and the results obtained are generalized. We employ the deep-learning-based algorithm developed by \textcite{dutting2023optimal}, which is known to reproduce the optimal auction mechanisms in the literature. We apply it to a two-item or three-item case with uniform value distribution and obtain a candidate optimal mechanism. Using insights from that particular instance of the problem, we generalize the result and rigorously prove optimality using our duality approach. 
Our basic approach is described as follows. \textcite{myerson} provided an elegant procedure for deriving a simple closed form of the expected revenue from the incentive constraints. Still, this approach is challenging to extend to the multiple-item case. The difficulty comes from the fact that the private information is multi-dimensional. In ex-ante,  we do not know in which direction the incentive constraints bind. In contrast, we derive the optimal mechanism by duality. We adopt the approach proposed by \textcite{haghpanah2021pure}, which demonstrates that uniform pricing is optimal if the profile of relative values is stochastically non-decreasing in the positive values of the grand bundle. Our findings reveal that if the cumulative distribution function (CDF) F satisfies a specific property, the condition required by \textcite{haghpanah2021pure} is consistently satisfied for any arbitrary number of goods. For the case where the value distribution is uniformly distributed over $[c, c+1]$, we utilized the strong duality of the dual problem, first established by \textcite{daskalakis2017strong}. \textcite{daskalakis2017strong} demonstrated the strong duality between the optimal mechanism design problem and its dual problem in the setting where the buyer has additive preferences. \textcite{kash2016optimal} extended this strong duality framework to the case where the buyer has unit-demand preferences. We derived the solution to this dual problem and verified its feasibility. Furthermore, by fully leveraging this strong duality framework, we established the necessary conditions for uniform pricing to be optimal. Combined with the sufficient conditions, these conditions provide the necessary and sufficient conditions for pure-bundling to be optimal.  
 
\subsection{Our results}
  In this paper, we derive the most general results regarding the number of items and the range of possible value distributions by focusing on a \textit{unit-demand} buyer who can consume no more than one item. 
In Section \ref{sec:when's alpha}, we assume that the cumulative distribution functions (CDFs) for the values of items $j=1,\ldots, n$ are given by $F_j(x_j)=F(x_j)^{\alpha_j}$ for $\alpha_j > 0$ and some base CDF $F(x)$. Here, the parameters $\alpha_1,\ldots, \alpha_n$ may be distinct. This setup represents a scenario where all items share a common underlying distribution but differ specifically by their $\alpha$ exponents.

We demonstrate that setting an identical price for each item—or equivalently, selling the grand bundle of all items—is optimal for any set of parameters $\alpha_1, \ldots, \alpha_n$:

\begin{thmintro}
    Assume $F$ is scale monotone. Then for any number of items and for any $\alpha_1, \ldots, \alpha_n>0$, uniform pricing is optimal.
\end{thmintro}

This theorem characterizes the conditions under which bundling (selling all items together) is optimal, regardless of the item types or the values of $\alpha$. Because this holds for an arbitrary number of items for any fixed $\alpha$, the result is applicable even in small-scale cases, such as a two-item setting. The required condition, scale monotone, is satisfied by a broad class of distributions, including the uniform and beta distributions.

In our second result, we provide necessary and sufficient conditions for the optimality of uniform pricing under the assumption that all bidders’ valuations are i.i.d. according to a uniform distribution on 
$[c,c+1]$. We identify a unique threshold $c^*$, which is determined by solving an $n$-dimensional equation that precisely indicates whether uniform pricing is optimal.
\begin{thmintro} \label{thmintro:U[c, c+1]}
Suppose the value of each item is uniformly distributed over $[c, c+1]$ identically and independently. Then
$$\text{The optimal mechanism is uniform pricing if and only if }\;  c \in[0, c^*]$$
where $c^*\in(0, n)$ is determined as follows: given an optimal price of grand bundle $p$ (denote it as $p(c)$ since it depends on $c$), the solution of the equation $(n+1)(c+1-p(c)) = c+1$ on $c\in(0, n)$ is uniquely determined. That is $c^*$.
\end{thmintro}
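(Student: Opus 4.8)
The plan is to work within the strong linear-programming duality for unit-demand mechanism design, namely the framework of \cite{daskalakis2017strong} as extended to unit-demand preferences by \cite{kash2016optimal} (both invoked in the Introduction), and to exhibit an explicit dual certificate for the grand-bundling mechanism exactly when $c\le c^*$. A grand-bundling mechanism with bundle price $p$ is encoded by the buyer's indirect-utility function $u_p(x)=\max\{0,\max_j x_j-p\}$, which is convex, nondecreasing and $1$-Lipschitz, hence feasible for a unit-demand buyer. Under the i.i.d.\ uniform-$[c,c+1]$ prior its expected revenue is $R(p)=p\,(1-(p-c)^n)$ for $p\in[c,c+1]$, so the optimal bundle price $p(c)$ is the unique root in $(c,c+1)$ of $1-(n+1)(p-c)^n-nc(p-c)^{n-1}=0$; at $c=0$ this gives $p(0)=(n+1)^{-1/n}$, consistent with Theorem 1. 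The task is to decide for which $c$ the mechanism $u_{p(c)}$ is \emph{globally} optimal, i.e.\ admits a dual certificate.

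First I would write down the dual (signed) measure. Integrating the revenue functional $\int(\nabla u\cdot x-u)\,f(x)\,dx$ by parts against the uniform density $f\equiv 1$ on the cube $[c,c+1]^n$ produces a measure $\mu$ with interior density $-(n+1)$, density $+(c+1)$ on each upper facet $\{x_j=c+1\}$, and density $-c$ on each lower facet $\{x_j=c\}$, such that $\int u\,d\mu$ reproduces the revenue of every feasible $u$. By the strong-duality theorem, $u_{p(c)}$ is optimal exactly when $\mu$ admits a feasible transport: a coupling moving the positive part (the upper facets) onto the negative part that (i) is monotone in the convex order, (ii) deposits the residual negative mass in the zero-region $Z_{p(c)}=\{\max_j x_j\le p(c)\}$ where $u_{p(c)}=0$, and (iii) is supported on segments along which $u_{p(c)}$ is affine. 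Since $u_{p(c)}$ is piecewise linear with linearity cells indexed by the identity of the maximal coordinate, (iii) and (i) are arranged by transporting each point of a cell outward along its maximal-coordinate direction to the upper facet of that cell; the content of the theorem is the feasibility of the mass balance this forces.

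Computing the uniform volumes of the relevant cells and facet pieces (products of the edge lengths $p(c)-c$, $1$ and $c+1-p(c)$) reduces the transport-feasibility question to the single scalar inequality $(n+1)\,(c+1-p(c))\ge c+1$. Writing $g(c):=n(c+1)-(n+1)p(c)=(n+1)(c+1-p(c))-(c+1)$ for the slack, a short computation from the defining equation of $p(c)$ gives $p'(c)=\frac{n\,q+c(n-1)}{(n+1)q+c(n-1)}\in(0,1)$ with $q=p(c)-c$, hence $g'(c)=\frac{-c(n-1)}{(n+1)q+c(n-1)}\le 0$, while $g(0)=n-(n+1)^{\,1-1/n}>0$ (for $n\ge 2$) and $g(n)=-(n+1)q<0$. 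Therefore $g$ is decreasing with a unique zero $c^*\in(0,n)$, $g(c)\ge 0\iff c\le c^*$, and $g(c^*)=0$ is precisely the equation $(n+1)(c+1-p(c))=c+1$; this pins down both the uniqueness and the location of $c^*$ and yields the ``only if'' direction once one checks that for $c>c^*$ the forced transport genuinely cannot be completed (equivalently, one exhibits an explicit improving menu, e.g.\ a two-price or small-lottery deviation).

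The hard part will be Step~2: producing a single, globally consistent transport plan rather than a cell-by-cell sketch. One must reconcile the coupling across the non-smooth ridges $\{x_i=x_j=\max_k x_k\}$ where adjacent linearity cells meet, route the lower-facet mass $-c$ (which vanishes only at $c=0$) so that it is absorbed inside $Z_{p(c)}$ or cancelled compatibly, and verify that the chosen segments are genuinely convex-order monotone. The cleanest route is to exploit the full symmetry of the i.i.d.\ prior, restrict to the sorted chamber $x_1\ge x_2\ge\cdots\ge x_n$, define the transport explicitly there, and extend by permutation; once such a closed-form certificate is guessed (with the AMD output as the guide, as in Section~\ref{sec:AMD}), the remaining work is just the elementary volume identities that produce the threshold equation.
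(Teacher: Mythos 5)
Your high-level strategy --- invoke the Kash--Frongillo strong-duality criterion (the unit-demand Daskalakis--Deckelbaum--Tsitsiklis framework), construct a transport from the upper facets of the cube to the interior along segments where $u_{p(c)}$ has unit slope, and reduce feasibility to the scalar inequality $(n+1)(c+1-p(c))\ge c+1$ --- matches the paper exactly, as does your analysis of $c^*$: your $p'(c)=\frac{nq+c(n-1)}{(n+1)q+c(n-1)}$ with $q=p-c$ is algebraically identical to the formula in the proof of Proposition~\ref{prop:uniqueness of c_n^*}, and your monotonicity/sign argument for $g(c)$ is the same.

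There is, however, a genuine gap in the transport you describe, and it is in the decisive direction. You propose to push each interior point outward \emph{``along its maximal-coordinate direction''}, i.e.\ along $e_{j^*}$ where $x_{j^*}=\max(x)$. This does satisfy the $\ell_\infty$ ``slope-one'' (complementary-slackness) condition, but the resulting measure $\gamma_1$ on a facet $B_{j^*}$ then has density proportional to $c+1-\max\bigl(\max(x_{-j^*}),\,p(c)\bigr)$, which is \emph{largest} at the low corner and vanishes toward the high edges, plus a singular piece concentrated on the lower edges $\{x_i=c\}\cap B_{j^*}$ coming from the $-c$ lower-facet mass (that mass is invariant under an axis-parallel push). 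A measure that loads the low corner of the facet cannot convexly dominate the uniform $\mu_+\!\restriction_{B_{j^*}}$, which is what the Kash--Frongillo condition $\gamma_1\succeq_{\mathrm{cvx}}\mu_+$ requires --- the convex order goes the wrong way. The paper's construction (Definitions~\ref{dfn:push U}--\ref{dfn:gamma_1}, Lemma~\ref{lemma:y+t1}) pushes along the \emph{main diagonal} $\mathbf{1}$ instead; this turns the lower-facet mass into an absolutely continuous $+c$ contribution and yields the density $f^{B_1}(x)=(n+1)\bigl(\min(x)-c\bigr)+c$ (truncated at $(n+1)(c+1-p)$), which is \emph{increasing} in $\min(x)$ and hence lies on the correct side of the uniform level $c+1$ exactly when $(n+1)(c+1-p)\ge c+1$. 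That monotone-in-$\min$ structure is what the induction (Lemmas~\ref{lemma:D_k-ordered->D_k-1-ordered}--\ref{lemma:cylinder}, Proposition~\ref{prop:D-ordered->cvx dominance}) turns into convex dominance. With your $e_{j^*}$-push, no analogous reduction to a single threshold inequality is available.

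You also treat the necessity direction as a one-liner (``check that the forced transport genuinely cannot be completed, or exhibit an improving menu''). That is not sufficient: strong duality requires showing that \emph{no} coupling $\delta$ --- not merely your candidate one --- can satisfy all five optimality conditions once $c>c^*$. The paper devotes Section~\ref{subsec:Proof Sketch} and Appendix~\ref{proof:necessity of thm pushed measure} (Lemmas~\ref{lemma:both delta integral equal}--\ref{lemma:final lemma for necessity}) to this, extracting from the complementary-slackness almost-sure condition a rigid upper bound on $\delta_1$ in terms of the diagonal push, and then using the explicit test function $v(x)=\max\bigl(\sum_i x_i-(nc+n+c-p),\,0\bigr)$ together with a limiting family $v_t$ and the convex-order constraints to derive $\int v\,d\delta_1\le\int v\,d\gamma_1<\int v\,d\mu_+$, contradicting $\delta_1\succeq_{\mathrm{cvx}}\mu_+$. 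None of this is anticipated in your sketch; ``exhibit a two-price or small-lottery deviation'' is a different (and unconstructed) strategy, and the paper does not need to exhibit a better menu at all.
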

This result was also conjectured by AMD, but our proof strategy leverages strong duality; rather than relying on weak duality, we use an approach proposed by \textcite{daskalakis2017strong} and apply it to the unit-demand case following \textcite{kash2016optimal}. The key to the proof is the ``pushed measure" we define in Section \ref{sec:strong dual}. 
For sufficiency, we show that uniform pricing is optimal if the pushed measure convexly dominates the original measure (in a specific sense). This measure-theoretic relation holds when the value of \( c \) is smaller than the threshold \( c^* \). 
For necessity, rather than constructing an alternative mechanism, we demonstrate that assuming \( c > c^* \) leads to a contradiction in the optimality conditions. There are two key conditions for optimality, but these cannot simultaneously hold when \( c > c^* \).
\subsection{Related work}
Our work belongs to a large body of literature on multi-item optimal auction design.  There are very few results for the multiple-buyers case \parencite{palfrey1983bundling,yao2017,kolesnikov2022beckmann}, and the existing literature primarily focuses on the case of a single-buyer \parencite{pavlov2011optimal,daskalakis2017strong,haghpanah2021pure}. The general optimal mechanism for the single buyer case is unknown, but some specific cases are solved. \textcite{manelli} established the condition under which the price schedule mechanism is optimal when the buyer has an additive preference and solved for two items case with some assumption on value distributions. \textcite{pavlov2011optimal} derived the optimal mechanism for two items when the buyer's preference is either additive or unit-demand and the value of those uniformly distributed on $[c, c+1]$ for any $c\ge0$. 
\textcite{thirumulanathan2019optimal} extended this result for two items, unit-demand buyer case, and established the optimal mechanism when the value of each item uniformly and independently distributed over $[c, c+d_1]$ and $[c, c+d_2]$ respectively. There are only a few results that proved to be optimal for $n$ items (for any $n$); they are \textcite{giannakopoulos2015bounding}, \textcite{daskalakis2017strong}, \textcite{haghpanah2021pure} and \textcite{yang2025nested} to best of our knowledge. \textcite{giannakopoulos2015bounding} and \textcite{daskalakis2017strong} consider the case in which the buyer’s valuation is additive, and they assume that the values of items are i.i.d. according to a specific value distribution.  In particular, the former assumes the exponential distribution, where the density of value $x$ is $f(x)=\lambda e^{-\lambda x}$, while the latter assumes the uniform distribution over $[c, c+1]$, where $c$ depends on the number of items. \footnote{Although \textcite{giannakopoulos2015bounding} considers the exponential distribution for any $\lambda>0$, it essentially covers a single distribution in the following sense. Consider the problem with any $\lambda$. If we change the unit of account so that the original $\lambda$ is now called $1$ in the new representation, the problem boils down to the one with $\lambda=1$. Hence, all models can be expressed as the model under a single distribution with $\lambda=1$. In contrast, the distributions we allow remain distinct even after the changes in the unit of account.}  The optimal mechanism in both cases is to sell a bundle of all items. \textcite{haghpanah2021pure} provides a sufficient condition for selling all items at once to be optimal, but this condition is not easy to verify. \textcite{yang2025nested} provided the sufficient condition for nested bundling to be optimal, but they focused on the case when the type space is one-dimensional.

\section{Model}

We consider a mechanism for one buyer and $n\ge2$ items $N=\qty{1,\ldots,n}$. The buyer has a valuation function $t:2^N\rightarrow \mathbb{R}_{\geq 0}$, where $t(b)$ denotes how much the buyer values the subset of items $b\in B\equiv 2^N$. In particular, we consider the buyer to have a \textit{unit-demand} valuation, where $t$ satisfies $t(b)=\max_{j\in b} t(\qty{j})$ (and $t(\emptyset)=0$). Intuitively, the buyer cannot consume more than one item and chooses one that provides the largest utility from $b$. In this case, $n$ numbers $t(\qty{1}),\ldots, t(\qty{n})$ fully characterizes the valuation function $t$. Therefore, it suffices to consider the buyer to have $n$ values $(t_1,\ldots, t_n)\in \mathbb{R}_{\geq 0}^n$. 
For each $t$, the valuation function is given by
\[
v(b,t) = \max_{j \in b} t_j ,
\]
with $v(\emptyset,t) = 0$.
Let $b^* = N$ be the grand bundle and
\(
v(b^*,t) \in \mathbb{R}
\)
denote the value of the grand bundle for $t$.
\(
r(\cdot,t) \in \mathbb{R}^B
\)
denotes the profile of relative values of $t$, defined by
\[
r(b,t) = \frac{v(b,t)}{v(b^*,t)} \quad \text{for all } b \in B,
\]
whenever $v(b^*,t) > 0$.
\par

We consider $t_j$ to be in interval $T_j=[\underline{v},\overline{v}]\subset  \mathbb{R}_{\geq 0}$. We denote the hyperrectangle of all possible buyer values by $T=\prod_{j=1}^n T_j$. The seller knows some probability distribution over $T$.
Assume that valuations are independent across items $j$. For each item $j$, the valuation is distributed according to a cumulative distribution function (CDF)
$F_j : [\underline{v}, \overline{v}] \to [0,1]$, with a strictly positive density $f_j$ on $(\underline{v}, \overline{v})$. 
Consequently, $F_j$ is continuous and strictly increasing on $[\underline{v}, \overline{v}]$.
\par

The revelation principle \parencite{mas1995microeconomic} shows that the optimal mechanism can be found in a class of mechanisms where the buyer reports her valuations and a truthful report is optimal. Hence, we denote our mechanism $(a,p)$ as a pair of allocation rules $a:T\rightarrow \Delta(N\cup\qty{0})$ and payment rules $p:T\rightarrow \mathbb{R}_{\geq 0}$. we consider the probabilistic allocation over $n+1$ candidates: allocating $\qty{1},\qty{2},\ldots, \qty{n}$, or allocating nothing. We do not consider allocating more than two items because the buyer consumes only one item. The buyer's utility when reporting $t^{\prime}\equiv(t_1^{\prime},\ldots, t_n^{\prime})$ under true valuation $t\equiv(t_1,\ldots, t_n)$ is \begin{gather*}
u(t^{\prime};t)=\sum^n_{j=1}a_j(t^{\prime})t_j-p(t^{\prime})
\end{gather*}
The seller's utility is captured by the revenue given by the mechanism $p(t^{\prime})$.\par
It is standard in mechanism design to require mechanisms to satisfy the following properties:\\
{\textit{Individual rationality (IR)}:} $u(t;t)\geq 0$ for all $t\in T$\\
{\textit{Incentive compatibility (IC)}:} $u(t;t)\geq u(t^{\prime};t)$ for all $t,t^{\prime}\in T$
We use the following characterization of incentive-compatible mechanisms by \textcite{rochet}. 
\begin{rmk}[\textcite{rochet}]\label{lem:truthful auction}
A mechanism $M = (a, p)$ is truthful (IC) if and only if the utility functions $u(t)\equiv u(t|t)$ which the mechanism $M$ induces have the following properties:
    (i) $u$ is a convex function.
   (ii) $u$ is almost everywhere (a.e.) differentiable with
    $\pdv{u(t)}{t_j} = a_j(t) $
for all items $j=1,\ldots,n$ and a.e. $t \in T.$
\end{rmk}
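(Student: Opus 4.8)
The plan is to reduce incentive compatibility to a subgradient inequality for the induced indirect utility, exploiting that the misreport payoff is affine in the true type. Since $u(x'|x)=\sum_{j}a_j(x')x_j-p(x')=\langle a(x'),x\rangle-p(x')$ and $u(x')\equiv u(x'|x')=\langle a(x'),x'\rangle-p(x')$, subtracting the two expressions yields the identity
\[
u(x'|x)=u(x')+\langle a(x'),x-x'\rangle .
\]
Hence the whole family of IC inequalities $u(x|x)\ge u(x'|x)$ for all $x,x'$ is \emph{equivalent} to
\[
u(x)\ \ge\ u(x')+\langle a(x'),x-x'\rangle\qquad\text{for all }x,x' ,
\]
that is, to the statement that $a(x')$ is a subgradient of $u$ at $x'$ for every $x'\in X$. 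The rest is routine convex analysis, organized as the two implications below.

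For the ``only if'' direction, I would fix $x'$ and read the last display as: the affine function $z\mapsto u(x')+\langle a(x'),z-x'\rangle$ lies weakly below $u$ and touches it at $x'$; taking the supremum over $x'$ gives $u=\sup_{x'}\bigl(u(x')+\langle a(x'),\,\cdot\,-x'\rangle\bigr)$, a pointwise supremum of affine functions, so $u$ is convex — this is (i). A finite convex function on (the interior of) the box $X$ is locally Lipschitz, hence differentiable Lebesgue-a.e.; at any point $x$ of differentiability the subdifferential equals $\{\nabla u(x)\}$, so the subgradient property forces $a(x)=\nabla u(x)$, i.e.\ $a_j(x)=\partial u(x)/\partial x_j$ for a.e.\ $x$ — this is (ii). For the ``if'' direction, assume $u$ convex with $\nabla u=a$ a.e.; at every differentiability point $x'$ (i.e.\ a.e.\ $x'$) convexity gives $u(x)\ge u(x')+\langle\nabla u(x'),x-x'\rangle=u(x'|x)$ for all $x$, so truthful reporting beats reporting any such $x'$. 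The leftover misreports form a Lebesgue-null set; on it one either adopts the natural convention that $a(x')$ is a genuine selection of $\partial u(x')$, or observes that redefining $a$ (and hence $p$) there alters neither the mechanism up to a.e.\ equivalence nor any expectation against the density $f$, so incentive compatibility holds in the sense used throughout.

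I do not anticipate a substantive obstacle: the statement is classical (\cite{rochet}) and the argument is short once the affine-in-own-type identity is in hand. The only points that need care are (a) invoking the standard fact that a convex function is differentiable almost everywhere — without which clause (ii) would not even be meaningful — and (b) the null set of misreports in the ``if'' direction, where, lacking an explicit hypothesis that $a$ is a pointwise selection of the subdifferential, one falls back on the almost-everywhere reading of incentive compatibility, which is all that is used in the sequel since types are drawn from a distribution with density $f$.
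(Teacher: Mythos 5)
Your proof is correct and is the standard argument for Rochet's characterization; the paper itself states this lemma as a citation to Rochet (1987) and does not reprove it, so there is no in-paper proof to compare against. Your reduction of IC to the subgradient inequality via the affine-in-type identity, the supremum-of-affine-functions argument for convexity, the a.e.\ differentiability of a finite convex function with $\partial u(x)=\{\nabla u(x)\}$ at differentiability points, and the converse via the subgradient inequality are all exactly right. The subtlety you flag in the ``if'' direction --- that without stipulating $a(x')\in\partial u(x')$ everywhere, IC is only guaranteed against a.e.\ misreports, not literally all of them --- is a genuine and well-known wrinkle in the statement of Rochet's theorem; your resolution (either read $a$ as a subdifferential selection, or accept the a.e.\ formulation since the type is drawn from an atomless distribution and only expectations matter in the sequel) is the standard one and is consistent with how the lemma is actually used in the paper's primal problem, which is posed over a.e.~$x$.
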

We study the problem of maximizing the seller's expected revenue based on their prior knowledge of the joint distribution $F$ under the IR and IC constraints. By lemma \ref{lem:truthful auction},
it suffices to solve
\begin{gather}\label{problem:primal}
\sup_{u} \int_{T} (\nabla u(\boldsymbol{t})\cdot \boldsymbol{t} - u(\boldsymbol{t}))f(\boldsymbol{t}) d\boldsymbol{t}
\end{gather}
over the space of nonnegative convex functions $u$ on $T$
 having the properties
$$\sum^n_{j=1}\pdv{u(\boldsymbol{t})}{t_j}\leq 1\text{ and }
\pdv{u(\boldsymbol{t})}{t_j}\geq 0\text{ for a.e. $\boldsymbol{t}\in T$, all $j \in \qty{1,...,n} $.}$$

In particular, we demonstrate the optimality of \emph{uniform pricing}. Uniform pricing is a mechanism in which the seller posts an identical price for all goods. Let $p$ denote this price. The utility function of a unit-demand buyer is given by
\[
    u(t) = u^*(t) \equiv \max\left\{ \max(t) - p, 0 \right\},
\]
where $\max(t)$ denotes the maximal component of the vector $t$, i.e.,
\[
    \max(t) = \max\left\{ t_1, \ldots, t_n \right\}.
\]
Hereafter, we denote this specific utility function by $u^*(t)$. Under the unit-demand assumption, uniform pricing is equivalent to a mechanism that offers only the grand bundle at price $p$.

\section{The results of general distribution}\label{sec:when's alpha}

In this section, we generalize the optimality of uniform pricing to a broader class of distributions. We adopt the framework proposed by \textcite{haghpanah2021pure}, which establishes that grand bundling is optimal if the profile of relative values is stochastically non-decreasing in the positive values of the grand bundle. Our main contribution is identifying a specific property of the distribution $F$ that guarantees this condition is satisfied for any arbitrary number of items and various power-scale parameters.

First, we introduce the concept of scale-monotonicity for a cumulative distribution function (CDF), which plays a pivotal role in our analysis.

\begin{dfn}
A CDF $F:[\underline{v}, \overline{v}]\to [0,1]$ is scale-monotone if,
for all $\omega \in (\underline{v}/\overline{v},1)$, the function
\[
x \mapsto \frac{F(\omega x)}{F(x)}
\]
is non-increasing in $x \in (\underline{v}/w, \overline{v}]$.
\end{dfn}

Based on this definition, we show that scale-monotonicity is a sufficient condition for the optimality of uniform pricing, even when items are not identically distributed but follow power-scale transformations of the same base distribution.

\begin{thm}\label{thm:when's alpha}
    Assume $F$ is scale-monotone. For each $i=1, \ldots, n$, take any $\alpha_i>0$ and consider the CDF $F^{\alpha_i}$ over $[\underline{v}, \overline{v}]$. Suppose $t_i$ follows CDF $ F^{\alpha_i}$ independently. Then for any number of items and for any $\alpha_1, \ldots, \alpha_n>0$, uniform pricing is optimal.
\end{thm}

As a direct consequence of this theorem, we obtain the following result for the standard i.i.d. case.

\begin{cor}\label{cor: i.i.d.}
    Assume $F$ is scale-monotone. Consider the setting where $t_i\overset{i.i.d.}{\sim} F$. Then for any number of items, uniform pricing is optimal.
\end{cor}

Furthermore, we can relate this to the condition in \textcite{haghpanah2021pure}. Specifically, if their condition for uniform pricing optimality holds for a base case of two i.i.d. items, the scale-monotonicity ensures the robustness of this optimality as we scale the number of items or shift the distributions.

\begin{cor}\label{cor:two item}
    Suppose that the profile of relative values is stochastically non-decreasing in positive values of the grand bundle under the setting with $n=2, t_1, t_2\overset{i.i.d.}{\sim}F$. Then for any $n\ge2$ and $\alpha_1, \ldots, \alpha_n>0$, uniform pricing remains to be optimal when the seller offers \( n \) items and $t_i$ follows CDF \( F^{\alpha_i} \) for each $i=1, \ldots, n$.
\end{cor}

To provide a more tractable criterion for scale-monotonicity, we examine the elasticity of the density function. The following condition is often easier to verify for standard parametric distributions.

\begin{dfn}
    Assume density function $f$ is differentiable. We say $f$ satisfies the monotone elasticity condition if the elasticity of $f$, 
    $$\frac{tf'(t)}{f(t)}$$
    is non-decreasing in $t$.
\end{dfn}

Finally, we establish the connection between the monotone elasticity of the density and the scale-monotonicity of the CDF, providing a practical tool for applying our main results.

\begin{prop}\label{prop:hashimoto no zyuubunn zyoukenn}
    Assume density function $f$ is differentiable and satisfies the monotone elasticity condition. Then its CDF $F$ is scale-monotone.
\end{prop}

\section{The results of $U[c, c+1]$ cases}\label{sec:strong dual}
\subsection{Results by Strong Duality}

In this section, we find the optimal mechanism for selling $n$ items to a unit-demand buyer utility function for broader cases. We assume each valuation identically and independently follows the distribution $U[c,c+1](c\geq 0)$. Let $G$ be a cumulative distribution function of $U[c,c+1]$ and $p$ be a unique maximizer of expected revenue when we sell a grand bundle with price $q\in[c, c+1]$: $q(1-G(q)^n)$. \footnote{$q(1-G(q)^n)$ has a unique maximizer on $(c, c+1)$. To see this, let $h(q)=(q(1-G(q)^n))^{\prime}$. We can prove the existence of the unique maximizer on $(c, c+1)$ by getting $h(c)=1>0$, $h(c+1)=-n(c+1)<0$ and $h'(q)<0$ for all $q\in (c,c+1)$. }

\begin{thm} \label{thm:U[c, c+1]}
Suppose the value of each item is uniformly distributed over $[c, c+1]$ identically and independently. Then
$$\text{The optimal mechanism is uniform pricing if and only if }  c \in[0, c^*]$$
where $c^*\in(0, n)$ is determined as follows: given an optimal price of grand bundle $p$ (denote it as $p(c)$ since it depends on $c$), the solution of the equation $(n+1)(c+1-p(c)) = c+1$ on $c\in(0, n)$ is uniquely determined. That is $c^*$.

\end{thm}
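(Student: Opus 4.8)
The plan is to establish Theorem~\ref{thm:U[c, c+1]} via the strong-duality machinery of \cite{daskalakis2017strong}, in the unit-demand form of \cite{kash2016optimal}. The first step is to pass from the primal program \eqref{problem:primal} to its transport dual: integrating by parts rewrites $\int_X(\nabla u\cdot x-u)f\,dx=\int_X u\,d\mu$ for a signed ``transformed measure'' $\mu$ on $X$ determined by $f$ and $\partial X$. For $f\equiv 1$ on $X=[c,c+1]^n$ this $\mu$ is completely explicit: $-(n+1)$ times Lebesgue measure on the interior, $+(c+1)$ times surface measure on each upper facet $\{x_j=c+1\}$, and $-c$ times surface measure on each lower facet $\{x_j=c\}$; in particular $\mu(X)=-1$. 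Strong duality then says that the grand-bundle mechanism $u^{*}(x)=\max\{\max(x)-p,0\}$ with $p=p(c)$ is optimal if and only if there is a nonnegative dual certificate: a measure $\gamma\ge0$ on $X\times X$ and a nonnegative ``unsold-mass'' measure $\nu$ supported on the no-sale region $\{\max(x)\le p\}$, with $\gamma$ supported on pairs $(x,y)$ that are \emph{tight} for $u^{*}$ (i.e.\ $u^{*}(x)-u^{*}(y)$ equals the unit-demand transport cost $(\max(x-y))^{+}$) and $\mathrm{marg}_1\gamma-\mathrm{marg}_2\gamma=\mu+\nu$.

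The second step unpacks this using the geometry of $u^{*}$. Since $\nabla u^{*}\in\{0,e_1,\dots,e_n\}$, a pair is tight for $u^{*}$ essentially only when $x$ is obtained from $y$ by \emph{raising the largest coordinate}, $x=y+s\,e_{k(y)}$ with $s\ge0$; this is exactly the operation behind the \emph{pushed measure} $\widetilde{\mu}$ of Section~\ref{sec:strong dual}, which slides the negative interior and lower-facet mass radially outward onto the surface $\{\max(x)=c+1\}$. Two things must then happen for the certificate to exist. First, $\nu$ must account exactly for the negative mass $\mu$ places on the no-sale region: mass that $\nu$ puts elsewhere in $\{\max(x)\le p\}$ cannot be transported out while remaining tight, and $\nu(X)=1$ by global balance, so this pins $p$ to satisfy the revenue-maximization first-order condition $(n+1)(p-c)^{n}+cn(p-c)^{n-1}=1$ (equivalently, $\mu$ has mass $-1$ over $\{\max x\le p\}$). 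Second, after removing the no-sale part, the remaining negative mass of $\mu$ in the selling region must be transportable onto the positive shell $(c+1)\sum_j\sigma|_{\{x_j=c+1\}}$ \emph{respecting tightness}; by the item-permutation symmetry this reduces to a one-dimensional statement in the bundle value $t=\max(x)\in[p,c+1]$, namely the convex-dominance relation between the pushed measure and the original measure of Section~\ref{sec:strong dual}, which by the extreme-point structure of convex functions amounts to the family of scalar inequalities obtained by testing against $\phi_q(t)=(t-q)^{+}$, $q\in[p,c+1]$.

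The third step is calculus. Because $t=\max(x)$ has density $n(t-c)^{n-1}$ on $[c,c+1]$, all ingredients are polynomials, and the scalar inequalities become explicit. I expect the binding one to sit at $q=p(c)$, where the first-order condition for $p$ makes the comparison tight, with the rest reducing after simplification to a single inequality in $c$ equivalent to $(n+1)(c+1-p(c))\ge c+1$. Writing $\rho(c)=p(c)-c$ and differentiating the first-order condition, one checks that $c\mapsto n-(n+1)\rho(c)-c$ is strictly decreasing, positive at $c=0$ (equivalently $n^{n}>(n+1)^{n-1}$ for $n\ge2$) and negative at $c=n$; hence it has a unique zero $c^{*}\in(0,n)$, and $(n+1)(c+1-p(c))\ge c+1$ holds exactly on $[0,c^{*}]$. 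This yields sufficiency: for $c\le c^{*}$ the certificate $(\gamma,\nu)$ can be built, so $u^{*}$ is optimal. For necessity, one argues by contradiction: if grand bundling were optimal for some $c>c^{*}$, a certificate would exist, and its two required properties cannot hold together once $c>c^{*}$ — the first forces $p$ to obey its revenue-maximization first-order condition, and with that $p$ the second, equivalent to $(n+1)(c+1-p(c))\ge c+1$, fails.

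The main obstacle is two-fold. The first difficulty is geometric: correctly specializing the Daskalakis--Deckelbaum--Tzamos transport duality to the unit-demand setting, where the ``cost'' $(\max(x-y))^{+}$ is one-sided rather than a metric and the set of tight pairs is a union of coordinate rays rather than $\ell_1$-geodesics, so that the pushed measure, the role of $\nu$ on the no-sale region, and the precise sense of convex dominance must be set up carefully; this is the work done in Section~\ref{sec:strong dual}. The second, more delicate, difficulty is necessity: one must rule out \emph{every} dual certificate, not merely the natural radial one, and show that for $c>c^{*}$ the obstruction is captured exactly by the single test function $q=p(c)$ — that is, that the two optimality requirements conflict precisely at the root of $(n+1)(c+1-p(c))=c+1$.
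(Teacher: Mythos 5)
Your high-level roadmap matches the paper: pass to the transformed measure $\mu$, invoke the Kash--Frongillo unit-demand form of the Daskalakis--Deckelbaum--Tzamos strong duality, and reduce optimality of $u^{*}=\max\{\max(x)-p,0\}$ to a convex-dominance statement for a pushed measure, with the threshold $c^{*}$ coming out of $(n+1)(c+1-p(c))\gtrless c+1$. Up to that point you and the paper agree.

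The serious gap is in the middle step, where you claim the comparison ``reduces to a one-dimensional statement in the bundle value $t=\max(x)$'' and that the convex-dominance relation ``amounts to the family of scalar inequalities obtained by testing against $\phi_q(t)=(t-q)^+$.'' This is not true and is not what the paper does. The convex dominance $\gamma_1\succeq_{\mathrm{cvx}}\mu_+$ is a comparison of measures supported on the $(n-1)$-dimensional face $B=\{\max(x)=c+1\}$, and convex dominance must be tested against \emph{all} of $\mathcal{U}(X)$, not just functions of $\max(x)$. The obstruction the paper uses is $v(x)=\max\bigl(\sum_{i=1}^{n}x_i-(nc+n+c-p),0\bigr)$, a function of the \emph{sum} of coordinates, which is not a function of $\max(x)$ and cannot be recovered from your scalar family $\phi_q$. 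On the sufficiency side, the paper establishes $\gamma_1\succeq_{\mathrm{cvx}}\mu_+$ for $c\le c^*$ through the density computation of Lemma~\ref{lemma:density}, the ``$D_n$-ordered'' monotonicity property of the density difference, the Müller--Stoyan upper-set criterion (Lemma~\ref{lemma:Shaked and Shanthikumar}), and an induction over dimension via the cylinder construction (Lemmas~\ref{lemma:D_k-ordered->D_k-1-ordered}, \ref{lemma:cylinder}, Proposition~\ref{prop:D-ordered->cvx dominance}); none of this survives your $1$-D reduction, and your proposal offers no replacement.

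The necessity half has a similar issue: you assert that ``for $c>c^{*}$ the obstruction is captured exactly by the single test function $q=p(c)$,'' but the paper's necessity argument (Theorem~\ref{thm: nescessity} and Lemmas~\ref{lemma:both delta integral equal}--\ref{lemma:final lemma for necessity}) is considerably more involved: one must show that \emph{no} feasible $\delta$ exists, not just that the push-measure certificate fails. This requires the integral-equality consequences of convex order (Lemma~\ref{lemma:both delta integral equal}), the support localization of $\delta_1$ to $Z\cup B$ (Lemma~\ref{lemma:divide Z and W}), the transport inequality for arbitrary upper sets extracted from the a.s.\ condition (Lemmas~\ref{lemma:pushu}, \ref{lemma:integral inequality between delta_1 and delta_2}), and a limiting argument with the family $v_t=\max\bigl(\tilde v_{\mathrm{push}}+(n+t)(\max(x)-p),0\bigr)$ as $t\to\infty$ to collapse the no-sale contribution. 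Your proposal correctly senses that the two optimality requirements must conflict, but does not identify the multidimensional test function $v$ or the chain of lemmas that force the conflict; as written, the necessity argument is a statement of the conclusion rather than a proof. A secondary slip: you also characterize tight pairs as coordinate-axis rays $x=y+s\,e_{k(y)}$, whereas the push-set structure the paper actually exploits (Lemma~\ref{lemma:y+t1}) is along the diagonal direction $\mathbf{1}$; the two families overlap but are not the same, and the proof hinges on the latter.
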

We show the above theorem in Section \ref{subsec:Dual Problem for Strong Duality}, \ref{subsec:Pushed measure}, \ref{subsec:Proof Sketch}, and Appendix.

\subsection{Analysis of $c^*$}\label{subsec:analysis of c^star}
In this subsection, we will see some qualitative properties of $c^*$.
Since the value of $c^*$ depends on $n$, we will denote it as $c_n^*$ in this section. First of all, $c_n^*$ is well-defined and uniquely determined. For example, $c_2^*=1, c_3^*=\frac{11-\sqrt{33}}{4}\simeq 1.31, \text{ and } c_4^*\simeq1.57$.
\begin{prop}\label{prop:uniqueness of c_n^*}
$n$-dimensional equation $(n+1)^n-(n-c)^n-n^2(c+1)(n-c)^{n-1}=0$ has a unique solution on $c\in(0, n)$. That is  $c_n^*$.
\end{prop}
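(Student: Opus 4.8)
The plan is to reduce Proposition \ref{prop:uniqueness of c_n^*} to a one-variable monotonicity statement and then settle it by an explicit derivative computation. First I would recall the source of the equation: under $U[c,c+1]$ we have $G(q)=q-c$ on $[c,c+1]$, so the grand-bundle revenue at price $q$ is $R(q)=q\bigl(1-(q-c)^n\bigr)$, and by the footnote to Theorem \ref{thm:U[c, c+1]} its maximizer $p(c)$ on $(c,c+1)$ is the unique root of $h(q):=R'(q)=1-(q-c)^n-nq(q-c)^{n-1}$, with $h>0$ left of $p(c)$ and $h<0$ right of it. The defining relation $(n+1)(c+1-p(c))=c+1$ is equivalent to $p(c)=\tfrac{n(c+1)}{n+1}=:q^*(c)$; since $q^*(c)-c=\tfrac{n-c}{n+1}\in(0,1)$ for $c\in(0,n)$, the point $q^*(c)$ is an admissible price and $(n+1)(c+1-q^*(c))=c+1$ holds identically, so $c$ solves the defining relation iff $h(q^*(c))=0$. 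Multiplying $h(q^*(c))$ by $(n+1)^n$ yields exactly $\Phi(c):=(n+1)^n-(n-c)^n-n^2(c+1)(n-c)^{n-1}$, so the proposition's equation and the one in Theorem \ref{thm:U[c, c+1]} have the same solution set in $(0,n)$, and it remains to show $\Phi$ has a unique zero there.

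For uniqueness I would differentiate $\Phi$, collect terms, and factor out $(n-c)^{n-2}$:
\[
\Phi'(c)=(n-c)^{n-2}\Bigl[n(n-c)-n^2(n-c)+n^2(n-1)(c+1)\Bigr]=n(n-1)(n+1)\,c\,(n-c)^{n-2},
\]
which is strictly positive on $(0,n)$ for every $n\ge 2$; hence $\Phi$ is strictly increasing on $[0,n]$. Then I would evaluate the endpoints: $\Phi(n)=(n+1)^n>0$, while $\Phi(0)=(n+1)^n-n^n-n^{n+1}=(n+1)\bigl[(n+1)^{n-1}-n^n\bigr]$, and $(n+1)^{n-1}<n^n$ — equivalently $(1+\tfrac1n)^{n-1}<n$, which follows from $(1+\tfrac1n)^{n-1}<(1+\tfrac1n)^n<e<3\le n$ for $n\ge 3$ and by direct check for $n=2$ — so $\Phi(0)<0$. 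A continuous, strictly increasing function changing sign on $[0,n]$ has exactly one zero, and it lies in $(0,n)$; that zero is $c_n^*$.

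The main obstacle is not depth but bookkeeping: the genuinely load-bearing facts are the clean factorization $\Phi'(c)=n(n-1)(n+1)c(n-c)^{n-2}$, which makes monotonicity immediate, and the elementary inequality $(n+1)^{n-1}<n^n$ needed for $\Phi(0)<0$. The one point that needs care is to state the reduction in the first step as a genuine equivalence — using that $q^*(c)$ is admissible and that $h$ has a \emph{unique} sign change — rather than merely deriving the polynomial form from a guessed optimal price; everything after that is routine algebra and the intermediate value theorem.
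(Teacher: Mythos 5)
Your proof is correct, and it takes a genuinely different and more elementary route than the paper's. The paper works directly with the implicitly defined price $p(c)$: it sets $P(c) = p(c) - \tfrac{n}{n+1}(c+1)$, invokes the implicit function theorem on $h(c,p)=0$ to compute $p'(c)$, shows $P'(c)>0$, and then separately translates $P(c)=0$ into the stated polynomial equation at the end. You instead perform that translation up front — substituting $q^*(c)=\tfrac{n(c+1)}{n+1}$ into $h$ and clearing denominators to get $\Phi(c)$ — and then establish monotonicity by a direct differentiation yielding the clean factorization $\Phi'(c)=n(n-1)(n+1)\,c\,(n-c)^{n-2}$, which I verified expands correctly. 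This avoids the implicit function theorem altogether and replaces an argument about a nonexplicit function with explicit polynomial algebra; both proofs ultimately hinge on the same elementary inequality $(n+1)^{n-1}<n^n$ for the sign of the value at $c=0$. Your care in phrasing the first step as an equivalence — noting that $q^*(c)\in(c,c+1)$ so it is an admissible price and that $p(c)$ is the unique root of $h(c,\cdot)=0$ there — is exactly the point that makes the reduction rigorous, and it matches the role the paper's final paragraph plays in converting between the two formulations. The one cosmetic issue is that the equality $\Phi(0)=(n+1)[(n+1)^{n-1}-n^n]$ follows from $\Phi(0)=(n+1)^n - n^n - n^{n+1}=(n+1)^n-(n+1)n^n$; it would be worth writing that intermediate step, but the claim itself is right.
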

\begin{proof}
    The proof is in Appendix \ref{proof:analysis of c^star}.
\end{proof}
By Proposition \ref{prop:uniqueness of c_n^*} and its proof, the following corollary can be derived.
\cor{
Suppose the value of each item is uniformly distributed over $[c, c+1]$ identically and independently.
The optimal mechanism is uniform pricing if and only if
\[
c\in [0,n] \text{ and } (n+1)^n-(n-c)^n-n^2(c+1)(n-c)^{n-1}\leq 0
\]
}
Thus, given number of items $n\ge2$ and the value $c\ge0$, we can easily verify whether the uniform pricing is optimal by above condition.

In particular, this threshold is strictly increasing with respect to $n$, meaning that uniform pricing becomes more likely to be optimal as the number of items increases. Moreover, as the number of items approaches infinity, uniform pricing becomes optimal for any $c\ge0$, since the threshold diverges to infinity as $n$ goes to infinity. We will formally state these results as follows:

\begin{prop}\label{prop:c^star strictly increasing}
    $c_n^*$ is strictly increasing in $n\ge2$.
\end{prop}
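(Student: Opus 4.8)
The plan is to reduce the statement to a single scalar inequality and then to settle that inequality using the equation that defines $c_n^*$.

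\emph{Step 1: a convenient reformulation.} Put $\Phi_m(c):=(m+1)^m-(m-c)^m-m^2(c+1)(m-c)^{m-1}$, so that by Proposition~\ref{prop:uniqueness of c_n^*} the threshold $c_m^*$ is the unique zero of $\Phi_m$ on $(0,m)$. A direct computation gives the factorisation $\Phi_m(c)=(m+1)\bigl[(m+1)^{m-1}-F_m(c)\bigr]$ with $F_m(c):=(m-c)^{m-1}\bigl(m+(m-1)c\bigr)$, and $F_m'(c)=-m(m-1)c(m-c)^{m-2}<0$ on $(0,m)$, so $F_m$ is strictly decreasing there. Together with $\Phi_m(0)<0$ (equivalently $(1+\tfrac1m)^{m-1}<m$) and $\Phi_m(m)=(m+1)^m>0$, this yields the sign pattern $\Phi_m<0$ on $[0,c_m^*)$ and $\Phi_m>0$ on $(c_m^*,m]$; in particular $c_m^*$ is characterised by $F_m(c_m^*)=(m+1)^{m-1}$.

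\emph{Step 2: reduction to one inequality.} It suffices to show $\Phi_{n+1}(c_n^*)<0$: since $c_n^*<n<n+1$ and (Step 1 with $m=n+1$) $\Phi_{n+1}$ is negative exactly on $[0,c_{n+1}^*)$, this forces $c_n^*<c_{n+1}^*$, and as $n\ge2$ is arbitrary the proposition follows. Using $F_n(c_n^*)=(n+1)^{n-1}$ and the substitution $u:=n-c_n^*\in(0,n)$, the constraint on $c_n^*$ reads $u^{\,n-1}\bigl(n^2-(n-1)u\bigr)=(n+1)^{n-1}$, while $\Phi_{n+1}(c_n^*)<0$ becomes $(u+1)^n\bigl(n^2+n+1-nu\bigr)>(n+2)^n$; dividing the latter by the former, it is equivalent to
\[
\Bigl(\tfrac{u+1}{u}\Bigr)^{n-1}(u+1)\cdot\frac{n^2+n+1-nu}{n^2-(n-1)u}\;>\;\Bigl(\tfrac{n+2}{n+1}\Bigr)^{n-1}(n+2).
\]

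\emph{Step 3: the scalar inequality (the main obstacle).} The first factor $\bigl(\tfrac{u+1}{u}\bigr)^{n-1}$ beats $\bigl(\tfrac{n+2}{n+1}\bigr)^{n-1}$ because $u<n+1$, but a term-by-term comparison is not enough: factoring the relevant quadratic, $(u+1)(n^2+n+1-nu)-(n+2)\bigl(n^2-(n-1)u\bigr)=-n\bigl(u-(n+1)\bigr)\bigl(u-(n-\tfrac1n)\bigr)$, which is negative on $0<u<n-\tfrac1n$, a range containing $u=n-c_n^*$ (the crude bound $c_n^*>\tfrac1n$ follows from $\Phi_n(\tfrac1n)<0$). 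So one must use the defining equation to localise $u$: it gives $\bigl(\tfrac{u}{n+1}\bigr)^{n-1}=\bigl(n^2-(n-1)u\bigr)^{-1}$, which confines $u$ to a narrow interval and lets one trade $(u+1)^{n-1}$ for $\bigl(\tfrac{u+1}{u}\bigr)^{n-1}(n+1)^{n-1}\bigl(n^2-(n-1)u\bigr)^{-1}$; the problem then collapses to a transcendental inequality in the single variable $u$, to be closed by monotonicity and the explicit small cases $c_2^*=1,\ c_3^*=\tfrac{11-\sqrt{33}}{4},\dots$. An alternative is to treat $n$ as a real parameter $\nu\ge2$ (Proposition~\ref{prop:uniqueness of c_n^*} extends verbatim), apply the implicit function theorem to $F_\nu\bigl(c^*(\nu)\bigr)=(\nu+1)^{\nu-1}$, and show $\tfrac{dc^*}{d\nu}>0$ directly; since $\partial_c\ln F_\nu<0$, this amounts to $\phi(\nu+1)<\phi\bigl(\nu-c^*(\nu)\bigr)+\tfrac{1+c^*(\nu)}{\nu+(\nu-1)c^*(\nu)}$ with $\phi(x):=\ln x+\tfrac{\nu-1}{x}$. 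Either way, I expect the decisive difficulty to be extracting bounds on $c_n^*$ from its implicit equation that are sharp enough to dominate the competing factors uniformly in $n$; Steps 1 and 2 are routine.
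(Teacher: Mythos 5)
Your Steps~1 and~2 are correct and set up a genuinely different route from the paper: you recast $c_m^*$ as the unique zero of the polynomial $\Phi_m$ (via the factorisation $\Phi_m=(m+1)\bigl[(m+1)^{m-1}-F_m\bigr]$ with $F_m$ strictly decreasing), establish the sign pattern of $\Phi_m$ on $(0,m)$, and correctly reduce the claim to the single inequality $\Phi_{n+1}(c_n^*)<0$. The algebra in Step~2 (including the substitution $u=n-c_n^*$ and the resulting pair of displays) checks out.

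However, Step~3 is not a proof --- it is an explicit admission that the key inequality has not been established. You show the naive term-by-term comparison fails (the factored quadratic has the wrong sign on $0<u<n-\tfrac1n$), sketch a way to ``trade'' factors using the constraint, and then say the problem ``collapses to a transcendental inequality in the single variable $u$, to be closed by monotonicity and the explicit small cases'' --- but you never close it, and you close by saying you ``expect the decisive difficulty to be extracting bounds\ldots sharp enough\ldots uniformly in $n$.'' That is a genuine gap: the entire burden of the proposition sits precisely in that unproved inequality. You also float an alternative (treat $n$ as a real parameter and apply the implicit function theorem to $F_\nu(c^*(\nu))=(\nu+1)^{\nu-1}$) without carrying it out either.

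For context, the paper's own proof is the implicit-function-theorem route you mention as an alternative, but applied not to the polynomial relation $F_n(c_n^*)=(n+1)^{n-1}$; instead it differentiates the much cleaner characterisation $(n+1)\bigl(c_n^*+1-p(c_n^*)\bigr)=c_n^*+1$ directly in $n$, and substitutes the closed form $p'(c)=1-\dfrac{p-c}{2(p-c)+(n-1)p}$ already derived in the proof of Proposition~\ref{prop:uniqueness of c_n^*}. This yields
\[
\frac{dc_n^*}{dn}=\frac{c+1-p(c)}{(n+1)p'(c)-n}
\quad\text{with}\quad
(n+1)p'(c)-n=\frac{(n-1)c}{2(p(c)-c)+(n-1)p(c)},
\]
and positivity is immediate from $c<p(c)<c+1$ --- no transcendental inequality ever appears. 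The lesson is that working with the equation in terms of $p(c)$ and its derivative from Proposition~\ref{prop:uniqueness of c_n^*}, rather than with the eliminated polynomial $\Phi_m$, sidesteps exactly the uniform-in-$n$ obstruction you identified.
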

\begin{proof}
    The proof is in Appendix \ref{proof:analysis of c^star}.
\end{proof}

\begin{prop}\label{prop:c_n infty}
$c_n^*\to\infty$  as $n\to\infty $. In particular, we have $c_n^*>\frac{\ln n}{3}$ for all $n\ge2$.
\end{prop}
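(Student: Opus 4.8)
The plan is to reduce the bound to an inequality about the polynomial
\[
\Phi_n(c) \;:=\; (n+1)^n - (n-c)^n - n^2(c+1)(n-c)^{n-1}
\]
from Proposition \ref{prop:uniqueness of c_n^*}. Combining that proposition (uniqueness of the root on $(0,n)$) with the corollary following it (grand bundling optimal $\iff c\in[0,n]$ and $\Phi_n(c)\le 0$) and Theorem \ref{thm:U[c, c+1]} (grand bundling optimal $\iff c\in[0,c_n^*]$), one gets $\Phi_n<0$ on $(0,c_n^*)$, $\Phi_n(c_n^*)=0$, and $\Phi_n>0$ on $(c_n^*,n)$. Hence for any $a\in(0,n)$ one has $c_n^*>a$ if and only if $\Phi_n(a)<0$. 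Since $\tfrac{\ln n}{3}\in(0,n)$ for every $n\ge2$, it therefore suffices to show $\Phi_n\!\bigl(\tfrac{\ln n}{3}\bigr)<0$.

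Write $c=\tfrac{\ln n}{3}$. First discard a positive term: since $(n-c)^n+n^2(c+1)(n-c)^{n-1}=(n-c)^{n-1}\bigl[(n-c)+n^2(c+1)\bigr]>n^2(c+1)(n-c)^{n-1}$, it is enough to prove $(n+1)^n<n^2(c+1)(n-c)^{n-1}$. Taking logarithms, expanding $\ln(n+1)=\ln n+\ln(1+\tfrac1n)$ and $\ln(n-c)=\ln n+\ln(1-\tfrac cn)$, and cancelling, this is equivalent to
\[
n\ln\!\Bigl(1+\tfrac1n\Bigr)-(n-1)\ln\!\Bigl(1-\tfrac cn\Bigr)\;<\;\ln n+\ln(c+1).
\]
Now bound the left-hand side from above using $\ln(1+x)<x$ (first term $<1$) and $-\ln(1-x)\le \tfrac{x}{1-x}$ (second term $\le \tfrac{(n-1)c}{n-c}$), and then $\tfrac{(n-1)c}{n-c}<\tfrac{nc}{n-c}\le 2c$ because $c=\tfrac{\ln n}{3}\le \tfrac n2$. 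So it suffices to prove $1+2c<\ln n+\ln(c+1)$, i.e. with $t:=c=\tfrac{\ln n}{3}$ that $t+\ln(1+t)>1$. The map $t\mapsto t+\ln(1+t)$ is strictly increasing and exceeds $1$ at the (numerically $\approx0.56$) root $t_0$ of $t_0+\ln(1+t_0)=1$, hence $\Phi_n\!\bigl(\tfrac{\ln n}{3}\bigr)<0$ holds for all $n$ with $\ln n\ge 3t_0$, i.e. for all $n$ above a small explicit threshold (around $n=6$).

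For the finitely many remaining small cases I would not push the estimates further but instead invoke Proposition \ref{prop:c^star strictly increasing} together with the stated value $c_2^*=1$: monotonicity gives $c_n^*\ge c_2^*=1$, while $\tfrac{\ln n}{3}<1$ for all such small $n$ (indeed for all $n\le 20$), so $c_n^*>\tfrac{\ln n}{3}$ there as well. Combining the two ranges yields $c_n^*>\tfrac{\ln n}{3}$ for every $n\ge2$, and since $\tfrac{\ln n}{3}\to\infty$ this gives $c_n^*\to\infty$. The main obstacle is calibration: the crude chain $\ln(1+x)<x$, $-\ln(1-x)\le\tfrac{x}{1-x}$, $\tfrac{nc}{n-c}\le2c$ loses enough that it only closes for moderately large $n$; one could instead retain the discarded $(n-c)$ term or use the sharper $-\ln(1-x)\le x+x^2$, but the cleanest route is to cover the small cases by the already-established monotonicity of $c_n^*$, as above. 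A trivial check is that $\tfrac{\ln n}{3}$ indeed lies in $(0,n)$, so that the $\Phi_n$-characterization applies.
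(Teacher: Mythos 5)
Your proof is correct, and it takes a genuinely different route from the paper's. The paper bounds the bundle price via $(n+1)(p(c)-c)^n<1$, deduces $P\bigl(\tfrac{\ln n}{3}\bigr)<0$ once $(n-\tfrac{\ln n}{3})^n>(n+1)^{n-1}$, and proves that last inequality by a calculus argument on the auxiliary $f(n)=n\ln(n-\tfrac{\ln n}{3})-(n-1)\ln(n+1)$, showing $f'(n)>0$ for $n\ge4$ and checking $f(4)>0$; the cases $n=2,3$ are left to direct inspection of the listed values $c_2^*,c_3^*$. You instead argue entirely inside the polynomial $\Phi_n$: after discarding the positive term $(n-c)^n$ your target becomes $(n+1)^n<n^2(c+1)(n-c)^{n-1}$, and the elementary bounds $\ln(1+x)<x$ and $-\ln(1-x)\le x/(1-x)$ collapse it to the scalar inequality $c+\ln(1+c)>1$ with $c=\tfrac{\ln n}{3}$, which is transparent and kicks in for all $n$ above a small threshold; you then cover the remaining small $n$ by invoking Proposition~\ref{prop:c^star strictly increasing} together with $c_2^*=1$, since $\tfrac{\ln n}{3}<1$ for $n\le20$. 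The tradeoff is real: the paper's proof is self-contained (it does not lean on the monotonicity of $c_n^*$) at the cost of a slightly opaque derivative computation, while yours buys a cleaner and more elementary final estimate at the cost of importing monotonicity and the explicit value $c_2^*=1$. One small caution: your inequality $(n+1)^n<n^2(c+1)(n-c)^{n-1}$ actually fails at $n=2$ (both sides are about $9$ vs.\ $8.7$), so the fallback for small $n$ is not merely cosmetic — the discarded term $(n-c)^n$ is doing real work there; but since your monotonicity argument covers all $n\le20$ this creates no gap. To make the threshold fully rigorous you should replace the numerical ``$t_0\approx0.56$'' by an explicit checkable bound, e.g.\ $0.6+\ln1.6>0.6+0.4=1$ (using $e^{0.4}<1.6$), which gives the estimate for all $n$ with $\ln n\ge1.8$, i.e.\ $n\ge7$, comfortably inside the range covered by monotonicity.
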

\begin{proof}
    The proof is in Appendix \ref{proof:analysis of c^star}.
\end{proof}

The interpretation of the above two results is as follows: Fixing any $c\ge0$, consider the case when the monopolist guesses that the buyer's value is distributed according to $U[c, c+1]$ independently. When there are not enough goods, the optimal mechanism may not be uniform pricing. But if we increase the number of items, the probability that the highest value among the goods is close to $c+1$ increases; more formally, the expected value of the highest order statistic gets close to $c+1$. Therefore, rather than presenting a mechanism with unnecessarily additional menus, offering a mechanism with only the grand bundle allows for greater surplus extraction (recall that we only need to consider mechanisms that allocate at most one good to the buyer since the buyer has a unit-demand preference.)

\subsection{Dual Problem for Strong Duality}\label{subsec:Dual Problem for Strong Duality}
Below, we sketch the main ideas of the proof of Theorem \ref{thm:U[c, c+1]}.
Instead of using the weak duality discussed throughout this paper, we switch to strong duality in this section.
\textcite{daskalakis2017strong} introduced the strong duality in the additive case, and \textcite{kash2016optimal} extended it to a unit-demand case. 
In this section, following \textcite{daskalakis2017strong}, we denote the buyer's type by $x \in T$, and let $X(=T)$ represent the set of all such types.

After defining some components, we state the strong duality. We introduce some useful  notations in the same way as in \textcite{daskalakis2017strong}:
\begin{itemize}
  \item Denote $\mathcal{U}(X)$ as the set of all continuous, non-decreasing, and convex functions $u:X\to\mathbb{R}$
  \item $\Gamma(X)$ and $\Gamma_{+}(X)$ denote the sets of signed and unsigned measures on $X$ respectively.
  \item Given an unsigned measure $\gamma \in \Gamma_{+}(X \times X)$, $\gamma_{1}$ and $\gamma_{2}$ denote the two marginals of $\gamma$, i.e. $\gamma_{1}(E)=\gamma(E \times X)$ and $\gamma_{2}(E)=\gamma(X \times E)$ for all measurable sets $E \subset X$.
  \item For a (signed) measure $\mu$ and a measurable $E \subset X$, we define the restriction of $\mu$ to $E$, denoted $\left.\mu\right|_{E}$, by the property $\left.\mu\right|_{E}(S)=\mu(E \cap S)$ for all measurable $S$.
  \item For a signed measure $\mu$, we will denote by $\mu_{+}, \mu_{-}$the positive and negative parts of $\mu$, respectively. That is, $\mu=\mu_{+}-\mu_{-}$, where $\mu_{+}$and $\mu_{-}$provide mass to disjoint subsets of $X$.
\end{itemize}

\begin{dfn}[\textcite{daskalakis2017strong}, Definition 3]
    The transformed measure $\mu$ of $f$ is the signed measure supported within $X$ defined as
    $$\mu(E)\equiv \int_{\partial X}\mathbb{I}_E(x)f(x)(x\cdot\hat{n})dx-\int_X\mathbb{I}_E(x)(\nabla f(x)\cdot x+(n+1)f(x))dx+\mathbb{I}_E((L_1, \dots, L_n))$$
    for all measurable sets $E$, where $\mathbb{I}_E(\cdot)$ denotes the indicator function on $E$ and $\hat{n}$ denotes the outer unit normal field to the boundary $\partial X$.
\end{dfn}

We will calculate transformed measure of our setting: $n$ items uniformly and independently distributed over $[c, c+1]$ as in \textcite{daskalakis2017strong}, Subsection 2.2:\par

Suppose each value of $n$ items uniformly and independently distributed over $[c, c+1]]$ with $c\ge0$. So that $X=[c, c+1]^n$.
    
    The joint distribution of the items is given by the constant density function $f=1$ throughout $X$. The transformed measure $\mu$ of $f$ is given by the relation
    $$
    \mu(E)\equiv\mathbb{I}_{E}\left(c, \ldots, c\right)+ \int_{\partial X} \mathbb{I}_{E}(x)(x \cdot \hat{n}) d x-(n+1) \int_{X} \mathbb{I}_{E}(x) d x
    $$
    for all measurable set $E\subset X$. Therefore, $\mu$ is the sum of:
    \begin{itemize}
      \item A point mass of +1 at the point $\left(c, \ldots, c\right)$.
      \item A mass of $-(n+1)$ distributed uniformly throughout the region $X$.
      \item For each $i=1, \ldots, n$,  a mass of $+(c+1)$ distributed uniformly on each surface $\left\{x \in \partial X: x_{i}=c+1\right\}$.
      \item For each $i=1, \ldots, n$,  a mass of $-c$ distributed uniformly on each surface $\left\{x \in \partial X: x_{i}=c\right\}$.
    \end{itemize}
Then, as in \textcite{kash2016optimal}, we can rewrite the above optimization problem as
\begin{align}
    &\sup_{u\in\mathcal{U}(X)}\int_{X} u d\mu\label{eq:daskalakis}\\
    &\quad s.t.\quad u(x)-u(x')\le\|x-x'\|_\infty, \; \text{ for all } x, x'\in X.
\end{align}

\begin{dfn}[\textcite{daskalakis2017strong}, Definition 5]
    Let $\alpha$ and  $\beta$ be a signed Radon measure on $X$. $\alpha$ convexly dominates  $\beta$, denoted  $\alpha\succeq_{\text{cvx}}\beta$, if for all $u\in\mathcal{U}(X)$, $$\int_X ud\alpha \geq \int_X u d\beta$$
    Moreover, given a measurable set $E\subset X$, we denote $\alpha|_E\succeq_{\text{cvx}}\beta|_E$, if for all $u\in\mathcal{U}(X)$, 
    $$\int_E ud\alpha \geq \int_E u d\beta$$
\end{dfn}
The following lemma corresponds to the strong duality framework first proposed by \textcite{daskalakis2017strong}. The following is its unit-demand version, as introduced in \textcite{kash2016optimal}. While this is not technically strong duality in its original form, it serves as a helpful variant by focusing on complete slackness and directly formulating the optimality conditions.

\begin{rmk}(\textcite{kash2016optimal}, Corollary 4.1.) \label{lemma:Kash and Frongillo}
$u$ is optimal in \eqref{eq:daskalakis} if and only if there exists  a positive Radon measures supported within $X\times X$, $\gamma$, such that
\begin{align}
  \gamma_1-\gamma_2  &\succeq_\text{cvx} \mu  \label{eq:cvx-cdn}\\
  \int_{[c,c+1]^n}ud(\gamma_1-\gamma_2)&=\int_{[c,c+1]^n}ud\mu \label{eq:meas-cdn}\\
  u(x)-u(y)&=\|x-y\|_\infty~ \gamma(x,y)\text{-almost surely}\label{eq:norm-cdn}
\end{align}
where $\gamma_1$ and $\gamma_2$ are Radon measure on $X$ defined as $\gamma_1(E)=\gamma(E\times X)$ and $\gamma_2(E)=\gamma(X\times E)$.
\end{rmk}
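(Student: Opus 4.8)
This statement is a complementary‑slackness (strong‑duality) characterization of optimality for the infinite‑dimensional linear program \eqref{eq:daskalakis}, and I would prove the two implications separately. For the ``if'' direction I would run a one‑line weak‑duality argument: if $\gamma$ satisfies \eqref{eq:cvx-cdn}--\eqref{eq:norm-cdn}, then for any feasible $u'$ (i.e.\ $u'\in\mathcal{U}(X)$ with $u'(x)-u'(y)\le\|x-y\|_\infty$ for all $x,y$), the identities $\int u'\,d\gamma_1=\int_{X\times X}u'(x)\,d\gamma(x,y)$ and $\int u'\,d\gamma_2=\int_{X\times X}u'(y)\,d\gamma(x,y)$ give
\[
\int u'\,d\mu \;\le\; \int u'\,d(\gamma_1-\gamma_2) \;=\; \int_{X\times X}\bigl(u'(x)-u'(y)\bigr)\,d\gamma(x,y) \;\le\; \int_{X\times X}\|x-y\|_\infty\,d\gamma(x,y),
\]
the first inequality being \eqref{eq:cvx-cdn} applied to $u'\in\mathcal{U}(X)$ and the last the Lipschitz constraint. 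Running the same chain for the hypothesized $u$, conditions \eqref{eq:meas-cdn} and \eqref{eq:norm-cdn} collapse both inequalities to equalities, so $\int u\,d\mu=\int_{X\times X}\|x-y\|_\infty\,d\gamma\ge\int u'\,d\mu$ for all feasible $u'$; since $u$ is itself feasible, it is optimal.

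For the ``only if'' direction the plan is to exhibit the dual program, prove strong duality, and then extract the witness. I would write the dual as
\[
\inf\left\{ \int_{X\times X}\|x-y\|_\infty\,d\gamma \;:\; \gamma\in\Gamma_+(X\times X),\ \gamma_1-\gamma_2\succeq_{\text{cvx}}\mu \right\},
\]
observing that the displayed chain is already weak duality (primal $\sup$ $\le$ dual $\inf$). It then remains to (a) note that the primal supremum is attained --- after normalizing $u$ to vanish at the corner $(c,\dots,c)$, which is harmless since $\mu(X)=0$, the normalized feasible set is a compact subset of $C(X)$ (equicontinuous by the Lipschitz bound, bounded by compactness of $X$, closed under uniform limits, by Arzel\`{a}--Ascoli) on which $u\mapsto\int u\,d\mu$ is continuous; (b) prove there is no duality gap and the dual infimum is attained; (c) given a primal‑optimal $u$ and a dual‑optimal $\gamma$ of the same value, read \eqref{eq:meas-cdn} and \eqref{eq:norm-cdn} off the chain: equal endpoints force equality at each link, so $\int u\,d(\gamma_1-\gamma_2)=\int u\,d\mu$ is \eqref{eq:meas-cdn}, and $\int_{X\times X}(u(x)-u(y))\,d\gamma=\int_{X\times X}\|x-y\|_\infty\,d\gamma$ together with the pointwise bound $u(x)-u(y)\le\|x-y\|_\infty$ forces $u(x)-u(y)=\|x-y\|_\infty$ $\gamma$‑almost surely, which is \eqref{eq:norm-cdn}. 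Steps (a) and (c) are routine; step (b) is the substance.

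I expect step (b) --- no duality gap plus attainment of the dual infimum --- to be the main obstacle. The natural route is a minimax argument: view the primal value as $\sup_{u\in\mathcal{U}(X)}\inf_{\gamma\ge0}L(u,\gamma)$ with $L(u,\gamma)=\int u\,d\mu+\int_{X\times X}\bigl(\|x-y\|_\infty-(u(x)-u(y))\bigr)\,d\gamma$, swap the two operations by a minimax theorem (Sion's, say) using the uniform‑topology compactness of the normalized primal feasible set and weak‑$*$ compactness of bounded subsets of $\Gamma_+(X\times X)$, and then observe that the inner supremum over the cone $\mathcal{U}(X)$ is finite exactly when $\gamma_1-\gamma_2\succeq_{\text{cvx}}\mu$, in which case it equals $0$, collapsing the dual to the transport problem above. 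The delicate technical points --- which are precisely what \cite{daskalakis2017strong} work out in the additive case and \cite{kash2016optimal} adapt to the unit‑demand case --- are controlling the dual feasible set so that the infimum is genuinely attained (mass on the diagonal is irrelevant and can be discarded, and near‑optimal $\gamma$ can be taken with bounded transported mass) and checking that restricting the test class in $\succeq_{\text{cvx}}$ to $\mathcal{U}(X)$ (continuous, non‑decreasing, convex) rather than to all functions is exactly what keeps the dual feasible. Modulo this functional analysis the characterization follows; in the paper we invoke \cite{kash2016optimal}, Corollary 4.1, directly.
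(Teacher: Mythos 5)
The paper does not prove this lemma; it cites it directly from \cite{kash2016optimal}, Corollary 4.1 (which in turn adapts the strong-duality machinery of \cite{daskalakis2017strong} from the additive to the unit-demand setting). So there is no in-paper proof to compare against. Your sketch is nonetheless a faithful reconstruction of what sits behind the citation: the ``if'' direction as you have written it is complete and correct --- the chain $\int u'\,d\mu \le \int u'\,d(\gamma_1-\gamma_2) = \int(u'(x)-u'(y))\,d\gamma \le \int\|x-y\|_\infty\,d\gamma$ is exactly the weak-duality/complementary-slackness argument, and \eqref{eq:meas-cdn} plus \eqref{eq:norm-cdn} pin the reference $u$ to the common value. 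For the ``only if'' direction you have correctly isolated step (b) (no gap plus dual attainment) as the substance and correctly observed that it is what Daskalakis--Deckelbaum--Tzamos and Kash--Frongillo actually prove; you have not supplied those details, and doing so would essentially reproduce their papers, which is why the present paper takes the lemma as a black box. One minor point worth making explicit if you ever wrote this out: the chain as stated requires $u$ itself to be primal-feasible, which is built into the hypothesis that $u$ is ``optimal in \eqref{eq:daskalakis}'' but is worth flagging since it is what lets you compare $\int u\,d\mu$ against $\int u'\,d\mu$. As a proof the proposal is incomplete at step (b); as an explanation of why the cited result is true and how it is used, it is accurate.
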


Since we conjecture that uniform pricing is optimal, we set the function \( u: X \to \mathbb{R} \), introduced above, to be:  
\[
u(x) =u^*(x)= \max\qty{\max(x) - p, 0}.
\]
In the following subsection, we will identify a positive Radon measure \( \gamma \) on \( X \times X \) that satisfies the optimality condition in the sufficiency proof. For the necessity proof, we will show that no such measure \( \gamma \) can exist if \( u \) is defined as above. This establishes a clear distinction between the cases where uniform pricing is optimal and where it is not.

\subsection{Pushed measure}\label{subsec:Pushed measure}

The following subsection identifies a measure \( \gamma \) that ensures \( u \), associated with uniform pricing, remains optimal. Here, the concept of the ``pushed measure" plays a crucial role in the proof. As a preparatory step, we first define the upper set and introduce the terminology for each region in \( X \) that we divide.

\begin{dfn}
     Given $k\in\mathbb{N}$ and $E\subset \mathbb{R}^k$, we call $U\subset E$ is a upper set with respect to $E$ if
     $$\text{ For all }x\in U \text{ and }x^{\prime}\in E, \; x\leq x^{\prime} \text{ implies that } x^{\prime}\in U$$
     When $U\subset E$ is an upper set with respect to $E$, we would just say $U\subset E$ is an upper set with abuse of terminology.
\end{dfn}
Denote each region of $X$ as follows. 
\begin{gather*}
    W \equiv \qty{x\in X: \max(x)\in (p,c+1]}\\
    Z \equiv \qty{x\in X: \max(x)\in [c,p]}\\
    B\equiv \qty{x\in X: \max(x)=c+1}\\
    B_Z \equiv \qty{x\in X: \max(x)=p}
\end{gather*}
    
Now, we are ready to introduce the pushed set and the pushed measure.

\begin{dfn} \label{dfn:push U}
For upper set $U\subset B$, the pushed set $push(U)$ is defined on $W\cup B_Z$ as follows:
    \begin{gather*}
        push(U)\equiv \qty{y\in W\cup B_Z|  \text{ there exists } x\in U\cap B \text{ such that }  u^*(x)-u^*(y)=\|x-y\|_{\infty}}
    \end{gather*}
\end{dfn}

\begin{dfn}\label{dfn:gamma_1}
    $\gamma_1$ is a measure on $B$ defined as
\begin{gather*}
    \gamma_1(U)\equiv\mu_-(push(U) )
\end{gather*}
for all measurable upper set $U\subset B$.\footnote{Notice that $\gamma_1$ is well-defined as a measure on $B$ by above since Borel $\sigma$-algebra on $B$ is generated by the set of all measurable upper sets on $B$.}
\end{dfn}


In fact, the pushed measure plays a crucial role in establishing uniform pricing's optimality. This measure is particularly significant in understanding why uniform pricing remains the best mechanism under certain conditions. However, since its relevance is more directly tied to the necessity argument, we will postpone a detailed explanation until the necessity section, where we will provide a thorough discussion on why this pushed measure is essential for proving optimality. 

Here, instead of explaining more about the pushed measure, we will introduce an equivalent characterization of the pushed set defined in Definition \ref{dfn:push U} as follows:
\begin{rmk}\label{lemma:y+t1}
For all upper set $U\subset B$ with respect to $B$ and all $y\in W\cup B_Z$, the following holds\footnote{we will denote $\bm{1}$ as the $n$-dimensional vector in $\mathbb{R}^n$ whose components are all $1$, i.e. $\bm{1}\equiv(1, \dots, 1)$.}
    \begin{gather*}
        ~y+t\bm{1} \in U\cap B \text{ for some } t \in \mathbb{R}\iff y\in push(U)
    \end{gather*}
\end{rmk}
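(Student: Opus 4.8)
The plan is to unwind the definition of $push(U)$ and show that, for $y \in W \cup B_Z$, the existence of a witnessing point $x \in U \cap B$ with $u^*(x) - u^*(y) = \|x-y\|_\infty$ is equivalent to the (seemingly stronger) requirement that the witness can be taken of the special form $x = y + t\bm{1}$. First I would record the elementary facts we will use: (i) for $z \in X$, $u^*(z) = \max(z) - p$ whenever $\max(z) \ge p$ (which holds on $W \cup B_Z$ and on $B$), and (ii) $\max(y + t\bm{1}) = \max(y) + t$ and $\|{(y+t\bm{1})} - y\|_\infty = |t|$. The ``$\Leftarrow$'' direction is then immediate: if $y + t\bm{1} \in U \cap B$, set $x = y + t\bm{1}$; since $\max(x) = c+1 \ge p$ and $\max(y) \ge p$, we get $u^*(x) - u^*(y) = (\max(x) - p) - (\max(y) - p) = \max(x) - \max(y) = t$, and $\|x - y\|_\infty = |t|$. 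To conclude $u^*(x) - u^*(y) = \|x-y\|_\infty$ we need $t \ge 0$, which follows because $\max(x) = c+1 \ge \max(y)$ for every $y \in W \cup B_Z \subset X$; hence $y \in push(U)$.

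The substantive direction is ``$\Rightarrow$''. Suppose $y \in push(U)$, witnessed by some $x \in U \cap B$ with $u^*(x) - u^*(y) = \|x-y\|_\infty$. Using fact (i) this reads $\max(x) - \max(y) = \|x - y\|_\infty$, i.e. $(c+1) - \max(y) = \|x-y\|_\infty$; call this common value $t \ge 0$. The key claim is that $x' := y + t\bm{1}$ still lies in $U \cap B$. First, $x' \in X$: each coordinate satisfies $y_i + t \le \max(y) + t = c+1$ (so it stays below the top) and $y_i + t \ge y_i \ge c$; and $\max(x') = \max(y) + t = c+1$, so in fact $x' \in B$. Next I would show $x \le x'$ coordinatewise, which, since $U$ is an upper set with respect to $B$ and $x \in U \cap B$, $x' \in B$, forces $x' \in U$. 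The inequality $x_i \le y_i + t$ for every $i$ is exactly the statement $x_i - y_i \le t = \|x-y\|_\infty$, which holds because $\|x-y\|_\infty = \max_i |x_i - y_i| \ge x_i - y_i$. Hence $x' = y + t\bm{1} \in U \cap B$, which is what we wanted.

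The main obstacle — and the only place requiring care — is verifying that the shifted point $x' = y + t\bm1$ does not overshoot the box $X = [c,c+1]^n$, i.e. that $\max(y) + t = c+1$ exactly rather than something exceeding $c+1$. This is where the precise value $t = (c+1) - \max(y)$ (coming from $x \in B$, so $\max(x) = c+1$) is essential: it is calibrated so that the largest coordinate of $x'$ lands exactly on the top face. One should also double-check the edge case $y \in B_Z$ with $\max(y) = p$, where $t = c + 1 - p > 0$, and the case $y \in W \cap B$ (if $p$ can equal... no, $W$ excludes the bottom but $B \subset W$ when $p < c+1$), so that $t = 0$ and $x' = y$; both are covered by the argument above. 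Finally I would remark that this lemma makes $push(U)$ transparently a ``diagonal tube'' over $U$: $push(U) = \{\, y + t\bm1 : y \in U,\ t \le 0,\ y + t\bm1 \in W \cup B_Z \,\}$ reparametrized, which is what subsequent computations of $\mu_-(push(U))$ will exploit.
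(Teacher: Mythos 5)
Your proof is correct and takes essentially the same approach as the paper: in the nontrivial direction you set $t = (c+1)-\max(y)$, verify $x' = y+t\bm{1}\in B$, show $x\le x'$ coordinatewise from $x_j - y_j \le \|x-y\|_\infty = t$, and invoke the upper-set property — exactly the paper's argument, just stated a bit more compactly (you don't need the intermediate observation that the coordinate $i$ with $x_i=c+1$ satisfies $y_i=\max(y)$, since you work with $\max(y)$ directly). The only cosmetic issue is that your labels ``$\Rightarrow$'' and ``$\Leftarrow$'' are swapped relative to the order in which the two sides appear in the displayed equivalence; the content of each direction is right.
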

\begin{proof}
    The proof is in Appendix \ref{proof:uniform pricing is optimal if gamma_1>mu_+}.
\end{proof}

\FloatBarrier
\begin{figure}[htp]
\centering
\begin{tikzpicture}[scale=1.2]

  \draw[->, thick] (0,0) -- (4.5,0) node[right] {\small $x_1$};
  \draw[->, thick] (0,0) -- (0,4.5) node[above] {\small $x_2$};

  \fill[green!50!white, opacity=0.5] (0,2) -- (2,2) -- (2,0) -- (4,0) -- (4,4) -- (0,4) -- cycle;
  
  \fill[gray!50!white, opacity=0.7] (4,4) -- (3,4) -- (1,2) -- (2,2) -- (2,0.2) -- (4,2.2) -- cycle;
  
  \draw[line width=2.5pt, black] (3,4) -- (4,4) -- (4,2.2);

  \draw[line width=2pt, orange!80!black] (0,2) -- (2,2) -- (2,0);

  \draw[line width=2pt, blue!80!black, opacity=0.3 ] (0,4) -- (4,4) -- (4,0);

  \node[left] at (0,0) {\small $c$};
  \node[left] at (0,2) {\small $p$};
  \node[left] at (0,4) {\small $c+1$};
  \node[below] at (0,0) {\small $c$};
  \node[below] at (2,0) {\small $p$};
  \node[below] at (4,0) {\small $c+1$};

  \node[above] at (0.5,4.05) {\bfseries\itshape \textcolor{blue!80!black}{B}};   
  \node[below] at (0.5,1.95) {\textit{\textbf{\textcolor{orange!80!black}{B\textsubscript{Z}}}}};  
  \node at (0.5,3) {\bfseries\itshape \textcolor{green!25!black}{W}};  

  \node at (2.8,2.5) {\bfseries\itshape push(U)};

  \node[right] at (4,3.8) {\bfseries\itshape U};

\end{tikzpicture}
\caption{Shape of $push(U)$ when $n=2$.}
\label{fig:pushU}
\end{figure}
The above lemma implies that \( \text{push}(U) \) is illustrated in Figure \ref{fig:pushU}, and 
\begin{gather*}
    \gamma_1(U) = \mu_{-} \qty( \qty{ y \in W \cup B_Z \ \Big| \ \text{there exists } t \in \mathbb{R} \text{ such that } x + t \mathbf{1} \in U } )
\end{gather*}
for any measurable upper set \( U \subseteq B \). We will fully leverage this equivalent characterization of $\gamma_1$ to prove sufficiency of the following Proposition \ref{prop:pushed measure} in Appendix \ref{proof:sufficiency of thm pushed measure}.

\begin{prop}\label{prop:pushed measure}
Suppose that the value of each item is uniformly distributed over $[c, c+1]$ identically and independently.
Then
$$\text{The optimal mechanism is uniform pricing }  \iff \gamma_1\succeq_{cvx}\mu_{+}.$$

\end{prop}

Indeed, the threshold \( c^* \) corresponds directly to the condition stated above, which asserts that the pushed measure convexly dominates the original measure, \( \mu_{+} \). 

\begin{prop}\label{prop:cvx iff c^*}
If the value of each item is uniformly distributed over \( [c, c+1] \) identically and independently,
\[
\gamma_1 \succeq_\text{cvx} \mu_+ \iff c \in [0, c^*].
\]
\end{prop}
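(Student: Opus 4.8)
### Proof proposal for Proposition \ref{prop:cvx iff c^*}

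The plan is to translate the convex-dominance relation $\gamma_1 \succeq_{\text{cvx}} \mu_+$ into a single scalar inequality in $c$ (for the given $n$), and then show that inequality holds exactly on the interval $[0,c^*]$. First I would unpack the measures. By the computation of the transformed measure $\mu$ in Section \ref{subsec:Dual Problem for Strong Duality}, the positive part $\mu_+$ is the sum of the point mass $+1$ at $(c,\dots,c)$ and the mass $+(c+1)$ spread uniformly over each face $\{x_i = c+1\}$; the negative part $\mu_-$ is the mass $(n+1)$ spread uniformly over $X$ together with the mass $c$ on each lower face $\{x_i = c\}$. Since $u^* = \max\{\max(x)-p,0\}$ depends only on $\max(x)$, testing $\int u^*\, d\alpha$ against these measures reduces to one-dimensional integrals in the variable $t = \max(x)$, whose pushforward under $\mu_\pm$ I can write down explicitly using order statistics of the uniform distribution (the density of $\max(x)$ under the uniform mass on $[c,c+1]^n$ is $n(t-c)^{n-1}$).

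Next I would characterize the convex-dominance condition. Because $\gamma_1$ is supported on $B = \{\max(x) = c+1\}$ and $\mu_+$ lives on $B$ together with the single point $(c,\dots,c)$, and because every admissible test function $u \in \mathcal{U}(X)$ is continuous, non-decreasing and convex, the inequality $\int u\, d\gamma_1 \geq \int u\, d\mu_+$ need only be checked on the extreme rays of $\mathcal{U}(X)$ — effectively on functions of the form $u(x) = \max\{\max(x) - q, 0\}$ for $q \in [c, c+1]$ (and the constant function, which is handled by comparing total masses). This is where Lemma \ref{lemma:y+t1} and the explicit description $\gamma_1(U) = \mu_-(\{y \in W \cup B_Z : y + t\mathbf{1} \in U \text{ for some } t\})$ become essential: they let me compute $\int u\, d\gamma_1$ as an integral over $W \cup B_Z$ of $u$ evaluated along the diagonal shift, i.e. $\int u^*(\text{shift of }y)\, d\mu_-(y)$. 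Carrying this out, $\int u\, d\gamma_1 - \int u\, d\mu_+$ becomes an explicit function of $q$ and $c$; I expect the binding test function to be the one matching the bundle price, $q = p(c)$, because that is precisely the kink location of the conjectured optimal $u^*$, and the complementary-slackness equation \eqref{eq:meas-cdn} forces equality there.

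Then I would reduce to the threshold equation. Evaluating the difference $\int u\, d\gamma_1 - \int u\, d\mu_+$ at $q = p(c)$ and setting it $\geq 0$ should, after simplification using the first-order condition defining $p(c)$ (namely that $p$ maximizes $q(1 - G(q)^n)$), collapse to the inequality $(n+1)(c+1 - p(c)) \leq c+1$, equivalently — after substituting $p(c)$ and clearing denominators — to $(n+1)^n - (n-c)^n - n^2(c+1)(n-c)^{n-1} \leq 0$, the quantity from the Corollary following Proposition \ref{prop:uniqueness of c_n^*}. By Proposition \ref{prop:uniqueness of c_n^*} this expression is $\leq 0$ precisely on $[0, c^*]$ (it vanishes at $c^*$ and, one checks, has the right sign on either side), which gives the claim. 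I also need to verify that no \emph{other} test function gives a stronger constraint than $q = p(c)$: this amounts to showing the difference function of $q$ is single-signed or is minimized at $q = p(c)$ on the relevant range, which I would establish by a monotonicity/convexity argument in $q$.

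The main obstacle I anticipate is the middle step: correctly and cleanly computing $\int u\, d\gamma_1$ via the pushed-measure characterization, since $\gamma_1$ is only defined implicitly through $\mu_-$ and the diagonal shift, and one must be careful about the decomposition of $W \cup B_Z$ into the part pushed from $B$ versus the part from $B_Z$, and about the boundary face contributions of $\mu_-$. Getting the order-statistics bookkeeping right so that the algebra genuinely collapses to the stated $n$-dimensional polynomial equation — rather than to a messier equivalent — is the delicate part; everything downstream is then a direct appeal to Proposition \ref{prop:uniqueness of c_n^*}.
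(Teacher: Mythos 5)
Your proposed route has a genuine and fatal gap at the very first reduction. You plan to reduce $\int u\,d\gamma_1 \ge \int u\,d\mu_+$ to a one-dimensional comparison in $t=\max(x)$ and then test only the family $u(x)=\max\{\max(x)-q,0\}$. But $\gamma_1$ and $\mu_+$ (apart from $\mu_+$'s single point mass at $(c,\dots,c)$) are both supported on the face $B=\{x:\max(x)=c+1\}$, where $\max(x)$ is identically $c+1$. On $B$ every function of the form $\max\{\max(x)-q,0\}$ is constant, so your whole test family collapses to the trivial total-mass comparison $\gamma_1(B)$ vs.\ $\mu_+(B)$ (plus the point mass), which is automatically satisfied and carries no information about whether one measure convexly dominates the other. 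No choice of $q$, and no ``single-signed in $q$'' argument, can recover what was lost: the relation $\gamma_1 \succeq_{\text{cvx}} \mu_+$ genuinely concerns how mass is distributed \emph{within} $B$, an $(n-1)$-dimensional comparison, and is invisible to any function of $\max(x)$ alone.

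This is precisely why the paper's proof computes the density $f^{\gamma_1}$ on $B$ (it depends on $\min(x)$, equation \eqref{eq:density}) and compares it pointwise to the constant density $c+1$ of $\mu_+|_B$. For necessity the paper tests with $v(x)=\max\big(\sum_i x_i-(nc+n+c-p),\,0\big)$ — note the \emph{sum}, not the max, of the coordinates, which varies across $B$ and detects that $f^{\gamma_1}<c+1$ in the region $\max(x)\ge 2c+1-p$ when $c>c^*$. For sufficiency, the paper cannot get by with any one-parameter family of test functions either: it passes through Lemma \ref{lemma:Shaked and Shanthikumar} (reduce to indicators of upper sets), introduces the $D_n$-ordered notion, and runs an induction on $n$ via Lemmas \ref{lemma:D_k-ordered->D_k-1-ordered} and \ref{lemma:cylinder} (Proposition \ref{prop:D-ordered->cvx dominance}). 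Your proposal would need to be rebuilt from that point: replace the $\max(x)$-based family with a mechanism that probes the density profile on $B$ (e.g.\ upper-set indicators or linear-functional kinks), compute $f^{\gamma_1}$ from the pushed-set characterization as in Lemma \ref{lemma:density}, and then handle the ``if'' direction with an argument of the paper's type rather than a one-variable monotonicity check.
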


\begin{proof}[Proof Sketch]
  The main reason for this relationship stems from the density functions of the respective measures. In Lemma \ref{lemma:density} in the Appendix \ref{proof:sufficiency of thm pushed measure}, we prove that  
\begin{eqnarray}\label{eq:density}
f^{\gamma_1}(x) \equiv
  \begin{cases}
    (n+1)(\min(x)-c)+c  & \text{ if }  2c+1-p > \max(x), \\
    (n+1)(c+1-p)        & \text{ if } \max(x) \geq 2c+1-p
  \end{cases}
\end{eqnarray}
can be interpreted as the density function of \( \gamma_1 \) in a certain sense. The density of \( \mu_{+} \) is calculated as \( c+1 \), and the condition \( (n+1)(c+1-p) \geq c+1 \) plays a crucial role in determining whether \( \gamma_1 \) convexly dominates \( \mu_{+} \). This condition is equivalent to \( c \leq c^* \), establishing the threshold for uniform pricing's optimality.

It is straightforward to see that if \( (n+1)(c+1-p) < c+1 \), then \( \gamma_1 \) does not convexly dominate \( \mu_{+} \). This follows from constructing a function that is positive only in the region where \( \max(x) \geq 2c+1-p \). The sufficiency argument is more intricate and will be established via induction on \( n \) in Appendix \ref{proof:uniform pricing is optimal if gamma_1>mu_+} and \ref{proof:sufficiency of thm pushed measure}.
\end{proof}

\subsection{Proof Sketch of Proposition \ref{prop:pushed measure}}\label{subsec:Proof Sketch}
This subsection explains the proof sketch of Proposition \ref{prop:pushed measure}. First, we will consider sufficiency of Proposition \ref{prop:pushed measure} by the following proposition:

\begin{prop}\label{prop:uniform pricing is optimal if gamma_1>mu_+}
uniform pricing is optimal if
\begin{gather*}
    \gamma_1\succeq_\text{cvx} \mu_{+}.
\end{gather*}
\end{prop}
\begin{proof}[Proof Sketch]
    Since \( \gamma_1 \) convexly dominates \( \mu_{+} \), the optimality condition is easily satisfied by choosing a measure \( \gamma^* \) such that \( \gamma^*_1 = \gamma_1 \) and \( \gamma^*_2 = \mu_{+} \). Then, the third condition follows immediately as  
\[
\gamma^*_1 - \gamma^*_2 \succeq_\text{cvx} \mu =\mu_{+} - \mu_{-}.
\]  
The remaining conditions are also verified using the properties of the pushed measure. For instance, the pushed measure ensures the almost sure (a.s.) condition by definition. Furthermore, the integral condition for \( u \) is also satisfied because \( \mu_{-} \) remains unchanged. Therefore, for the function \( U \), any modifications occur on a set of measure zero, which does not affect the integral.
\end{proof}
Next, we will explain the necessity part in Proposition \ref{prop:pushed measure}. As in Theorem 2 in \textcite{daskalakis2017strong}, when considering feasible dual variable $\delta\in\Gamma_{+}(X\times X)$, note that we can assume $\delta_1(X)=\delta_2(X)=\mu_+(X)$, $\delta_1\succeq_{cvx} \mu_{+}$ and $\mu_{-}\succeq_{cvx} \delta_2$ hold.
\begin{prop}\label{prop: nescessity} 
If  $\gamma_1\succeq_{cvx}\mu_{+}$ does not hold, there does not exist $\delta\in \Gamma_{+}(X\times X)$ such that
\begin{gather}
    \mu_+(X)=\delta_1(X)=\delta_2(X),\label{eq:delta moreover cdn}\\
    \delta_1\succeq_{cvx} \mu_{+},\label{eq:delta_1 cvx cdn}\\
    \mu_{-}\succeq_{cvx} \delta_2,\label{eq:delta_2 cvx cdn}\\ 
    u^*(x)-u^*(y)=\|x-y\|_{\infty}\quad\delta(x,y)\text{-almost surely},\label{eq:delta a.s. cdn}\\
    \int_{X}u^*d(\delta_1-\delta_2)=\int_{X}u^*d\mu.\label{eq:delta integral cdn}
\end{gather}
\end{prop}
Here, we provide the roadmap of the proof. We derive a contradiction by assuming that \( \gamma_1\succ_{cvx} \mu_{+} \) does not hold and that \( \delta\in \Gamma_{+}(X\times X) \) satisfies all conditions \eqref{eq:delta moreover cdn}, \eqref{eq:delta_1 cvx cdn}, \eqref{eq:delta_2 cvx cdn}, \eqref{eq:delta a.s. cdn}, and \eqref{eq:delta integral cdn}. In particular, we aim to show that \eqref{eq:delta_1 cvx cdn} fails under these assumptions. To achieve this, we establish the existence of a function \( u\in\mathcal{U}(X) \) such that  
$
\int_X u d\delta_1 < \int_X u d\mu_{+}.$ 

A natural candidate for \( u\in\mathcal{U}(X) \) arises from the fact that \( \gamma_1 \succeq_{cvx} \mu_{+} \) does not hold. This implies the existence of some function \( u\in\mathcal{U}(X) \) satisfying  
$
\int_X u d\gamma_1 < \int_X u d\mu_{+}.
$
By Proposition \ref{prop:cvx iff c^*}, we know that \( c < c^* \), and the density function in equation \eqref{eq:density} indicates that any function that is positive only in the region where \( \max(x) \geq 2x + 1 - p \) satisfies this inequality.

To be concrete, consider the function  
\[
v(x) \equiv \max\qty(\sum_{i=1}^n x_i - (nc+n+c-p), 0).
\]  
Since  
\[
v(c+1, \dots, c+1, 2c+1-p) = 0,
\]  
it follows that \( v(x) < 0 \) whenever \( \max(x) < 2c + 1 - p \), leading to  
\begin{gather}\label{eq:c>c^star implies gamma1 does not cvx dominate mu+}
    \int_X v d\gamma_1 < \int_X v d\mu_{+} 
\end{gather}

Our goal is now to show that the same inequality holds for any \( \delta \) satisfying all conditions. To proceed, we rely on three key assumptions: convex order \eqref{eq:delta_1 cvx cdn}, \eqref{eq:delta_2 cvx cdn}, the almost sure condition \eqref{eq:delta a.s. cdn}, and the integral equality condition \eqref{eq:delta integral cdn}.  

The main challenge is effectively using the almost sure condition and the integral equality condition. First, the convex order conditions (10) and (11) allow us to rewrite the integral equality condition (13) in a more useful form, which we will utilize in the following steps.  

\begin{rmk}\label{lemma:both delta integral equal}
Suppose that there exists $\delta\in \Gamma_{+}(X\times X)$ that satisfies  \eqref{eq:delta_1 cvx cdn}, \eqref{eq:delta_2 cvx cdn} and \eqref{eq:delta integral cdn}.
Then 
\begin{align*}
    \int_{X}u^*d\delta_1=\int_{X}u^*d\mu_+,\\
    \int_{X}u^*d\delta_2=\int_{X}u^*d\mu_-.
\end{align*}
\end{rmk}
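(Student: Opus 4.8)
The plan is to test \emph{both} convex-dominance hypotheses against the single function $u^*$ and then play them off against the integral identity \eqref{eq:delta integral cdn}. Since $u^*(x)=\max\{\max(x)-p,\,0\}$ is a pointwise maximum of the affine function $x\mapsto\max(x)-p$ (which is non-decreasing and convex) and the constant $0$, it is continuous, non-decreasing, and convex; hence $u^*\in\mathcal{U}(X)$ and is an admissible test function in the definition of $\succeq_{cvx}$.

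First I would invoke \eqref{eq:delta_1 cvx cdn}, that is $\delta_1\succeq_{cvx}\mu_+$, with the test function $u^*$ to obtain $\int_X u^*\,d\delta_1\ge\int_X u^*\,d\mu_+$; likewise \eqref{eq:delta_2 cvx cdn}, that is $\mu_-\succeq_{cvx}\delta_2$, gives $\int_X u^*\,d\mu_-\ge\int_X u^*\,d\delta_2$. Thus both quantities $\int_X u^*\,d\delta_1-\int_X u^*\,d\mu_+$ and $\int_X u^*\,d\mu_--\int_X u^*\,d\delta_2$ are nonnegative, while their sum equals $\int_X u^*\,d\delta_1-\int_X u^*\,d\delta_2-\bigl(\int_X u^*\,d\mu_+-\int_X u^*\,d\mu_-\bigr)$, which is $0$ by \eqref{eq:delta integral cdn} (whose right-hand side $\int_X u^*\,d\mu$ equals $\int_X u^*\,d\mu_+-\int_X u^*\,d\mu_-$). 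A sum of two nonnegative numbers equal to $0$ forces each to vanish, which is precisely the assertion $\int_X u^*\,d\delta_1=\int_X u^*\,d\mu_+$ and $\int_X u^*\,d\delta_2=\int_X u^*\,d\mu_-$.

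I do not anticipate any genuine obstacle here; the argument is a one-line squeeze. The only point needing a word of justification is that all four integrals are finite real numbers, so that the splitting $\int_X u^*\,d\mu=\int_X u^*\,d\mu_+-\int_X u^*\,d\mu_-$ and the subsequent rearrangement are legitimate. This is immediate because $X$ is compact (so $u^*$ is bounded), $\mu_+$ and $\mu_-$ are finite, and $\delta_1,\delta_2$ are finite — the latter following, e.g., from applying \eqref{eq:delta_1 cvx cdn} and \eqref{eq:delta_2 cvx cdn} to the constant functions $\pm1\in\mathcal{U}(X)$, which already pins down $\delta_1(X)=\mu_+(X)$ and $\delta_2(X)=\mu_-(X)$.
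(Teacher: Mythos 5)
Your proof is correct and follows essentially the same route as the paper's: apply the two convex-dominance hypotheses to the test function $u^*\in\mathcal{U}(X)$ to get one inequality in each direction, and then observe that the integral identity \eqref{eq:delta integral cdn} forces both to be equalities. The extra remarks about finiteness of the integrals and $\delta_1(X)=\mu_+(X)$, $\delta_2(X)=\mu_-(X)$ are harmless elaborations, not a different argument.
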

\begin{proof}
Since $u^*$ is continuous, non-decreasing, and convex, \eqref{eq:delta_1 cvx cdn} and \eqref{eq:delta_2 cvx cdn} mean
\begin{gather*}
    \int_{X}u^*d\delta_1\geq \int_{X}u^*d\mu_+,\\
    \int_{X}u^*d\delta_2\leq \int_{X}u^*d\mu_-.
\end{gather*}
Here, \eqref{eq:delta integral cdn} implies that both inequalities hold with equality.

\end{proof}

Additionally, (10) and (11) also allow us to derive a fundamental property about \( \delta \), which will serve as the foundation for the subsequent discussion. 

\begin{rmk}\label{lemma:divide Z and W}
Suppose that there exists $\delta\in \Gamma_{+}(X\times X)$ that satisfies \eqref{eq:delta moreover cdn}, \eqref{eq:delta_1 cvx cdn}, \eqref{eq:delta_2 cvx cdn} and, \eqref{eq:delta integral cdn}.
\begin{gather*}
    \delta_1(Z)=\mu_+(Z)\\
    \delta_1\left(B\right)=\mu_+\left(B\right)
\end{gather*}
In particular, the second equation and \eqref{eq:delta moreover cdn} implies that $\delta_1$ that satisfies \eqref{eq:delta moreover cdn}, \eqref{eq:delta_1 cvx cdn}, \eqref{eq:delta_2 cvx cdn} and, \eqref{eq:delta integral cdn} has measure only on $Z\cup B$.
\end{rmk}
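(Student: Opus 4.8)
I would work only with the marginal $\delta_1$, combining the identity $\int_X u^*\,d\delta_1=\int_X u^*\,d\mu_+$ supplied by Lemma \ref{lemma:both delta integral equal} with the convex dominance \eqref{eq:delta_1 cvx cdn} tested against a well-chosen one-parameter family of functions. First I would record the shape of $\mu_+$ in this setting: it consists of a unit point mass at $(c,\dots,c)\in Z$ together with, for each $i$, mass $c+1$ spread uniformly over the face $\{x_i=c+1\}\subset B$. Hence $\mu_+$ is supported on $Z\cup B$ (a disjoint union, since $p<c+1$), with $\mu_+(Z)=1$, $\mu_+(B)=n(c+1)$ and $\mu_+(X)=1+n(c+1)$, the last equal to $\delta_1(X)$ by \eqref{eq:delta moreover cdn}. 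Since $u^*\equiv0$ on $Z$ and $u^*\equiv c+1-p$ on $B$, the identity becomes $\int_X u^*\,d\delta_1=(c+1-p)\mu_+(B)$.

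Next I would extract the easy bound $\delta_1(W)\ge\mu_+(B)$: because $u^*$ vanishes on $Z=X\setminus W$ while $u^*(x)=\max(x)-p\le c+1-p$ on $W$ with equality only on $B$, we get $(c+1-p)\mu_+(B)=\int_W(\max(x)-p)\,d\delta_1\le(c+1-p)\,\delta_1(W)$; and since $c+1-\max(x)$ is strictly positive throughout $W\setminus B$, the equality $\delta_1(W)=\mu_+(B)$ would force $\delta_1(W\setminus B)=0$.

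The main step is the reverse bound $\delta_1(W)\le\mu_+(B)$, and the one genuinely new ingredient is the choice of test functions $u_q(x)\equiv\max\{0,\max(x)-q\}\in\mathcal{U}(X)$ for $q\in(p,c+1)$. Evaluating $\int_X u_q\,d\mu_+$ exactly as for $u^*$ gives $(c+1-q)\mu_+(B)$, so \eqref{eq:delta_1 cvx cdn} yields $\int_X u_q\,d\delta_1\ge(c+1-q)\mu_+(B)$, which is an equality at $q=p$ since $u^*=u_p$. Subtracting the $q$-inequality from the $q=p$ equality and using that $u_p-u_q$ equals $\max(x)-p$ on $\{p<\max(x)\le q\}$ and $q-p$ on $\{\max(x)>q\}$, a short manipulation gives
\[
(q-p)\,\delta_1(\{\max(x)>q\})\le(q-p)\,\mu_+(B)-\int_{\{p<\max(x)\le q\}}(\max(x)-p)\,d\delta_1\le(q-p)\,\mu_+(B),
\]
so $\delta_1(\{\max(x)>q\})\le\mu_+(B)$ for every $q\in(p,c+1)$; letting $q\downarrow p$ and using continuity of $\delta_1$ from below along $\{\max(x)>q\}\uparrow W$ yields $\delta_1(W)\le\mu_+(B)$. (Equivalently, $q\mapsto\int_X u_q\,d\delta_1$ is convex and $q\mapsto\int_X u_q\,d\mu_+$ is affine, the two coincide at $q=p$, so the affine graph is a supporting line there and its slope $-\mu_+(B)$ dominates the right derivative $-\delta_1(W)$.)

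Finally I would combine: the two bounds give $\delta_1(W)=\mu_+(B)$, hence $\delta_1(W\setminus B)=0$ by the equality case above, so $\delta_1(B)=\delta_1(W)=\mu_+(B)$; and $\delta_1(Z)=\delta_1(X)-\delta_1(W)=\mu_+(X)-\mu_+(B)=\mu_+(Z)$ by \eqref{eq:delta moreover cdn}. For the ``in particular'' statement, $\delta_1(Z\cup B)=\delta_1(Z)+\delta_1(B)=\mu_+(X)=\delta_1(X)$, so $\delta_1$ puts no mass outside $Z\cup B$. I expect the main obstacle to be spotting the family $\{u_q\}$; once it is in hand the rest is bookkeeping with $\mu_+$ and elementary continuity of measures, and the argument uses only \eqref{eq:delta moreover cdn}, \eqref{eq:delta_1 cvx cdn}, \eqref{eq:delta_2 cvx cdn} and \eqref{eq:delta integral cdn} (the last two through Lemma \ref{lemma:both delta integral equal}), consistent with the stated hypotheses.
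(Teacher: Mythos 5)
Your proof is correct, and it takes a genuinely different route from the paper's in the ``hard'' direction. The paper first shows $\mu_+(B)\ge\delta_1(B)$ from the integral identity (since $u^*\ge 0$ off $B$) and then establishes $\mu_+(B)\le\delta_1(B)$ by testing convex dominance against $h_t(x)=\max(t\max(x)-t(c+1)+1,0)$, whose pointwise limit as $t\to\infty$ is the indicator of $B$; only afterwards does it deduce $\delta_1(W\setminus B)=0$ from the integral identity and hence $\delta_1(Z)=\mu_+(Z)$. You instead bound $\delta_1(W)$ directly from both sides: the identity with $u^*$ gives $\delta_1(W)\ge\mu_+(B)$, while the family $u_q=\max\{0,\max(x)-q\}$ with $q\downarrow p$ gives $\delta_1(W)\le\mu_+(B)$, after which the equality case of the first bound yields $\delta_1(W\setminus B)=0$ and everything else follows. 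Observing that $h_t$ is (up to scale) $u_{c+1-1/t}$, the two proofs use the same one-parameter family but push $q$ to opposite endpoints: the paper sends $q\to c+1$ to isolate the mass on $B$, whereas you exploit the tangency of $\phi(q)=\int u_q\,d\delta_1$ and the affine $\psi(q)=(c+1-q)\mu_+(B)$ at $q=p$, comparing one-sided derivatives there. Your route is arguably a little slicker because the supporting-line picture at $q=p$ delivers the bound on $\delta_1(W)$ in one step rather than going through $\delta_1(B)$ and then separately killing $\delta_1(W\setminus B)$; the paper's route has the virtue of being more literally about the set where $\mu_+$ lives. Both invoke Lemma \ref{lemma:both delta integral equal}, and hence all four hypotheses, in the same way.
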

\begin{proof}The proof is in Appendix \ref{proof:necessity of thm pushed measure}. \end{proof}


Next, the following lemma provides an effective way to utilize the almost-sure condition. This lemma illustrates how the almost-sure condition constrains the measure. For example, suppose that \( \delta_2 = \mu_{-} \). Then, this implies that \( \delta_1 \) is always weakly smaller than the pushed measure \( \gamma \). Overall, this lemma helps establish an upper bound for \( \delta_1 \), which is key to demonstrating why the inequality holds for any \( \delta \).

\begin{rmk}\label{lemma:pushu}
Suppose that there exists $\delta\in \Gamma_{+}(X\times X)$ that satisfies \eqref{eq:delta a.s. cdn}.
Then the following two equations hold:
\begin{gather*}
    \delta_1(Z\setminus B_Z)=\delta_2(Z\setminus B_Z)\\
    \delta_1(U\cap W)\leq \delta_2(push(U)\cap W)+(\delta_2-\delta_1)(push(U)\cap B_Z) \text{ for  all upper set $U\subset B$ with respect to $B$}.
\end{gather*}
\end{rmk}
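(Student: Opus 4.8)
The plan is to read off both statements directly from the constraint that the almost-sure condition \eqref{eq:delta a.s. cdn} places on the support of $\delta$, using the explicit shape $u^*(x)=\max\{\max(x)-p,0\}$ from \eqref{eq:u^*} together with the $1$-Lipschitz property of $x\mapsto\max(x)$ for $\|\cdot\|_\infty$. The common mechanism is this: whenever a pair $(x,y)$ obeys $u^*(x)-u^*(y)=\|x-y\|_\infty$, then (i) if $x\in Z$, so $u^*(x)=0$, the equation forces $u^*(y)\le 0$, hence $u^*(y)=0$ and $\|x-y\|_\infty=0$, i.e.\ $x=y$; and (ii) if $x\in B$, the bound $\|x-y\|_\infty\ge\max(x)-\max(y)$ pins $\max(y)$ from below. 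Here $p\in(c,c+1)$, so $W,Z,B,B_Z$ are as described and $B\cap B_Z=\emptyset$, $B\subset W$, $B_Z\subset Z$, $W\cap B_Z=\emptyset$. Throughout, a claim ``for $\delta$-a.e.\ $(x,y)$ with $x\in A$ one has $y\in A'$'' will be proved by showing the equality in \eqref{eq:delta a.s. cdn} is incompatible with $x\in A$ and $y\notin A'$, so that $\delta(A\times(X\setminus A'))=0$.

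For the first equality I would show that for $\delta$-a.e.\ $(x,y)$ one has $x\in Z\setminus B_Z\iff y\in Z\setminus B_Z$, and that in that case $x=y$. If $x\in Z\setminus B_Z$, then (i) applies and $x=y\in Z\setminus B_Z$. If $y\in Z\setminus B_Z$, so $u^*(y)=0$ and $\max(y)<p$, pick $j$ with $x_j=\max(x)$; then $u^*(x)=\|x-y\|_\infty\ge x_j-y_j\ge\max(x)-\max(y)>\max(x)-p$, which is impossible if $x\in W$ (there $u^*(x)=\max(x)-p$), so $x\in Z$, whence $u^*(x)=0$ and $x=y\in Z\setminus B_Z$. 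Therefore $\delta_1(Z\setminus B_Z)=\delta\big((Z\setminus B_Z)\times X\big)=\delta\big((Z\setminus B_Z)\times(Z\setminus B_Z)\big)$, and symmetrically $\delta_2(Z\setminus B_Z)=\delta\big(X\times(Z\setminus B_Z)\big)=\delta\big((Z\setminus B_Z)\times(Z\setminus B_Z)\big)$, so the two coincide.

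For the second inequality, fix an upper set $U\subset B$ and note $U\cap W=U$ since $B\subset W$. I would establish two support facts. First, for $\delta$-a.e.\ $(x,y)$ with $x\in U$ one has $y\in push(U)$: since $x\in B$, $u^*(x)=c+1-p$, and $\|x-y\|_\infty=(c+1-p)-u^*(y)\ge(c+1)-\max(y)$ rearranges to $\max(y)-p\ge u^*(y)\ge 0$, so $\max(y)\ge p$, i.e.\ $y\in W\cup B_Z$; then the pair $(x,y)$ itself, with $x\in U\cap B$ and $u^*(x)-u^*(y)=\|x-y\|_\infty$, is exactly a witness of $y\in push(U)$ by Definition \ref{dfn:push U} (alternatively one verifies $y+(c+1-\max(y))\bm{1}\in U\cap B$ using that $U$ is an upper set, and invokes Lemma \ref{lemma:y+t1}). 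Second, for $\delta$-a.e.\ $(x,y)$ with $x\in push(U)\cap B_Z$ one has $x=y$, by (i), since $u^*(x)=0$. The first fact gives $\delta_1(U)=\delta\big(U\times push(U)\big)$ and the second gives $\delta_1(push(U)\cap B_Z)=\delta\big((push(U)\cap B_Z)\times(push(U)\cap B_Z)\big)$; these two product sets are disjoint (their first coordinates lie in $B$ versus $B_Z$) and are both contained in $X\times push(U)$, so
\[
\delta_1(U)+\delta_1(push(U)\cap B_Z)\ \le\ \delta_2(push(U))\ =\ \delta_2(push(U)\cap W)+\delta_2(push(U)\cap B_Z),
\]
and rearranging, together with $\delta_1(U\cap W)=\delta_1(U)$, yields exactly the second displayed inequality of the lemma.

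I expect the only genuinely subtle step to be the first support fact above: ruling out that a point $x\in U$ is transported to some $y$ with $\max(y)<p$, and observing that once $y\in W\cup B_Z$ the transporting pair $(x,y)$ is \emph{itself} the required certificate of membership in $push(U)$, so that no auxiliary construction is needed (the translate $y+t\bm{1}$ from Lemma \ref{lemma:y+t1} is only an optional alternative route, and it is there that the upper-set hypothesis on $U$ enters). The remaining work is bookkeeping: tracking which product sets are disjoint and contained in $X\times push(U)$, and applying each ``$\delta$-a.e.'' statement to the correct conditional set.
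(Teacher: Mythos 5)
Your proof is correct and follows essentially the same route as the paper's: both arguments exploit that $\delta$ is concentrated on $S_{u^*}=\{(x,y):u^*(x)-u^*(y)=\|x-y\|_\infty\}$, read off from $u^*$ that mass originating in $Z\setminus B_Z$ is fixed ($x=y$) and that mass originating in $U\subset B$ must land in $push(U)\subset W\cup B_Z$, and then do the same measure bookkeeping. The only cosmetic difference is that the paper derives ``$y\in W\cup B_Z$'' by reusing the biconditional from the first part while you re-derive it directly, and the paper applies its push-forward inequality once to the union $(U\cap W)\cup(push(U)\cap B_Z)$ where you treat the two pieces separately and add; neither change is substantive, and your observation that the pair $(x,y)$ itself certifies $y\in push(U)$ (so the upper-set hypothesis is not needed at that step) is an accurate reading of Definition~\ref{dfn:push U}.
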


\begin{proof}The proof is in Appendix \ref{proof:necessity of thm pushed measure}.
\end{proof}

Moreover, the Lemma \ref{lemma:pushu} implies the following lemma. First, we will define the pushed function. 

\begin{dfn}\label{dfn:push of function}
    For a bounded function $b:B\rightarrow \mathbb{R}$, $b_{push}:X\rightarrow \mathbb{R}$ is defined as 
    \begin{gather*}
        b_{push}(x)=b(x+t\bm{1}), \text{ where } t=c+1-\max(x).
    \end{gather*}
\end{dfn}

\begin{rmk}\label{lemma:integral inequality between delta_1 and delta_2}
Suppose that $c>c^* $ and there exists $\delta\in \Gamma_{+}(X\times X)$ that satisfies \eqref{eq:delta moreover cdn}, \eqref{eq:delta_1 cvx cdn}, \eqref{eq:delta_2 cvx cdn}, \eqref{eq:delta a.s. cdn} and \eqref{eq:delta integral cdn}.
Then, for all nonnegative, bounded and nondecreasing functions $b:B \rightarrow [0, \infty)$,
\begin{gather}
\int_W b(x)d\gamma_1=\int_W b_\text{push}(x)d\mu_-,\label{eq:gamma push equal}\\
\int_W b(x)d\delta_1\leq \int_W b_\text{push}(x)d\delta_2+\int_{B_Z}b_\text{push}(x)d\qty(\delta_2-\delta_1),\label{eq:delta push inequal}
\end{gather}
\end{rmk}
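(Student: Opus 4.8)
The plan is to reduce both identities to the case $b=\mathbb{I}_U$ with $U$ an upper set of $B$: for such $b$ the first identity is exactly the defining relation $\gamma_1(U)=\mu_-(push(U))$, and the second is exactly the second conclusion of Lemma~\ref{lemma:pushu}. The general case then follows by a layer-cake argument. So fix a nonnegative, bounded, nondecreasing $b:B\to[0,\infty)$ and write $b=\int_0^\infty \mathbb{I}_{U_s}\,ds$ where $U_s\equiv\{b>s\}$; since $b$ is nondecreasing, each $U_s$ is an upper set of $B$ (for the $b$ used in the sequel $U_s$ is manifestly measurable, being a superlevel set of a continuous function). The one ingredient that ties everything together is that $push$ commutes with this decomposition on $W\cup B_Z$: for $y\in W\cup B_Z$, the amount $t=c+1-\max(y)\ge 0$ is the \emph{unique} $t$ with $y+t\bm{1}\in B$ (one checks $y+t\bm{1}\in X$ from $c\le\min(y)\le\max(y)\le c+1$), so $b_{\text{push}}(y)=b(y+t\bm{1})=\int_0^\infty\mathbb{I}_{U_s}(y+t\bm{1})\,ds$, and by Lemma~\ref{lemma:y+t1} $\mathbb{I}_{U_s}(y+t\bm{1})=\mathbb{I}_{push(U_s)}(y)$. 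In particular $\{y\in W:b_{\text{push}}(y)>s\}=push(U_s)\cap W$ and $\{y\in B_Z:b_{\text{push}}(y)>s\}=push(U_s)\cap B_Z$.

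For \eqref{eq:gamma push equal}: since $\gamma_1$ is carried by $B\subseteq W$, Tonelli and the definition of $\gamma_1$ give $\int_W b\,d\gamma_1=\int_0^\infty\gamma_1(U_s)\,ds=\int_0^\infty\mu_-(push(U_s))\,ds$. Now $\mu_-$ assigns zero mass to $B_Z=\{\max(x)=p\}$: from the transformed-measure computation, $\mu_-$ is $(n+1)$ times Lebesgue measure on $X$ plus the $c$-weighted uniform measures on the faces $\{x_i=c\}$, and both parts are null on the lower-dimensional surface $B_Z$. Since $push(U_s)\subseteq W\cup B_Z$, it follows that $\mu_-(push(U_s))=\mu_-(push(U_s)\cap W)=\mu_-(\{y\in W:b_{\text{push}}(y)>s\})$, and integrating in $s$ yields $\int_W b_{\text{push}}\,d\mu_-$. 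This half uses only the definitions and Lemma~\ref{lemma:y+t1}; the hypotheses on $\delta$ and on $c$ are not needed here.

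For \eqref{eq:delta push inequal}: hypothesis \eqref{eq:delta a.s. cdn} makes Lemma~\ref{lemma:pushu} applicable, and its second conclusion applied to $U_s$ (with $U_s\cap W=U_s$, as $U_s\subseteq B\subseteq W$) reads $\delta_1(U_s)\le\delta_2(push(U_s)\cap W)+\delta_2(push(U_s)\cap B_Z)-\delta_1(push(U_s)\cap B_Z)$. Reorganize this as an inequality between sums of nonnegative terms, integrate over $s\in(0,\infty)$, and apply Tonelli termwise. By Lemma~\ref{lemma:divide Z and W}, $\delta_1$ is carried by $Z\cup B$, hence $\delta_1|_W=\delta_1|_B$ and $\int_0^\infty\delta_1(U_s)\,ds=\int_W b\,d\delta_1$; the other three integrals become $\int_{B_Z}b_{\text{push}}\,d\delta_1$, $\int_W b_{\text{push}}\,d\delta_2$, and $\int_{B_Z}b_{\text{push}}\,d\delta_2$ by the commutation identity. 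All four measures are finite, with total mass $\mu_+(X)$ by \eqref{eq:delta moreover cdn}, and $b$ is bounded, so no $\infty-\infty$ arises and rearranging gives exactly \eqref{eq:delta push inequal}.

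The substance of the argument sits in the lemmas already proved — Lemma~\ref{lemma:pushu}, which extracts the structural content of the almost-sure matching condition, and Lemma~\ref{lemma:divide Z and W}, which locates the support of $\delta_1$ — together with the elementary fact $\mu_-(B_Z)=0$. The step that requires genuine care is the commutation of $push$ with superlevel sets, i.e.\ the identity $\mathbb{I}_{U_s}(y+t\bm{1})=\mathbb{I}_{push(U_s)}(y)$ on $W\cup B_Z$, which hinges on the uniqueness of the push amount $t$; once that is in place, both displays follow from Tonelli and the two cited lemmas with no further estimation.
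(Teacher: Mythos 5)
Your proof is correct, and the route is essentially the same as the paper's: both arguments reduce to the indicator case $b=\mathbb{I}_U$ (where \eqref{eq:gamma push equal} is the defining identity $\gamma_1(U)=\mu_-(push(U))$ and \eqref{eq:delta push inequal} is the second conclusion of Lemma~\ref{lemma:pushu}), and both pass to general $b$ by decomposing it into indicators of upper level sets. The only real difference is technical: the paper builds discrete simple-function approximations $b^m,\,b_p^m$ and uses uniform convergence to pass to the limit, whereas you use the continuous layer-cake representation $b=\int_0^\infty\mathbb{I}_{U_s}\,ds$ and Tonelli's theorem. Your version is slightly cleaner --- it avoids the uniform-convergence estimates entirely and makes explicit two facts the paper leaves implicit, namely $\mu_-(B_Z)=0$ and the commutation $\mathbb{I}_{U_s}(y+t\bm{1})=\mathbb{I}_{push(U_s)}(y)$ via uniqueness of the push amount $t$. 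One small simplification you make over the paper: you observe that $\gamma_1$ is carried by $B\subseteq W$ directly from Definition~\ref{dfn:gamma_1} rather than invoking Lemma~\ref{lemma:divide Z and W} for it, which is cleaner (that lemma is stated for $\delta_1$, not $\gamma_1$). The substance --- Lemma~\ref{lemma:y+t1}, Lemma~\ref{lemma:pushu}, Lemma~\ref{lemma:divide Z and W} for $\delta_1$, and the finiteness of the measures from \eqref{eq:delta moreover cdn} to avoid $\infty-\infty$ --- is used in exactly the same way in both.
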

\begin{proof}
    The proof is in Appendix \ref{proof:necessity of thm pushed measure}.
\end{proof}

The following lemma completes the proof of Proposition \ref{prop: nescessity} by utilizing the integral equality condition. By using the integral equality condition, we can freely add the $u^*$ to the original function, making it convex. 

\begin{rmk}\label{lemma:final lemma for necessity}
Suppose that $c>c^* $ and there exists $\delta\in \Gamma_{+}(X\times X)$ that satisfies \eqref{eq:delta moreover cdn}, \eqref{eq:delta_1 cvx cdn}, \eqref{eq:delta_2 cvx cdn}, \eqref{eq:delta a.s. cdn} and \eqref{eq:delta integral cdn}.
Then $\delta_1\succeq_{cvx}\mu_+$ does not hold.
\end{rmk}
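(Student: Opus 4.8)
### Proof Plan for Lemma~\ref{lemma:final lemma for necessity}

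\textbf{Overall strategy.} The plan is to produce an explicit test function $w\in\mathcal{U}(X)$ witnessing $\int_X w\,d\delta_1 < \int_X w\,d\mu_+$, thereby contradicting \eqref{eq:delta_1 cvx cdn}. The starting point is the function $v(x)\equiv\max\bigl(\sum_{i=1}^n x_i-(nc+n+c-p),0\bigr)$ introduced above, for which we already know $\int_X v\,d\gamma_1 < \int_X v\,d\mu_+$ by \eqref{eq:c>c^star implies gamma1 does not cvx dominate mu+}. The difficulty is that $v$ controls $\gamma_1$ (the pushed measure), not the arbitrary feasible $\delta_1$; the whole job is to transfer the strict inequality from $\gamma_1$ to $\delta_1$ using the remaining three conditions \eqref{eq:delta_2 cvx cdn}, \eqref{eq:delta a.s. cdn}, \eqref{eq:delta integral cdn}.

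\textbf{Step 1: reduce to the region $Z\cup B$.} By Lemma~\ref{lemma:divide Z and W}, $\delta_1$ is supported on $Z\cup B$, and $\delta_1(Z)=\mu_+(Z)$, $\delta_1(B)=\mu_+(B)$. On $Z$ the function $v$ vanishes (since $\max(x)\le p$ forces $\sum x_i\le np < nc+n+c-p$ when $p<c+1$), so $\int_X v\,d\delta_1 = \int_B v\,d\delta_1$ and likewise $\int_X v\,d\mu_+ = \int_B v\,d\mu_+$. Hence it suffices to compare $\int_B v\,d\delta_1$ with $\int_B v\,d\mu_+$.

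\textbf{Step 2: push everything to $B$.} The function $v$ restricted to $B$ is bounded, nonnegative, and nondecreasing, so Lemma~\ref{lemma:integral inequality between delta_1 and delta_2} applies with $b = v|_B$. Equation \eqref{eq:gamma push equal} rewrites $\int_W v\,d\gamma_1 = \int_W v_{\text{push}}\,d\mu_-$, and by Definition~\ref{dfn:gamma_1} together with Lemma~\ref{lemma:y+t1} one gets $\int_B v\,d\gamma_1 = \int_{W\cup B_Z} v_{\text{push}}\,d\mu_-$ (the pushed measure just records $\mu_-$-mass along the diagonal rays). So \eqref{eq:c>c^star implies gamma1 does not cvx dominate mu+} becomes, after bookkeeping, a strict inequality of the form $\int_{W\cup B_Z} v_{\text{push}}\,d\mu_- < \int_B v\,d\mu_+$. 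On the other hand \eqref{eq:delta push inequal} gives $\int_W v\,d\delta_1 \le \int_W v_{\text{push}}\,d\delta_2 + \int_{B_Z} v_{\text{push}}\,d(\delta_2-\delta_1)$; combined with $\delta_1(Z\setminus B_Z)=\delta_2(Z\setminus B_Z)$ from Lemma~\ref{lemma:pushu} and the fact that $v$ vanishes on $Z$, this should yield $\int_B v\,d\delta_1 \le \int_{W\cup B_Z} v_{\text{push}}\,d\delta_2$ (the $B_Z$ terms telescoping appropriately).

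\textbf{Step 3: control $\delta_2$ by $\mu_-$ and close the loop.} Now I would use Lemma~\ref{lemma:both delta integral equal}, which says $\int_X u^*\,d\delta_2 = \int_X u^*\,d\mu_-$, and $\mu_-\succeq_{\text{cvx}}\delta_2$ from \eqref{eq:delta_2 cvx cdn}. The key trick (flagged in the text as ``we can freely add $u^*$ to the original function'') is: for the function $v_{\text{push}}$, which need not itself be convex, consider $v_{\text{push}} + \lambda u^*$ for suitable $\lambda\ge 0$ chosen so that the sum lies in $\mathcal{U}(X)$; then convex dominance $\mu_-\succeq_{\text{cvx}}\delta_2$ gives $\int (v_{\text{push}}+\lambda u^*)\,d\delta_2 \le \int (v_{\text{push}}+\lambda u^*)\,d\mu_-$, and the $u^*$ parts cancel by Lemma~\ref{lemma:both delta integral equal}, leaving $\int v_{\text{push}}\,d\delta_2 \le \int v_{\text{push}}\,d\mu_-$ over the relevant region. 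Chaining Steps~2 and~3: $\int_B v\,d\delta_1 \le \int_{W\cup B_Z} v_{\text{push}}\,d\delta_2 \le \int_{W\cup B_Z} v_{\text{push}}\,d\mu_- < \int_B v\,d\mu_+$, so $\int_X v\,d\delta_1 < \int_X v\,d\mu_+$, contradicting \eqref{eq:delta_1 cvx cdn} with the convex test function $w = v$ (or $v+\lambda u^*$).

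\textbf{Main obstacle.} The delicate point is Step~3: one must verify that $v_{\text{push}}$ (the radial projection of $v$ onto the ray hitting $B$) becomes convex after adding a bounded multiple of $u^*$, and that the supports match up so that the cancellation via Lemma~\ref{lemma:both delta integral equal} is exact rather than merely an inequality in the wrong direction. A secondary subtlety is the careful accounting of the $B_Z$ boundary terms in Step~2 — making sure the term $\int_{B_Z} v_{\text{push}}\,d(\delta_2-\delta_1)$ has the right sign or is absorbed, which is where $\delta_1(Z)=\mu_+(Z)$ and the a.s.\ condition on $Z\setminus B_Z$ must be invoked precisely.
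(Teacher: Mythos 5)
Your overall blueprint matches the paper's: you identify the right test function $v$, the right sequence of lemmas, and the right goal of transferring $\int v\,d\gamma_1 < \int v\,d\mu_+$ to a statement about $\delta_1$. Steps 1 and 2 are sound (in step 2 you drop the nonnegative term $\int_{B_Z} v_{\text{push}}\,d\delta_1$, which works since $\delta_1\ge 0$ and $v_{\text{push}}\ge 0$). The gap is in Step 3, and it is a real one: the device you propose cannot be executed as stated.

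The function $\tilde v_{\text{push}}(x)=\max\bigl(\sum_i x_i + (p-c) - n\max(x),\,0\bigr)$ is \emph{not} convex (the inner expression is concave, being $\sum x_i - n\max(x)$ plus a constant), and adding any multiple $\lambda u^*$ of $u^*(x)=\max(\max(x)-p,0)$ does not repair this, because $u^*$ vanishes identically on $Z$ while $\tilde v_{\text{push}}$ does not; the nonconvexity on $Z$ survives for every finite $\lambda$. So you cannot apply $\mu_-\succeq_{\text{cvx}}\delta_2$ to $v_{\text{push}}+\lambda u^*$, and the intended inequality $\int v_{\text{push}}\,d\delta_2 \le \int v_{\text{push}}\,d\mu_-$ over $W\cup B_Z$ does not follow. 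Moreover, even setting convexity aside, the contribution from $Z\setminus B_Z$ never cancels in your scheme — Lemma~\ref{lemma:both delta integral equal} only kills the $u^*$ part, and $u^*=0$ there while $\tilde v_{\text{push}}$ need not be.

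The paper resolves both issues simultaneously with one construction. It works with
\[
v_t(x) \equiv \max\Bigl(\tilde v_{\text{push}}(x) + (n+t)\bigl(\max(x)-p\bigr),\,0\Bigr),\qquad t>0,
\]
where the crucial detail is that the added term is $(n+t)(\max(x)-p)$, \emph{not} $(n+t)u^*$. One checks the algebraic identity
\[
\tilde v_{\text{push}}(x) + (n+t)(\max(x)-p) = \max\Bigl(\sum_i x_i - c - (n-1)p,\; n(\max(x)-p)\Bigr) + t(\max(x)-p),
\]
which is a sum of maxima of nondecreasing affine functions (interpreting $\max(x)=\max_j x_j$), hence convex and nondecreasing; the outer $\max(\cdot,0)$ preserves membership in $\mathcal U(X)$. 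On $W\cup B_Z$ one has $\max(x)\ge p$, so $v_t = \tilde v_{\text{push}} + (n+t)u^*$ there and the $u^*$ part is cancelled via Lemma~\ref{lemma:both delta integral equal}. On $Z\setminus B_Z$, $\max(x)<p$ makes $(n+t)(\max(x)-p)\to -\infty$, so $v_t\to 0$ there as $t\to\infty$; bounded convergence then discards the $Z\setminus B_Z$ contribution entirely, leaving $\int_{W\cup B_Z}\tilde v_{\text{push}}\,d(\mu_--\delta_2)\ge 0$. From there one chains through \eqref{eq:gamma push equal} and \eqref{eq:delta push inequal} to obtain $\int_X v\,d\gamma_1 \ge \int_X v\,d\delta_1$, and closes with \eqref{eq:c>c^star implies gamma1 does not cvx dominate mu+}. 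You should replace your fixed-$\lambda$ argument with this $v_t$ family and the $t\to\infty$ limit; without it, Step~3 does not go through.
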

\begin{proof}[Proof Sketch]
Let $\Tilde{v}:B\to[0, \infty)$ be the restriction of $v$ to $B$, i.e. $\Tilde{v}\equiv v|_B$. Define $v_t:X\rightarrow[0, \infty)$ as
\begin{gather*}
    v_t(x)\equiv\max\qty(\Tilde{v}_{push}(x)+(n+t)(\max(x)-p),0)
\end{gather*}
for $t>0$.
We can prove that 
\begin{gather*}
\Tilde{v}_{push}(x)+(n+t)(\max(x)-p)
    =\max\qty(\sum_{i=1}^n x_i-c-(n-1)p,n(\max(x)-p))+t(\max(x)-p).
\end{gather*}
Since the pointwise maximum of affine functions is convex and all affine terms are non-decreasing, their sum remains convex. Thus, $v_t$ is in $\mathcal{U}(X)$ and by \eqref{eq:delta_2 cvx cdn},
\[
\int_X v_t d\qty(\mu_--\delta_2)\geq 0 \text{  for all $t>0$. }
\]
By $v_t(x)=\Tilde{v}_{push}(x)+(n+t)u^*(x)$ for $x\in W\cup B_Z$ and  Lemma \ref{lemma:both delta integral equal},
\begin{eqnarray*}
    0
    \leq \int_X v_t d\qty(\mu_--\delta_2)
    &=& \int_{W\cup B_Z} v_t d\qty(\mu_--\delta_2) + \int_{Z\setminus  B_Z} v_t d\qty(\mu_--\delta_2)\\
    &=&\int_{W\cup B_Z} \Tilde{v}_{push} d\qty(\mu_--\delta_2) + \int_{Z\setminus  B_Z} v_t d\qty(\mu_--\delta_2) \quad(\because \text{ Lemma }\ref{lemma:both delta integral equal})
\end{eqnarray*}
By the definition of $v_t(x),$ notice that for all $x\in Z\setminus B_Z$, $v_t(x)\rightarrow 0$ as $t\rightarrow \infty$. Thus, by the bounded convergence theorem,
 \begin{gather*}
     0\leq \int_{W\cup B_Z} \Tilde{v}_{push} d\qty(\mu_--\delta_2)+\lim_{t\rightarrow \infty}\int_{Z\setminus B_Z} v_t d\qty(\mu_--\delta_2) = \int_{W\cup B_Z} \Tilde{v}_{push} d\qty(\mu_--\delta_2).
 \end{gather*}
 
Moreover, by \eqref{eq:gamma push equal} and \eqref{eq:delta push inequal}, we obtain
    \begin{eqnarray*}
        0&\le&\int_{W\cup B_Z} \Tilde{v}_{push} d\qty(\mu_--\delta_2)
        \\&\le& \int_{W} \Tilde{v}_{push} d\qty(\mu_--\delta_2)+\int_{B_Z} \Tilde{v}_{push} d\qty(\delta_1-\delta_2)\quad (\because \text{ By calculation})\\
        &=& \int_{W} \Tilde{v} d\qty(\gamma_1-\delta_1) \quad(\because \eqref{eq:gamma push equal}\eqref{eq:delta push inequal})
        \\&=& \int_{X} v d\qty(\gamma_1-\delta_1) \quad (\because \text{ By nature of $\gamma,\delta$, and $v$})
\end{eqnarray*}
This and \eqref{eq:c>c^star implies gamma1 does not cvx dominate mu+} completes the proof.
\end{proof}

\section{The Automated Mechanism Design Approach}\label{sec:AMD}
Our result was conjectured by the Automated Mechanism Design approach, which leverages computer algorithms to identify optimal mechanisms. We first numerically derive a candidate for the optimal mechanism in small-scale problems with specific distributions using the deep-learning-based algorithm developed by \textcite{dutting2023optimal}. We then generalize this mechanism to $n$-item setting with a broader class of distributions and employ the duality approach to establish its optimality formally. In addition to identifying the optimal mechanism, we use computational methods to approximate the optimal dual variables. In hindsight, our simple optimal mechanism might have been discovered without computational assistance; however, deriving it from first principles alone would have been nearly infeasible. The Automated Mechanism Design approach thus highlights a promising role for AI in theoretical economic research—namely, automating part of the research process by using computer algorithms to infer optimal mechanisms.

\subsection{The primal problem \eqref{problem:primal}}
For the primal problem \eqref{problem:primal}, \textcite{dutting2023optimal} provides useful methods when the number of items is small. We utilize the \textit{RochetNet} architecture introduced in their paper. For our analysis, we consider the following three cases:

\begin{enumerate}
    \item In the first case, we assume that the number of items is $n=2$, and the valuations of item 1 ($x_1$) and item 2 ($x_2$) are independently and identically distributed (i.i.d.) according to the uniform distribution $U[0,1]$ (Figure \ref{fig:prime_U01}).
    
    \item In the second case, we again consider $n=2$, where $x_1$ follows the cumulative distribution function (CDF) $F_1(x_1) = x_1^{1/2}$, and $x_2$ follows the CDF $F_2(x_2) = x_2^2$, independently of $x_1$ (Figure \ref{fig:beta}).
    
    \item In the third case, we extend the analysis to $n=3$, assuming that $x_1$, $x_2$, and $x_3$ are i.i.d. according to the uniform distribution $U[1, 2]$ (Figure \ref{fig:U12}).
\end{enumerate}

The three figures made by the RochetNet architecture illustrate the allocation probability of each item. Dark red represents a probability of 1, while yellow represents a probability of 0. The transition from dark red to yellow occurs when the maximum valuation of the items exceeds a certain price threshold.
Based on the figures, uniform pricing appears to be optimal for all cases.

\begin{figure}[htbp]
\centering
\includegraphics[width=12cm]{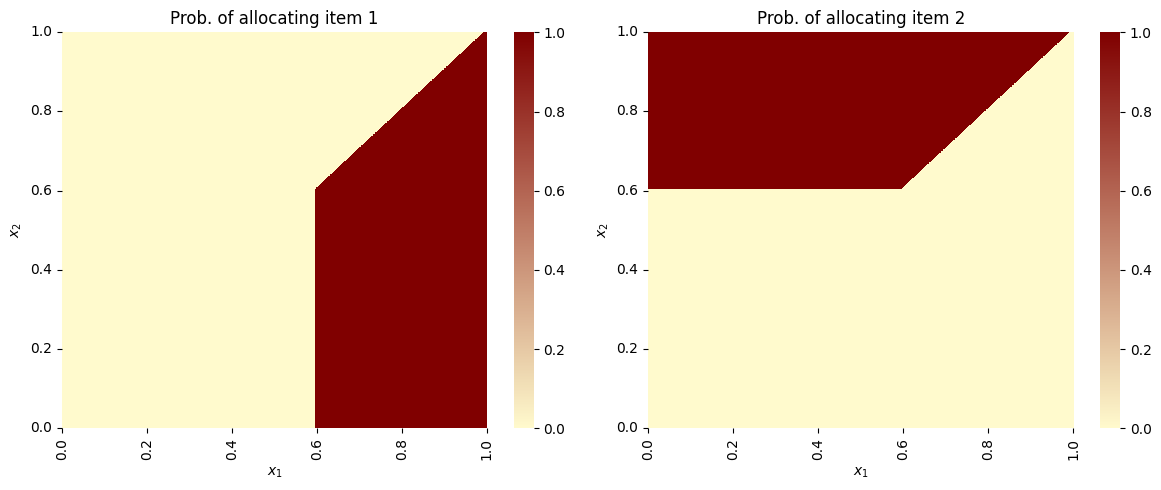}
\caption{The optimal allocation function in the case where the valuations of two items are independently drawn from $U[0,1]$.}
\label{fig:prime_U01}
\end{figure}

\begin{figure}[htbp]
\centering
\includegraphics[width=12cm]{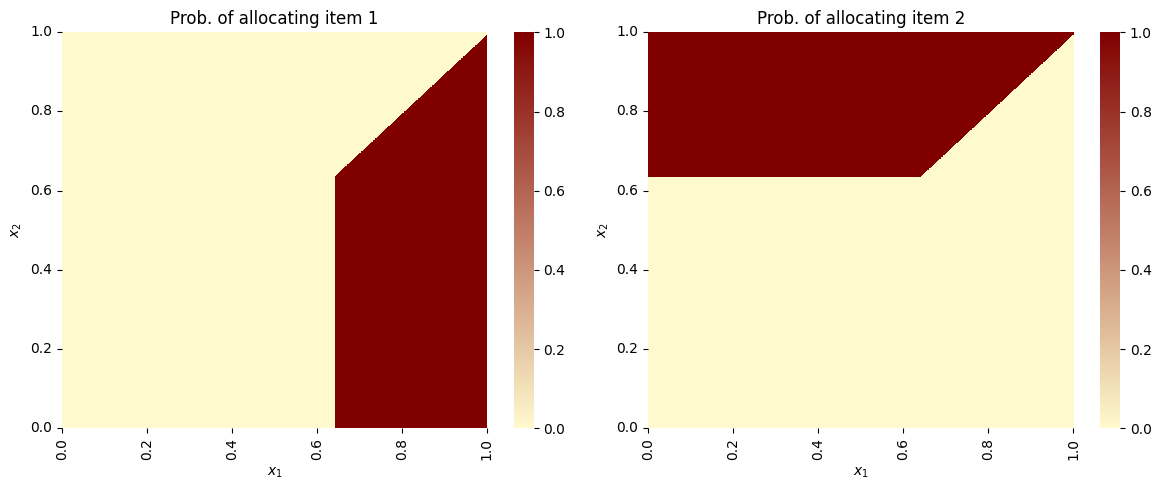}
\caption{The optimal allocation function in the case where the valuations of two items are independently drawn from $x_1\sim F_1$ and $x_2\sim F_2$.}
\label{fig:beta}
\end{figure}

\begin{figure}[htbp]
\centering
\includegraphics[width=12cm]{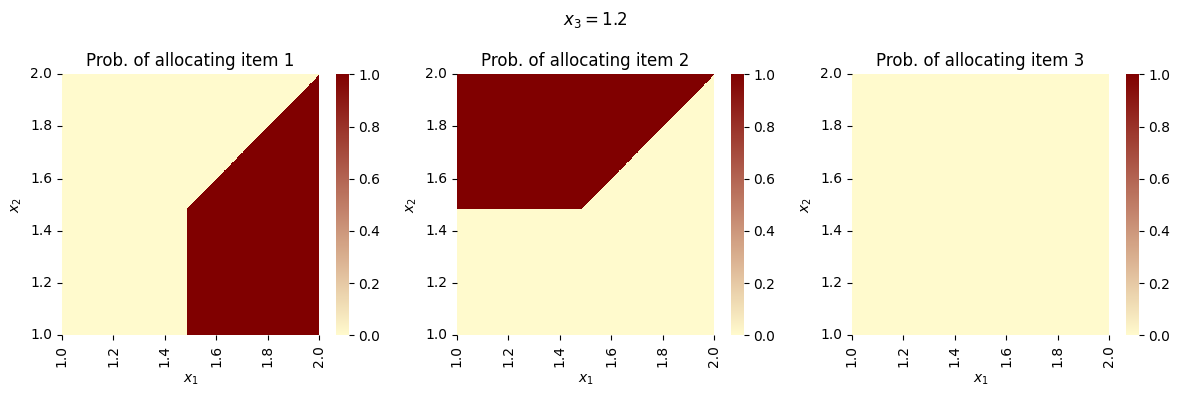}
\includegraphics[width=12cm]{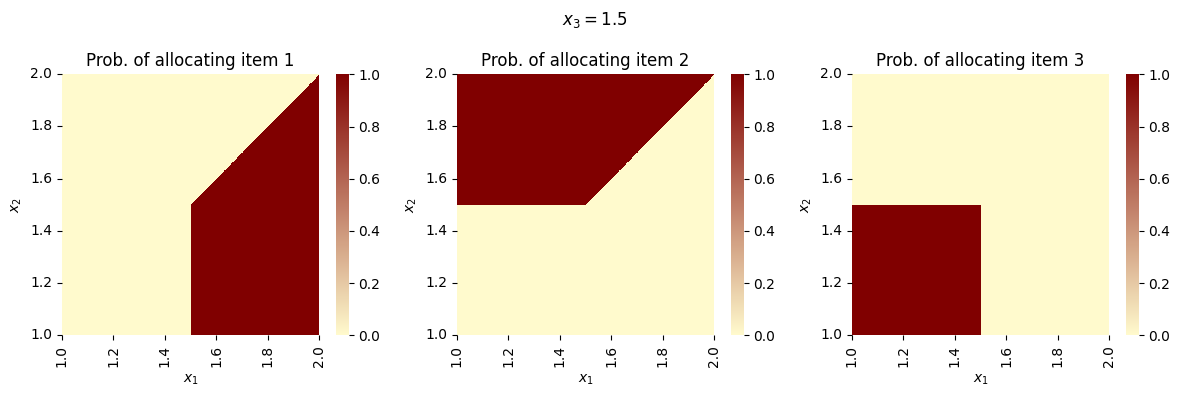}
\includegraphics[width=12cm]{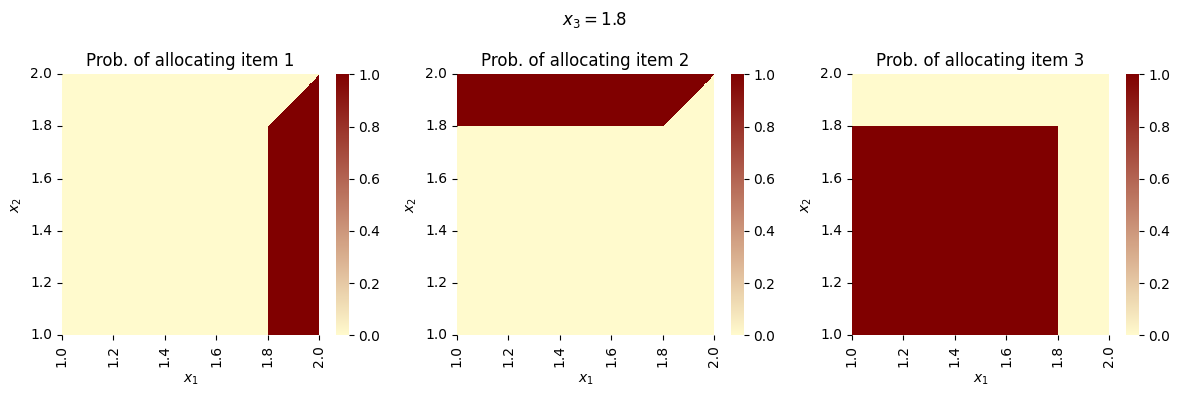}
\caption{The optimal allocation function in the case where the valuations of three items are independently drawn from $U[1,2]$. The results are plotted by fixing $x_3$ at 1.2, 1.5, and 1.8 while varying $x_1$ and $x_2$.}
\label{fig:U12}
\end{figure}

\section{Conclusion}
For over 40 years,  the optimal mechanism for multiple items has been an unsolved problem mainly due to its difficulty in proving the optimality. Here, we provide broadly applicable and easy-to-interpret results, focusing on a single buyer who can consume one item. 

First, we show that selling each item at an identical price is optimal for any number of items when the value distributions satisfy specific condition called scale monotonicity. Many famous distribution satisfies this condition such as beta distributions. This condition is in fact equivalent to sufficient condition by \textcite{haghpanah2021pure} in symmetric case.

Similarly, we demonstrate that bundling all items together is also optimal when and only when the values of multiple goods are independently drawn from $U[c,c+1]$, subject to specific conditions on $c$ that determine the feasibility of bundling. To verify the sufficiency and necessity, we defined a new notion of pushed set and pushed measure and applied them to the strong duality by \textcite{daskalakis2017strong}.

To identify the optimal mechanism, we employed the Automated Mechanism Design (AMD) approach, using the deep learning algorithm by \textcite{dutting2023optimal} to numerically solve small-sized problems for specific distributions. As for the second result, these numerical insights were generalized through an educated conjecture and rigorously verified via duality theory. This combination of AMD and duality provides a practical and efficient framework for verifying optimal solutions, making mechanism design more accessible for real-world applications in policy and resource allocation.

\clearpage

\section{Appendix}

\subsection{Omitted proof in Subsection \ref{sec:when's alpha}}
In this subsection, we prove Theorem \ref{thm:when's alpha}. To do that, first we establish some lemma.

Consider a pair of jointly distributed random variables $(x, y) \in \mathbb{R}^S \times \mathbb{R}$, where $S$ is a finite set.  
For each $s \in S$, define the conditional cumulative distribution function of $x(s)$ given $y$ by  
\[
F_s(z \mid y) := \Pr[x(s) \leq z \mid y].
\]  
Let $F_s^{-}(\cdot \mid y)$ denote the left limit of $F_s(\cdot \mid y)$, and let $F_s^{-1}(\cdot \mid y)$ denote its generalized inverse, i.e.
\begin{align*}
    F_s^{-}(z \mid y)
    &\equiv\lim_{h\to-0}F_s^{}(z+h \mid y)\\
    F^{-1}_{s}(\hat{q}\mid y)
    &\equiv\inf\qty{z\in \mathbb{R}|F_{s}(z|y)\geq \hat{q}}
\end{align*}

\begin{dfn}
Let $r \in \mathbb{R}$ be a random variable uniformly distributed on $[0,1]$, independent of $(x, y)\in \mathbb{R}^S \times \mathbb{R}$.  
Define a random variable $q \in \mathbb{R}^S$ by  
\[
q(s) := r F_{s}(x(s) \mid y) + (1 - r) F_{s}^{-}(x(s) \mid y).
\]  
Then $q$ is called a quantile of $x$ conditional on $y$.
\end{dfn}

\begin{rmk}[based on \textcite{ruschendorf2009distributional}]\label{lem: property of quantile}
Let $q \in \mathbb{R}^S$ be a quantile of $x \in \mathbb{R}^S$ conditional on $y \in \mathbb{R}$.  
Then, conditional on $q = \hat{q}$, we have  
\[
x(s) = F_{s}^{-1}(\hat{q}(s) \mid y) \quad \text{for all } s \in S,
\]  
with probability one.
\end{rmk}

\begin{proof}
Fix any $s \in S$.  
Again by \textcite{ruschendorf2009distributional}, conditional on $y = \hat{y}$, we have  
\[
x(s) = F_s^{-1}(q(s) \mid \hat{y}) \quad \text{almost surely}.
\]  
Therefore,  
\[
x(s) = F_s^{-1}(q(s) \mid y) \quad \text{almost surely}.
\]  
It follows that, conditional on $q = \hat{q}$,  
\[
x(s) = F_s^{-1}(\hat{q}(s) \mid y) \quad \text{almost surely}.
\]
\end{proof}

\begin{rmk}[based on \textcite{strassen1965existence,kamae1977stochastic}]\label{lem: strassen quantile}
Let $q \in \mathbb{R}^S$ be a quantile of $x \in \mathbb{R}^S$ conditional on $y \in \mathbb{R}$. Assume $q$ and $y$ are independently distributed.
The distribution of $x$ is stochastically non-decreasing in $y$ if and only if  
$F_s^{-1}(\hat{q}(s) \mid y)$
is non-decreasing in $y$ for all $\hat{q}$ and all $s \in S$.
\end{rmk}
\begin{proof}
The “if” direction follows from \textcite{strassen1965existence, kamae1977stochastic}.
In what follows, we prove the “only if” direction.
Fix any $\hat{q}$ and $s \in S$.
Suppose that the distribution of $x$ is stochastically non-decreasing in $y$,
or equivalently $\Pr[x \in U \mid y = \hat{y}]$ is non-decreasing in $\hat{y}$ for all upper sets $U \subseteq \mathbb{R}^S$.
Take any $\hat{y}_1,\hat{y}_2\in \mathbb{R}$ such that $\hat{y}_1<\hat{y}_2$.
Let
\[U=\left\{\hat{x}\in \mathbb{R}^S\mid \hat{x}(s)\geq F_s^{-1}(\hat{q}(s) \mid \hat{y}_1)\right\}.\]
Note that $U$ is an upper set.
Thus,
\[
1-\hat{q}(s)\leq \Pr[x \in U \mid y = \hat{y}_1]\leq \Pr[x \in U \mid y = \hat{y}_2].
\]
It follows that
\[
F_s^{-1}(\hat{q}(s) \mid \hat{y}_1)\leq F_s^{-1}(\hat{q}(s) \mid \hat{y}_2).
\]
\end{proof}

Here, e recall some important notations.

Let $F : [\underline{v}, \overline{v}] \to [0, 1]$ be a CDF that has a positive density $f$ over $[\underline{v}, \overline{v}]$ (so that $F$ is continuous and strictly increasing on $[\underline{v}, \overline{v}]$. ) 
Denote $T=[\underline{v}, \overline{v}]^n$ as a type space of the buyer and $t_i$ as a buyer's value of item $i$ for each $i=1, \ldots, n$.\par

Suppose $n\geq 2$ and $v(b,t)$ is unit-demand, i.e., $v(b,t)=\max_{i\in b} t_i$. We assume $v(\emptyset, t)=0$ for all $t\in T$. Let $r(\cdot,t) \in \mathbb{R}^B$ and $v(b^*,t) \in \mathbb{R}$ be the profile of relative values of type $t$ and the value of the grand bundle of type $t$ respectively, where $B=2^\qty{1,\ldots,n}$.

In what follows, we apply above results (Lemma\ref{lem: property of quantile} and Lemma\ref{lem: strassen quantile}) to the case where $x$ represents the profile of relative values $r(\cdot,t) \in \mathbb{R}^B$, $y$ represents the value of the grand bundle $v(b^*,t) \in \mathbb{R}$ and $S=B=2^\qty{1,\ldots,n}$. In this special setting, the following hold.

\begin{rmk}\label{lem:q and v are independent}
    For each $i=1, \ldots, n$, take any $\alpha_i>0$ and  consider the CDF $F^{\alpha_i}$ over $[\underline{v}, \overline{v}]$. Suppose $t_i$ follows CDF $ F^{\alpha_i}$ independently.
    Let $q \in \mathbb{R}^B$ be a quantile of $r(\cdot,t) \in \mathbb{R}^B$ conditional on $v(b^*,t) =\hat{v}$. Then, $q$ and $\hat{v}$ are independent.  
\end{rmk}

\begin{proof}
    By \textcite{ruschendorf2009distributional}, conditional on $v(b^*,t) =\hat{v}$, the random variable $q(b)$ is uniformly distributed on $[0,1]$ for each $b\in B$. We need to show $q=(q(1), \ldots, q(b^*))$ and $\hat{v}$ are independent. First, notice that conditional on $v(b^*,t) =\hat{v}$, the probability that item $i$ has the highest value among others is  $\frac{\alpha_i}{\sum_{j=1}^n\alpha_j}$. Thus, conditional on $v(b^*,t) =\hat{v}$, the CDF graph of $r(\{i\}, t)$ is as in Figure\ref{fig:cdf of r_i}.

\begin{figure}[htbp]
\centering
\begin{tikzpicture}[>=stealth, scale=5]

\def\xa{0}          
\def\xb{1.05}       
\def\ya{0}          
\def\yb{1.05}       
\def\th{\linewidth} 
\def\yalpha{0.58}   

\draw[->, line width=0.6pt] (\xa,\ya) -- (\xb,\ya) node[below right=2pt] {$r(\{i\}, t)$};
\draw[->, line width=0.6pt] (\xa,\ya) -- (\xa,\yb);

\node[below] at (\xa,\ya) { $\frac{\underline{v}}{\overline{v}}$};
\node[below] at (1,\ya) { 1};
\node[left]  at (\xa,1)   {\small 1};

\node[above] at (0.1,1.05) {\small CDF of $r(\{i\}, t)$ };

\draw[dashed] (1,\ya) -- (1,1.02);
\draw[dashed] (\xa+0.02,1) -- (1,1);
\draw[dashed] (\xa+0.02,\yalpha) -- (1,\yalpha);

\node[left] at (\xa,\yalpha)
    {$\displaystyle \frac{\sum_{j\not=i}\alpha_j}{\sum_{j=1}^{n}\alpha_j}$};

\draw[line width=0.7pt]
  (0,0) .. controls (0.35,0.25) and (0.7,0.45) .. (1,\yalpha);

\draw[line width=0.6pt] (1,\yalpha) circle[radius=0.018];

\fill (1,1) circle[radius=0.02];

\end{tikzpicture}
\caption{CDF of $r(\{i\}, t)$, conditional on $v(b^*,t) =\hat{v}$ (We don't care the convexity of the graph. It might not be concave as in graph)}
\label{fig:cdf of r_i}
\end{figure}

Also, conditional on $v(b^*,t) =\hat{v}$, there exists item$i$ such that $r(\{i\}, t)=1$. For such item$i$, Figure\ref{fig:cdf of r_i} implies that 
\begin{gather}\label{eq:q(i)}
    q(\{i\})\ge\frac{\sum_{j\not=i}\alpha_j}{\sum_{j=1}^{n}\alpha_j}
\end{gather} 
For other item$j\not=i$, notice that $q(\{j\})$ is uniformly distributed on $[0, \frac{\sum_{k\not=j}\alpha_k}{\sum_{k=1}^{n}\alpha_k}]$. Moreover, given that \eqref{eq:q(i)}, $q(\{1\}), \ldots, q(\{i-1\}), q(\{i+1\}), \ldots, q(\{n\})$ are all independent since $t_1\ldots, t_n$ are all independently distributed. 
Here, notice that, we could characterize the distribution of $q$  independent of $\hat{v}$. In Figure\ref{fig:distribution of q}, the distribution of $(q(\{1\}, q(\{2\})))$ is shown.

\begin{figure}[htbp]
\centering
\begin{tikzpicture}[>=stealth, scale=4]

\def\xa{0} \def\xb{1}
\def\ya{0} \def\yb{1}
\def\xaone{0.55} 
\def\yatwo{0.6}  

\draw[->] (\xa,\ya) -- (1.15,0) node[below] {$q(\{1\})$};
\draw[->] (\xa,\ya) -- (0,1.15) node[left] {$q(\{2\})$};

\draw[thick] (\xa,\ya) rectangle (\xb,\yb);

\draw[dashed] (\xaone,\ya) -- (\xaone,\yb);
\draw[dashed] (\xa,\yatwo) -- (\xb,\yatwo);

\node[below] at (0,0) {0};
\node[left]  at (0,1) {1};
\node[below] at (\xaone,0) {$\tfrac{\alpha_2}{\alpha_1+\alpha_2}$};
\node[left]  at (0,\yatwo) {$\tfrac{\alpha_1}{\alpha_1+\alpha_2}$};
\node[below] at (1,0) {1};

\fill[pattern=north east lines] (\xa,\yatwo) rectangle (\xaone,\yb); 
\fill[pattern=north east lines] (\xaone,\ya) rectangle (\xb,\yatwo); 

\node at (0.3,0.8) {A};
\node at (0.8,0.3) {B};

\end{tikzpicture}
\caption{Distribution of $(q(\{1\}, q(\{2\}))$ when $n=2$. A probability mass of $\frac{\alpha_2}{\alpha_1+\alpha_2}$ is uniformly distributed in rectangle A and a probability mass of $\frac{\alpha_1}{\alpha_1+\alpha_2}$ is uniformly distributed in rectangle B.}
\label{fig:distribution of q}
\end{figure}

Futhermore, we can specify joint distribution of $(q(\{1\}), \ldots, q(\{b^*\}))$ only by the joint distribution of $(q(\{1\}), \ldots, q(\{n\}))$. These imply that $(q(\{1\}), \ldots, q(\{n\}))$ and $\hat{v}$ are independent. 
\end{proof}
By Lemma\ref{lem:q and v are independent}, the results of Lemma\ref{lem: property of quantile} and Lemma\ref{lem: strassen quantile} hold in this setting.

Recall the definition of scale monotonicity.
\begin{dfn*}
A CDF $F:[\underline{v}, \overline{v}]\to [0,1]$ is scale-monotone if,
for all $\omega \in (\underline{v}/\overline{v},1)$, the function
\[
x \mapsto \frac{F(\omega x)}{F(x)}
\]
is non-increasing in $x \in (\underline{v}/w, \overline{v}]$.
\end{dfn*}

Now we present Theorem \ref{thm:when's alpha} again.

\begin{thm*}
    Assume $F$ is scale-monotone. For each $i=1, \ldots, n$, take any $\alpha_i>0$ and  consider the CDF $F^{\alpha_i}$ over $[\underline{v}, \overline{v}]$. Suppose $t_i$ follows CDF $ F^{\alpha_i}$ independently. Then for any number of items and for any $\alpha_1, \ldots, \alpha_n>0$, uniform pricing is optimal.
\end{thm*}

To prove Theorem \ref{thm:when's alpha}, we define quantile-scaled monotonicity and show that it's equivalent to scale monotonicity.

\begin{dfn}
    A strictly increasing CDF $F : [\underline{v}, \overline{v}] \to [0, 1]$ is quantile-scaled monotone if, for 
all $q \in (0, 1)$, the value $\omega \in (0, 1)$ which satisfies
\[
\frac{F(\omega y)}{F(y)} = q
\]
is non-decreasing in $y$ for $y\in(\underline{v}, \overline{v}]$
\end{dfn}
Since $F$ is continuous and strictly increasing, note that such $\omega$ is uniquely determined for each $q \in (0, 1)$ and $y \in (\underline{v}, \overline{v}]$

\begin{rmk}\label{lem: scale monotone iff quantile scaled monotone}
A strictly increasing CDF $F$ is scale-monotone if and only if $F$ is quantile-scaled monotone.
\end{rmk}

\begin{proof}
Suppose that $F$ is quantile-scaled monotone.
Fix any $w\in(\underline{v}/\overline{v},1)$ and take any
$x_1,x_2\in (\underline{v}/w, \overline{v}]$ with $x_1<x_2$.
Define
\[
q:=\frac{F(wx_2)}{F(x_2)}.
\]
By quantile-scaled monotonicity, there exists $\hat{w}\in (0,w]$ such that
\[
\frac{F(\hat{w}x_1)}{F(x_1)}=q.
\]
Since $\hat{w}\le w$ and $F$ is increasing, it follows that
\[
\frac{F(wx_1)}{F(x_1)}
\geq \frac{F(\hat{w}x_1)}{F(x_1)}
= \frac{F(wx_2)}{F(x_2)}.
\]

Conversely, suppose that $F$ is scale-monotone.
Take any $q\in (0,1)$ and $x_1,x_2\in (\underline{v},\overline{v}]$ with $x_1<x_2$.
Let $w_1,w_2\in (0,1)$ satisfy
\[
\frac{F(w_1x_1)}{F(x_1)}
=\frac{F(w_2x_2)}{F(x_2)}
=q.
\]
By scale-monotonicity,
\[
\frac{F(w_2x_2)}{F(x_2)}
=\frac{F(w_1x_1)}{F(x_1)}
\geq \frac{F(w_1x_2)}{F(x_2)},
\]
which implies $w_1\leq w_2$.
\end{proof}

By Lemma \ref{lem: scale monotone iff quantile scaled monotone}, proving Theorem \ref{thm:when's alpha} is equivalent to prove the following proposition.

\begin{prop}\label{prop:quantile scaled monotone}
    Assume $F$ is quantile-scaled monotone. For each $i=1, \ldots, n$, take any $\alpha_i>0$ and  consider the CDF $F^{\alpha_i}$ over $[\underline{v}, \overline{v}]$. Suppose $t_i$ follows CDF $ F^{\alpha_i}$ independently. Then for any number of items and for any $\alpha_1, \ldots, \alpha_n>0$, uniform pricing is optimal.
\end{prop}

\begin{proof}
    We will prove that the profile of relative values is stochastically non-decreasing in the positive part of the grand value when the seller offers \( n \) items, where the value of the \( i \)-th item follows \( F^{\alpha_i} \). If this is true, \textcite{haghpanah2021pure} implies that uniform pricing is optimal in that case.
   \par
    Let $F_{b}(\omega|y)\equiv \text{Pr}_{t_i\sim F^{\alpha_i},\forall i}[r(b,t)\leq \omega | v(b^*,t)=y]$ for all $\omega\in [0,1]$ and $y\in  \mathbb{R}$.
    We will prove that $F^{-1}_{b}(\hat{q}|y)$\footnote{This inverse function is generalized one, i.e., $F^{-1}_{b}(\hat{q}\mid y)\equiv\inf\qty{z\in \mathbb{R}|F_{b}(z|y)\geq \hat{q}}.$}  is non-decreasing in $y>0$ for almost all $\hat{q}\in [0,1]$ and all $b\in B$. This and Lemma \ref{lem: strassen quantile}  complete the proof. Since this condition trivially holds for $b=\emptyset$ or $b=b^*$, we take $b\in B\backslash\qty{\emptyset, b^*}$ arbitrarily. 
    \par
    First, we will calculate $F_{b}(\omega|y)$ for each $\omega\in [0,1]$. Let $\bar{b}=B\backslash b$, $A=\sum_{i=1}^n\alpha_i$, $A_b=\sum_{i\in b}\alpha_i$, and $A_{\bar{b}}=\sum_{i\notin b}\alpha_i$. Since $b\in B\backslash\qty{\emptyset, b^*}$ and $n\geq 2$, $\bar{b}$ is nonempty and $A_b, A_{\bar{b}}>0$.
    
    \begin{enumerate}
        \item If $\omega\in[0, 1)$,
            \begin{eqnarray*}
                F_{b}(\omega|y)&=&\text{Pr}[r(b,t)\leq \omega | v(b^*,t)=y]
                \\&=&\dfrac{\text{Pr}[v(b,t)\leq \omega y \text{ and } v(b^*,t)=y]}{\text{Pr}[v(b^*,t)=y]}
                \\&=&\dfrac{\text{Pr}[v(b,t)\leq \omega y \text{ and } v(\bar{b},t)=y]}{\text{Pr}[v(b^*,t)=y]}(\because \omega<1 )\\
                &=&\dfrac{\text{Pr}[v(b,t)\leq \omega y] \cdot\text{Pr}[v(\bar{b},t)=y]}{\text{Pr}[v(b^*,t)=y]}\quad(\because b \text{ and $\bar{b}$ are independent})
                \\&=& \dfrac{F(\omega y)^{A_b} \cdot A_{\bar{b}}f(y)F(y)^{A_{\bar{b}}-1}}{Af(y)F(y)^{A-1}}\quad(\because  \text{ each item $i$ is independently distributed according to $F^{\alpha_i}$})
                \\&=&\dfrac{A_{\bar{b}}}{A}\cdot  \qty(\dfrac{F(\omega y)}{F(y)})^{A_b}
            \end{eqnarray*}
            where third equality holds since $\omega<1$ implies:
            $$[v(b,t)\leq \omega y \text{ and } v(b^*,t)=y] \iff [v(b,t)\leq \omega y \text{ and } v(\bar{b},t)=y].$$
            Since $y>0$ and $F(t)$ is continuous and strictly increasing on $[0, \bar{v}]$, $F_{b}(\omega|y)$ is continuous and strictly increasing with respect to $\omega$ if $\omega\in[0, 1)$.  Notice $\lim_{\omega\rightarrow 1}F_{b}(\omega|y) = \dfrac{A_{\bar{b}}}{A}$ holds. 
        \item If $\omega=1$, then $F_{b}(\omega|y)=1$.
    \end{enumerate}
Therefore, 
\begin{gather}   
F_{b}(\omega|y) = 
    \begin{cases}
        \dfrac{A_{\bar{b}}}{A}\cdot  \qty(\dfrac{F(\omega y)}{F(y)})^{A_b} & \text{if } \omega\in[0,1) \\
        1 & \text{if } \omega=1
    \end{cases}
\end{gather}

Second, we will calculate $F^{-1}_{b}(\hat{q}|y)$ for each $\hat{q}\in [0,1]$.
    \begin{enumerate}
        \item If $\hat{q}\in[0, \frac{A_{\bar{b}}}{A})$, since $F_{b}(\omega|y)$ is continuous and strictly increasing with respect to $\omega$ if $\omega\in[0, 1)$, $F^{-1}_{b}(\hat{q}|y)$ can be calculated by simply inverting it. Thus, $F^{-1}_{\alpha}(\hat{q}|y)$ is a value  $\omega\in[0, 1)$ which satisfies:
            \begin{gather}\label{eq: quantile}
                \dfrac{A_{\bar{b}}}{A}\cdot  \qty(\dfrac{F(\omega y)}{F(y)})^{A_b} = \hat{q}\Leftrightarrow  \dfrac{F(\omega y)}{F(y)}=\qty(\dfrac{A}{A_{\bar{b}}}\hat{q})^{\frac{1}{A_b}}.
            \end{gather}
        \item If $\hat{q}\in[\frac{A_{\bar{b}}}{A}, 1]$, then $F^{-1}_{b}(\hat{q}|y)=\inf\qty{z\in \mathbb{R}|F_{b}(z|y)\geq \hat{q}}=1$.
    \end{enumerate}
Therefore, 
\begin{gather} 
    F_{b}^{-1}(\hat{q}|y) = 
    \begin{cases}
        \omega \text{ which satisfies \eqref{eq: quantile}} & \text{if } \hat{q} \in\left[0,   \dfrac{A_{\bar{b}}}{A}\right) \\
        1 & \text{if } \hat{q} \in\left[  \dfrac{A_{\bar{b}}}{A}, 1\right] \\
    \end{cases}
\end{gather} 
Let us change the variable as $q^{\prime}\equiv\qty(\dfrac{A}{A_{\bar{b}}}\hat{q})^{\frac{1}{A_b}}$ and let $G(q^{\prime}|y)\equiv F_{b}^{-1}(\hat{q}|y)$. We obtain 
\begin{gather*} 
    G(q^{\prime}|y) = 
    \begin{cases}
        \omega \text{ which satisfies } \dfrac{F(\omega y)}{F(y)}=q^{\prime} & \text{if } q^{\prime}\in[0, 1)\\
        1 & \text{if } q^{\prime}\in\left[1, \qty(\dfrac{A}{A_{\bar{b}}})^{\frac{1}{A_b}}\right]
    \end{cases}
\end{gather*} 
Now call the following condition as condition$(\star)$:
$$\text{$G(q^{\prime}|y)$ is non-decreasing in $y$ for almost all $q^{\prime}\in[0,1]$.}$$
Then 
\begin{gather}\label{eq:condition(*)}
    \text{Condition$(\star)$ holds} \iff \text{$F_{b}^{-1}(\hat{q}|y)$ is non-decreasing in $y$ for almost all $\hat{q}\in[0,1]$}
\end{gather}

is true. In fact,
\begin{itemize}
    \item "$\impliedby$" direction is obvious.
    \item As for "$\implies$" direction, suppose condition$(\star)$ holds. Then since $G(q^{\prime}|y)$ is a constant function when $q^{\prime}\in\left[1, \qty(\dfrac{A}{A_{\bar{b}}})^{\frac{1}{A_b}}\right]$, $G(q^{\prime}|y)$ is non-decreasing in $y$ for almost all $q^{\prime}\in\left[0,\qty(\dfrac{A}{A_{\bar{b}}})^{\frac{1}{A_b}}\right]$. Thus, $F_{b}^{-1}(\hat{q}|y)$ is non-decreasing in $y$ for almost all $\hat{q}\in[0,1]$.
\end{itemize}

But notice that condition$(\star)$ doesn't depend on any of the variables $b, \alpha$ or $n$. Also, condition$(\star)$ is equivalent to that $F$ is quantile-scaled monotone. Thus, if $F$ is quantile-scaled monotone, then for any $\alpha\gg 0$ and any $n\geq 2$, uniform pricing is optimal.

\end{proof}

By above proposition \ref{prop:quantile scaled monotone}, we could show Theorem \ref{thm:when's alpha}. Now we prove the following corollary \ref{cor:two item} appeared in Section \ref{sec:when's alpha}

\begin{cor*}
    Suppose that the profile of relative values is stochastically non-decreasing in positive values of the grand bundle under the setting with $n=2, t_1, t_2\overset{i.i.d.}{\sim}F$.  Then for any $n\ge2$ and $\alpha_1, \ldots, \alpha_n>0$,  uniform pricing remains to be optimal when the seller offers \( n \) items and $t_i$ follows CDF \( F^{\alpha_i} \) for each $i=1, \ldots, n$.
\end{cor*}
\begin{proof}
    This statement is already shown in the proof of Theorem \ref{thm:when's alpha}. In fact, suppose that the profile of relative values is stochastically non-decreasing in positive values of the grand bundle under the setting with $n=2, t_1, t_2\overset{i.i.d.}{\sim}F$. Then, by lemma\ref{lem: strassen quantile}, for any $b\subset B\setminus\{\emptyset, b^*\}$, $F_{b}^{-1}(\hat{q}|y)$ is non-decreasing in $y$ for almost all $\hat{q}\in[0,1]$ \footnote{ The definition of $F_{b}^{-1}(\hat{q}|y)$ is the same as in the proof of Theorem \ref{thm:when's alpha}}. By \eqref{eq:condition(*)}, condition$(\star)$ holds. Therefore, for any $n\ge2$ and $\alpha_1, \ldots, \alpha_n>0$,  uniform pricing remains to be optimal when the seller offers \( n \) items and $t_i$ follows CDF \( F^{\alpha_i} \) for each $i=1, \ldots, n$.
\end{proof}

Finally, we prove Proposition \ref{prop:hashimoto no zyuubunn zyoukenn}. Recall the definition of monotone elasticity.

\begin{dfn*}
    Assume density function $f$ is differentiable. We say $f$ satisfies monotone elasticity condition if the elasticity of $f$, 
    $$\frac{tf'(t)}{f(t)}$$
    is non-decreasing in $t$.
\end{dfn*}

\begin{prop*}
    Assume density function $f$ is differentiable and satisfies monotone elasticity condition. Then its CDF $F$ is quantile-scaled monotone.
\end{prop*}

\begin{proof}
    Assume density function $f$ is differentiable and satisfies monotone elasticity condition. Take any $q\in[0, 1)$. Let $\omega\in [0, 1]$ be the value that satisfies 
    \[
    \frac{F(\omega y)}{F(y)} = q
    \]
    Since $F$ is continuous and strictly increasing on $[0, \bar{v}]$, we will explicitly write $\omega$ as a function of $y$, $\omega:(0, \bar{v}]\to[0, 1]$ given by:
    $$\omega(y)\equiv\frac{F^{-1}(qF(y))}{y}$$
    where $F^{-1}:[0, 1]\to[0, \bar{v}] $ denotes inverse function of $F$. Thus, by inverse function theorem, the derivative of $\omega$ is well-defined on $(0, \bar{v})$ and given by
    $$\frac{d\omega(y)}{dy}=\frac{qF'(y)\cdot y-F^{-1}(qF(y))\cdot F'(F^{-1}(qF(y)))}{y^2\cdot F'(F^{-1}(qF(y)))}$$
    Let $h:(0, \bar{v})\to \mathbb{R}$ be its numerator: 
    $$h(y)\equiv qF'(y)\cdot y-F^{-1}(qF(y))\cdot F'(F^{-1}(qF(y)))$$
    Since $F(0)=0$, $F^{-1}(0)=0$. Thus, $\lim_{y\to +0}h(y)=0$. Moreover
    $$\frac{dh(y)}{dy}=qF'(y)\cdot\left(\frac{y\cdot F''(y)}{F'(y)}-\frac{F^{-1}(qF(y))\cdot F''(F^{-1}(qF(y)))}{F'(F^{-1}(qF(y)))}\right)$$
    Since $q\in[0, 1]$ and $F$ is strictly increasing, $y\ge F^{-1}(qF(y))$ holds. Thus, by monotone elasticity condition, $\frac{dh(y)}{dy}\ge0$ for all $y\in(0, \bar{v})$. This and $\lim_{y\to +0}h(y)=0$ implies $\frac{d\omega(y)}{dy}\ge0$ for all $y\in(0, \bar{v})$. Thus, $F$ is quantile-scaled monotone. 
\end{proof}

\subsection{Omitted proof in Subsection \ref{subsec:analysis of c^star} (Analysis of $c^*$)}\label{proof:analysis of c^star}
\begin{proof}[Proof of Proposition \ref{prop:uniqueness of c_n^*}]
Since $p$ is determined by $c$, let us denote $p$ as $p(c)$. By the derivation of $p(c)$, recall that 
\begin{align}
    p(c)>c \text{ for all } c>0\label{eq:p(c)>c}
\end{align}
holds. 
we want to characterize the value of $c$ that satisfies 
\begin{align}
    (n+1)(c+1-p(c))=c+1\iff p(c)=\frac{n}{n+1}(c+1)\label{eq:p=n(c+1)/(n+1)}
\end{align}
Let
\begin{gather*}
    P(c)\equiv p(c)- \frac{n}{n+1}(c+1)\\
    h(c,p)\equiv 1-(p-c)^n-np(p-c)^{n-1}
\end{gather*}
, then, by the derivation of $p(c)$, $h(c, p(c))=0$ holds for all $c>0 $.  The implicit function theorem tells us that 
\begin{gather}
    p^{\prime}(c)=-\dfrac{\pdv{h(c,p)}{c}}{\pdv{h(c,p)}{p}}=1- \dfrac{p-c}{2(p-c)+(n-1)p} \label{eq:p'(c)}
\end{gather}
Since 
\begin{gather*}
    P'(c)=\qty(1- \dfrac{p-c}{2(p-c)+(n-1)p} )-\dfrac{n}{n+1}=\dfrac{c(n-1)}{(n+1)(2(p-c)+(n-1)p)}>0,
    \end{gather*}
$P(c)$ is strictly increasing in $c\in (0,\infty)$.
Moreover,
\begin{gather*}
    P(n)=p(n)- \frac{n}{n+1}(n+1)=p(n)-n>0 \quad(\because  \eqref{eq:p(c)>c})
\end{gather*}
and 
\begin{gather*}
    P(0)=p(0)-\dfrac{n}{n+1}=\qty(\dfrac{1}{n+1})^\frac{1}{n} - \dfrac{n}{n+1} <0
\end{gather*}
because 
\begin{gather*}
    \qty(\dfrac{1}{n+1})^\frac{1}{n} < \dfrac{n}{n+1} \iff (n+1)^{n-1}<n^n 
\end{gather*}, which always hold when $n\ge2$.\footnote{$(n+1)^{n-1}<n^n$ holds for all $n\geq 2$. Let $f(x)=x\log x-(x-1)\log(x+1)$. Then $f(2)=\log\frac{4}{3}>0$. Further, $x \geq 2$, $f'(x)=\frac{2}{x+1}-\log(1+\frac{1}{x})> \frac{1}{x}-\log(1+\frac{1}{x}) \geq 0$. Therefore, if $x \geq 2$, $f(x)>0$.}
Therefore, $P(c)=0$ has a unique solution $c_n^*$ on $(0, n)$ and $P(c)\le0$ holds on $[0, c_n^*]$. Here, notice that
\begin{align*}
    \{c\in(0, n):P(c)=0\}
    &= \{c\in(0, n):p(c)=\frac{n}{n+1}(c+1)\}\\
    &= \{c\in(0, n):h\left(c, \frac{n}{n+1}(c+1)\right)=0\}\\
    &= \{c\in(0, n):(n+1)^n-(n-c)^n-n^2(c+1)(n-c)^{n-1}=0\}
\end{align*}
holds where the second equality holds since $p$ was uniquely determined as the solution of $h(c, p)=0$ on $(c, c+1)$ and $c<\frac{n}{n+1}(c+1)<c+1\iff c<n$. Thus, $c_n^*$ can be characterized as the unique solution of  
$$(n+1)^n-(n-c)^n-n^2(c+1)(n-c)^{n-1}=0$$ 
on $(0, n)$, too.
\end{proof}

\begin{proof}[Proof of Proposition \ref{prop:c^star strictly increasing}]
    Recall that $c_n^*$ satisfies  $(n+1)(c_n^*+1-p(c_n^*))-(c_n^*+1)=0$. By implicit function theorem,
    $$\frac{dc_n^*}{dn}=-\frac{c+1-p(c)}{(n+1)(1-p'(c))-1}=\frac{c+1-p(c)}{(n+1)p'(c)-n}$$
    By \eqref{eq:p'(c)}, 
    $$(n+1)p'(c)-n=(n+1)\left(1-\frac{p(c)-c}{2(p(c)-c)+(n-1)p(c)}\right)-n=\frac{(n-1)c}{2(p(c)-c)+(n-1)p(c)}$$
    Since we have $c<p(c)<c+1$, the denominator and numerator are both positive. Thus, $\frac{dc_n^*}{dn}>0$.
\end{proof}

\begin{proof}[Proof of Proposition \ref{prop:c_n infty}]
    It suffices to show $c_n^*>\frac{\ln n}{3}$. Since $p(c)$ satisfies $h(c, p(c))=0$, 
    \begin{align*}
        &1=(p(c)-c)^n+np(c)(p(c)-c)^{n-1}\\
        \implies &1>(n+1)(p(c)-c)^n\\
        \iff &p(c)<c+\left(\frac{1}{n+1}\right)^{\frac{1}{n}}
    \end{align*}
    Thus,
    $$P(c)<\frac{1}{n+1}c+\left(\frac{1}{n+1}\right)^{\frac{1}{n}}-\frac{n}{n+1}$$
    Here, let $f(n)=n\ln(n-\frac{\ln n}{3})-(n-1)\ln(n+1)$, then
    \begin{align*}
        f'(n)
        &=\frac{5n-1+(n-1)\ln n}{(3n-\ln n)(n+1)}-\ln(1+\frac{3+\ln n}{3n-\ln n})\\
        &>\frac{2n-4 -2\ln n}{(3n-\ln n)(n+1)}
    \end{align*}
    where the inequality holds since $x>\ln(1+x)$ for all $x>0$. Here, since $2n-4-2\ln n>0$ holds for all $n\ge 4$, $f(n)$ increases for $n\ge 4$. This and $f(4)>0$\footnote{Indeed, since $\frac{3}{4}>\ln2$,  $f(4)=4\ln(4-\frac{2}{3}\ln 2)-3\ln 5>\ln\frac{7^4}{2^4\cdot 5^3}>0$} implies $f(n)>0$ for all $n\ge4$. But notice that
    \begin{align*}
        f(n)>0 \iff \left(n-\frac{\ln n}{3}\right)^n>(n+1)^{n-1}\iff \frac{1}{n+1}c+\left(\frac{1}{n+1}\right)^{\frac{1}{n}}-\frac{n}{n+1} <0
    \end{align*}
    holds. Thus, $P(\frac{\ln n}{3})<0$ for all $n\ge4$. By proof of proposition \ref{prop:uniqueness of c_n^*}, it follows that $c_n^*>\frac{\ln n}{3}$ for all $n\ge4$
\end{proof}
\subsection{Proof of Proposition\ref{prop:uniform pricing is optimal if gamma_1>mu_+}}\label{proof:uniform pricing is optimal if gamma_1>mu_+}
\par

In this subsection, we will prove the stronger statement of Proposition\ref{prop:uniform pricing is optimal if gamma_1>mu_+}. That is, we will prove that we can provide a proper $\gamma$ if a proper measure is defined on a smaller region instead of considering the entire domain of $\gamma$, $[c,c+1]^n$. First, we will introduce an equivalent characterization of the pushed set defined in Definition\ref{dfn:push U} as follows:
\begin{rmk*}[Lemma \ref{lemma:y+t1} restate]
For all upper set $U\subset B$ with respect to $B$ and all $y\in W\cup B_Z$
    \begin{gather*}
        y+t\bm{1} \in U\cap B \text{ for some } t \in \mathbb{R}\iff y\in push(U)
    \end{gather*}
\end{rmk*}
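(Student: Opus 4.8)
The plan is to unpack the definition of $push(U)$ and characterize exactly when the equation $u^*(x)-u^*(y)=\|x-y\|_\infty$ can hold for $x\in U\cap B$ and $y\in W\cup B_Z$. Recall $u^*(z)=\max\{\max(z)-p,0\}$; since every point of $B$ has $\max(x)=c+1>p$ and every point of $W\cup B_Z$ has $\max(y)\ge p$, we have $u^*(x)=\max(x)-p=c+1-p$ and $u^*(y)=\max(y)-p$ on the relevant regions. Hence $u^*(x)-u^*(y)=c+1-\max(y)\ge 0$, and the defining equation becomes $\|x-y\|_\infty=c+1-\max(y)$. The whole lemma thus reduces to the elementary claim: for $y\in W\cup B_Z$ there exists $x\in U\cap B$ with $\|x-y\|_\infty=c+1-\max(y)$ if and only if $y+t\bm 1\in U\cap B$ for some $t\in\mathbb R$ (necessarily $t=c+1-\max(y)$, since adding $t\bm 1$ raises $\max$ by $t$ and we need the result to lie in $B$).

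For the ($\Leftarrow$) direction I would simply take $x=y+t\bm 1$ with $t=c+1-\max(y)\ge 0$; then $x\in U\cap B$ by hypothesis and $\|x-y\|_\infty=\|t\bm 1\|_\infty=t=c+1-\max(y)$, so $y\in push(U)$. For the ($\Rightarrow$) direction, suppose $x\in U\cap B$ satisfies $\|x-y\|_\infty=c+1-\max(y)=:t$. Since $x\in B$ we have $\max(x)=c+1$; let $j$ be a coordinate with $x_j=c+1$. Then $|x_j-y_j|=c+1-y_j\ge c+1-\max(y)=t$, and since $\|x-y\|_\infty=t$ this forces $y_j=\max(y)$, i.e. $j$ attains the max of $y$ as well, and $x_j-y_j=t$. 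For any other coordinate $i$, $|x_i-y_i|\le t$ gives $y_i+t\ge x_i$; combined with $y_i\le\max(y)=y_j$ we also get $y_i+t\le y_j+t=x_j=c+1=\max(x)\ge x_i$... the point is to show $x=y+t\bm 1$ exactly, or at least that $y+t\bm 1\in U\cap B$. The cleanest route: show $y+t\bm 1\le x$ coordinatewise is false in general, so instead argue $y+t\bm 1$ and $x$ both lie in $B$ and that $y+t\bm 1\le x$ OR use the upper-set property. Concretely, $x_i\le y_i+t$ for all $i$ means $x\le y+t\bm 1$; since $x\in U$ and $U$ is an upper set with respect to $B$, and $y+t\bm 1\in B$ (its max is $\max(y)+t=c+1$), we conclude $y+t\bm 1\in U\cap B$, as desired.

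The main obstacle is the last step of ($\Rightarrow$): one must be careful that $y+t\bm1$ actually lies in the cube $X=[c,c+1]^n$ (i.e. that no coordinate overshoots or undershoots), which uses $y\in W\cup B_Z$ so $\max(y)\in[p,c+1]$ hence $t=c+1-\max(y)\in[0,c+1-p]$, and that $y_i\ge c$ gives $y_i+t\ge c$ while $y_i+t\le \max(y)+t=c+1$; then the upper-set closure under $x\le y+t\bm1$ with $x\in U$ finishes it. I expect everything else to be routine once the reduction $\|x-y\|_\infty = c+1-\max(y)$ is in place; the reduction itself is the conceptual heart and follows directly from the explicit form of $u^*$ on $B$ and on $W\cup B_Z$.
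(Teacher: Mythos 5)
Your proposal is correct and follows essentially the same route as the paper: reduce $u^*(x)-u^*(y)=\|x-y\|_\infty$ to $\|x-y\|_\infty=c+1-\max(y)$, pick a coordinate $j$ with $x_j=c+1$ to force $y_j=\max(y)$, deduce $x\le y+t\bm 1$ with $t=c+1-\max(y)$, and invoke the upper-set property of $U$ within $B$; the converse by taking $x=y+t\bm 1$ directly is also identical. The only difference is cosmetic --- you isolate the reduction step up front and are slightly more explicit about $y+t\bm 1$ staying inside $[c,c+1]^n$.
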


\begin{proof}
($\Rightarrow$)  Suppose that there exists $t\in \mathbb{R}$ such that $y+t\bm{1} \in U\cap B$. By definition of $u^*$, there exists $i\in\{1, ..., n\}$ such that $u^*(y+t\bm{1})-u^*(y)=y_i+t-y_i=t=\|(y+t\bm{1})-y\|_{\infty}$.\\
($\Leftarrow$)
Suppose that there exists $x\in U\cap B$ such that $u^*(x)-u^*(y)=\|x-y\|_{\infty}$.
Take any $i$ such that $x_i=c+1$. Note that 
\[
c+1-\max(y)=u^*(x)-u^*(y)=\|x-y\|_\infty\geq |x_i-y_i|=c+1 -y_i.
\]
This implies that $y_i=\max(y)$.
Let $t=c+1-y_i$. Then $y+t\bm{1}\in B$ because $\max(y+t\bm{1})=y_i+t=c+1$. Here, $y+t\bm{1}\in U$ because $U$ is an upper set and for any $j$,
\begin{eqnarray*}
    y_j+t&\geq &(x_j-x_i+y_i)+t\quad(\because x_i-y_i=c+1-\max(y)=\|x-y\|_\infty\geq x_j-y_j)\\
    &=&(x_j-(c+1)+y_i)+t\\
    &=& x_j
\end{eqnarray*}
\end{proof}

Let
\begin{gather*}
B_i\equiv\qty{x\in[c,c+1]^n: x_i=c+1=\max(x)}
\end{gather*}
Thus $B=\cup_{i=1}^nB_i$. Then, by above lemma\ref{lemma:y+t1},  notice that the restriction of $\gamma_1$ to $B_i$ denoted as $\gamma_1^{B_i}$ can be expressed as follows:
\begin{gather*}
    \gamma_1(E\cap B_i)=\mu_-(\{x \in [c,c+1]^n | x_i=\max(x)\geq p\text{ and } x+t\bm{1} \in E \text{ for some } t \in \mathbb{R}\})
\end{gather*}
for all measurable $E\subset B_i$. Denote $\gamma_1^{B_i}(E)\equiv\gamma_1(E\cap B_i)$. Therefore, equivalent formation of the measure $\gamma_1$ is:
$$\gamma_1(E)=\sum_{i=1}^n\gamma_1^{B_i}(E)$$
for all measurable $E\subset B$. Now we are ready to prove Proposition\ref{prop:uniform pricing is optimal if gamma_1>mu_+}. To do that, it suffices to prove the following Proposition\ref{prop:uniform pricing is optimal if cvx dominance in B_1}:

\begin{prop}\label{prop:uniform pricing is optimal if cvx dominance in B_1}
It is optimal to sell all items at a certain price  if
\begin{gather*}
    \gamma_1^{B_1} \succeq_\text{cvx} \mu|_{B_1},
\end{gather*}
\end{prop}

Notice that if we show Proposition\ref{prop:uniform pricing is optimal if cvx dominance in B_1}, then Proposition\ref{prop:uniform pricing is optimal if gamma_1>mu_+} follows since $\gamma_1$ and $\mu_+$ has measure only on $\{(c, \dots, c)\}\cup B$ and by symmetric structure of $\gamma_1$ and $\mu_+$,
$$\gamma_1^{B_1}\succeq_{cvx}\mu_+|_{B_1}\implies \gamma_1\succeq_{cvx}\mu_+$$
holds.

\begin{proof}[Proof of Proposition \ref{prop:uniform pricing is optimal if cvx dominance in B_1}]
We will prove the existence of $\gamma$ which satisfies the above conditions under $u(x)=\max(0,\max(x)-p)$. \\
The proof is done by dividing $[c,c+1]^n$ into $n+1$ parts as follows:
\begin{gather*}
    A_0\equiv[c,p)^n \\
    A_i= \{x \in [c,c+1]^n | x_i = \max(x) \geq p \}
\end{gather*}
Note that 
\begin{gather*}
    [c,c+1]^n = \qty(\bigcup_{i=0}^n A_i)
\end{gather*}
The proof is done if we provide $\gamma^{A_i}, i=1,\ldots, n$ such that $\gamma^{A_i}$ is defined on $A_i\times A_i$ and 
\begin{align}
  \gamma_1^{A_i}-\gamma_2^{A_i}  &\succeq_\text{cvx} \mu|_{A_i} \tag{2'} \label{eq:cvx-cdn'}\\
  \int_{[c,c+1]^n}ud(\gamma_1^{A_i}-\gamma_2^{A_i})&=\int_{[c,c+1]^n}ud\mu|_{A_i} \tag{3'}\label{eq:mass-cdn'}\\
  u(x)-u(y)&=\|x-y\|_\infty~ \gamma^{A_i}(x,y)\text{-almost surely} \tag{4'}\label{eq:norm-cdn'}
\end{align}
This is because $\gamma$ is given using $\gamma^{A_i},i=1,\ldots, n$ as follows 
\begin{gather*}
\gamma(F)=\left\{
\begin{array}{cc}
    \gamma^{A_i}(F) & \text{ if $F\subset A_i\times A_i$} \\
    0 & \text{ otherwise}
\end{array}
\right.
\end{gather*}
This $\gamma$ satisfies \eqref{eq:cvx-cdn}, \eqref{eq:meas-cdn}, \eqref{eq:norm-cdn}.

\begin{enumerate}[(i)]
    \item Let us define $\gamma^{A_0}$ as
    \begin{gather*}
    \gamma^{A_0} (D)=\left\{
    \begin{array}{ll}
    1 & \text{ if $(\bm{c},\bm{c})\in D$}\\
    0 & \text{ otherwise}
    \end{array}
    \right.
    \end{gather*}
    where $\bm{c}=(c,\ldots,c)\in [c,c+1]^n$.
    Note that $\gamma^{A_0}(A_0^2)=0$.
    \begin{enumerate}
        \item This $\gamma^{A_0}$ satisfies \eqref{eq:cvx-cdn'} because
        \begin{gather*}
            \gamma^{A_0}_1 (F)-\gamma^{A_0}_2(F) = 0 \geq \mu|_{A_0}(F)
        \end{gather*}
        for all measurable $F\subset A_0$.
        \item This $\gamma^{A_0}$ satisfies \eqref{eq:mass-cdn'} because $u(x)=0$ for all $x\in A_0$.
        \item This $\gamma^{A_0}$ satisfies \eqref{eq:norm-cdn'} because $\gamma^{A_0}(F)=0$ if $(\bm{c},\bm{c})\notin F$.
    \end{enumerate}
    \item 
    Let $\gamma^{A_i},i=1,\ldots,n$ be the measures which satisfies
    \begin{gather*}
    \gamma^{A_i}_1(D) = \gamma_1^{B_i}(D\cap B_i)\\
    \gamma^{A_i}_2(D)=\mu|_{B_i}(D)
    \end{gather*}
    for all measurable $D\subset A_i$.
    \begin{enumerate}
        \item This $\gamma^{A_i}$ satisfies \eqref{eq:cvx-cdn'} because
        \begin{gather*}
\gamma_1^{B_1}\succeq_{cvx} \mu|_{B_1}.
        \end{gather*}
        
        \item This $\gamma^{A_0}$ satisfies \eqref{eq:mass-cdn'} because $u(x)$ is constant for all $x\in B_1$.
        \item This $\gamma^{A_0}$ satisfies \eqref{eq:norm-cdn'} because of the definition of $\gamma_1^{B_1}$.
    \end{enumerate}
    
\end{enumerate}

\end{proof}
\subsection{Proof of Sufficiency of Proposition \ref{prop:pushed measure}}\label{proof:sufficiency of thm pushed measure}






Following on from the previous subsection, we have to seek the condition under which
$\gamma_1^{B_1} \succeq_\text{cvx} \mu_+|_{B_1}$ holds. Before we do that, we will identify density functions of $\gamma_1^{B_1}$ and $\mu_{+}|_{B_1}$. It is straightforward that $\mu_{+}|_{B_1}$'s density is $f_{\mu_+}|_{B_1}(x)\equiv c+1$. On the other hand, the following lemma$\ref{lemma:density}$ shows the density of $\gamma_1^{B_1}$ is  written as 
\begin{eqnarray}
f^{B_1}(x)\equiv
  \begin{cases}
    (n+1)(c+1-p) & (x \in [2c+1-p,c+1]^{n-1} ) \\
    (n+1)(\min(x)-c)+c & (\text{otherwise})
  \end{cases}
\end{eqnarray}
\begin{rmk}\label{lemma:density}
    \begin{gather*}\gamma_1^{B_1}(E)=\int_{[c,c+1]^{n-1}} \mathbb{I}_E(c+1,x_{-1})f^{B_1}(x_{-1})dx_{-1}
\end{gather*} for all measurable set $E\subset B_1$.
\end{rmk}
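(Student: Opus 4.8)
The plan is to compute the measure $\gamma_1^{B_1}$ directly from its definition and the equivalent characterization provided by Lemma~\ref{lemma:y+t1}, and then read off its density. Recall that for a measurable upper set $U\subset B_1$ we have
\[
\gamma_1^{B_1}(U)=\mu_-\bigl(\{x\in[c,c+1]^n : x_1=\max(x)\ge p \text{ and } x+t\bm{1}\in U \text{ for some } t\in\mathbb{R}\}\bigr).
\]
Since $\mu_-$ consists of three pieces — a mass $-(n+1)$ spread uniformly on $X$, masses $-c$ on each face $\{x_i=c\}$, and (implicitly) the point mass is in $\mu_+$, not $\mu_-$ — only the bulk term $(n+1)\,dx$ on $X$ and the faces $\{x_i=c\}$ with mass $c$ each contribute to $\mu_-$; I would check which of these pieces lie in the "push-preimage" of a given target point $(c+1,x_{-1})\in B_1$. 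The preimage of $(c+1,x_{-1})$ under the map $x\mapsto x+t\bm{1}$ (restricted to the region $x_1=\max(x)\ge p$) is the line segment $\{(c+1-s,\,x_{-1}-s\bm{1}_{n-1}) : s\ge 0\}$ intersected with $X$ and with the constraint that the first coordinate stays the maximum and stays $\ge p$. The constraint $x_1\ge p$ caps $s$ at $s\le c+1-p$; the constraint that all coordinates stay in $[c,c+1]$ caps $s$ at $s\le \min(x_{-1})-c$; and the constraint $x_1=\max(x)$ is automatically satisfied since subtracting $s\bm{1}$ preserves the ordering and the first coordinate started largest. So the segment has length $\min\{c+1-p,\ \min(x_{-1})-c\}$.

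The next step is to differentiate: the density $f^{B_1}(x_{-1})$ at a point $(c+1,x_{-1})$ should be the sum of (a) the $(n+1)$-weighted length of this segment in the interior of $X$, which gives $(n+1)\min\{c+1-p,\ \min(x_{-1})-c\}$, plus (b) the contributions from where the segment \emph{hits} a bounding face $\{x_i=c\}$ carrying mass $c$. When $\min(x_{-1})-c < c+1-p$, i.e. $\max(x_{-1}) < \ldots$ — more precisely when $\min(x_{-1})-c$ is the binding cap, so $2c+1-p>\max\{$the relevant coordinates$\}$ in the notation of the statement — the segment terminates on the face $\{x_j=c\}$ for the coordinate $j$ achieving $\min(x_{-1})$, contributing an extra $+c$, giving total $(n+1)(\min(x_{-1})-c)+c$. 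When instead $c+1-p$ is the binding cap (the segment terminates because $x_1$ would drop below $p$, i.e. on $B_Z$, which carries no $\mu_-$ mass), we get exactly $(n+1)(c+1-p)$. Matching the two cases to the regions "$[2c+1-p,c+1]^{n-1}$" versus "otherwise" is just the observation that $\min(x_{-1})-c\ge c+1-p \iff \min(x_{-1})\ge 2c+1-p \iff x_{-1}\in[2c+1-p,c+1]^{n-1}$. Then one verifies the claimed integral formula holds on all measurable $E$, not just upper sets, by a standard monotone-class / Dynkin argument since the upper sets generate the Borel $\sigma$-algebra on $B_1$ and both sides are measures.

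\textbf{Main obstacle.} The delicate point is the careful bookkeeping of \emph{which} boundary face the push-segment terminates on and hence which surface mass ($-c$ on $\{x_i=c\}$, or none on $B_Z=\{x_1=p\}$, or none on the interior-but-endpoint of the segment) gets picked up — and confirming that the segment never exits through a face $\{x_i=c+1\}$ for $i\ne 1$ (it cannot, since we are moving in the $-\bm{1}$ direction from a point whose coordinates are all $\le c+1$) and that it always keeps $x_1=\max(x)$. I also need to be careful that the $+1$ point mass at $(c,\dots,c)$ belongs to $\mu_+$, so it does not contribute to $\mu_-$ here; and that the single point where the segment could coincide with the corner is $\mu_-$-null. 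Once the segment-length accounting is pinned down, differentiating with respect to the target coordinates $x_{-1}$ to extract $f^{B_1}$ is routine, and the passage from upper sets to arbitrary measurable sets is a formality.
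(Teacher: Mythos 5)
Your proposal is correct and reaches the same density $f^{B_1}$ by a genuinely different route from the paper. The paper verifies the identity on the generating family of corner rectangles $L=\{c+1\}\times[c,c+l_2]\times\cdots\times[c,c+l_n]$, subdividing each $L$ according to which coordinate is minimal and matching the two sides by explicit iterated integrals (with a case split on whether $c+1-l_n\ge p$). You instead disintegrate $\mu_-$ directly along the fibers of the affine map $y\mapsto y_{-1}+(c+1-y_1)\bm{1}$ underlying the push: in the coordinates $\bigl(y_1,\;y_{-1}-y_1\bm{1}\bigr)$ the bulk piece $(n+1)\,dy$ has unit Jacobian and contributes $(n+1)$ times the fiber length $\min\{c+1-p,\,\min(x_{-1})-c\}$, while each boundary piece $c\cdot\mathrm{Leb}\big|_{\{y_i=c\}}$ pushes forward (again with unit Jacobian) to a density $c$ on $\{x_{-1}: x_i=\min(x_{-1})\le 2c+1-p\}$; summing over $i$ reproduces the two-case formula for $f^{B_1}$ exactly. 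Your derivation is shorter and makes the geometric origin of both terms of $f^{B_1}$ transparent, at the price of leaving the Jacobian bookkeeping (why the face pushforward carries density exactly $c$ with no Jacobian correction, and why passing from upper sets to arbitrary Borel sets is legitimate because the push is a genuine map on $\{x_1=\max(x)\ge p\}$) stated rather than fully spelled out; the paper's rectangle computation is longer but uses only elementary iterated integrals. Your ``obstacle'' paragraph correctly isolates the only delicate points --- the faces $\{x_i=c+1\}$ and the atom at $(c,\dots,c)$ belong to $\mu_+$ rather than $\mu_-$ and so never enter, the segment cannot exit through $\{x_i=c+1\}$ while moving in the $-\bm{1}$ direction, and ties in $\min(x_{-1})$ form a $\mu_-$-null set --- and I see no gap.
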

\begin{proof}
    Let $\mathbb{L}$ be the set of all rectangles on $B_1$, i.e. $\mathbb{L}\equiv\{\{c+1\}\times[a_2, b_2]
    \times\cdots\times[a_n, b_n]\subset B_1:a_2\le b_2\dots a_n\le b_n\}$
    Since product $\sigma$-algebra on $B_1$ is generated by $\mathbb{L}$, it suffices to show that for each $L\in\mathbb{L}$, 
    \begin{align}
        \gamma_1^{B_1}(L)=\int_{[c,c+1]^{n-1}}\mathbb{I}_L(c+1,x_{-1})f^{B_1}(x_{-1})dx_{-1}\label{eq:rectangle}
    \end{align}
    holds. Here define $\mathbb{L}^*$ as the set of all rectangles on $B_1$ which has one corner at $x=(c+1, c, \dots, c)$, i.e. $\mathbb{L}^*\equiv\{\{c+1\}\times [c, c+l_2]\times\cdots\times[c, c+l_n]:l_2, \dots l_n\in[0, 1]\}$. Then, notice that each $L=\{c+1\}\times[a_2, b_2]
    \times\cdots\times[a_n, b_n]\in\mathbb{L}$ can be expressed as the finite union and intersection of elements of $\mathbb{L}^*$ as follows:\; if you let $L_0=\{c+1\}\times [c, b_2]\times\cdots\times[c, b_n]$ and $ L_i=\{x\in B_1: x_i\in[c, a_i]\}$ for each $i=2, \dots, n$, then 
    $$L=L_0\bigcap (\cap_{i=2}^n(B_1\setminus L_i))$$
    Thus, instead of showing \eqref{eq:rectangle} for each $L\in\mathbb{L}$, it suffices to show that $\eqref{eq:rectangle}$ for each $L=\{c+1\}\times [c, c+l_2]\times\cdots\times[c, c+l_n]\in\mathbb{L}^*$.
    By symmetry, we can assume $l_2\ge l_3\cdots\ge l_n$ without loss of generality.
    We will divide it into two cases.
    \begin{enumerate}
        \item When $c+1-l_n\ge p$\par
        Let
        $$L_i\equiv L\cap\{x:x_i=\min(x)\},$$
        then, $(L_2, \dots, L_n)$ is a partition of $L$, i.e. these are mutually disjoint and $L=\cup_{i=2}^nL_i$. Here,
        \begin{align*}
            \gamma_1^{B_1}(L)
            &=\mu_-(\{x \in [c,c+1]^n | x_1=\max(x)\geq p, \text{ and } x+t\bm{1} \in L\text{ for some }t \in \mathbb{R} \})\\
            &=\int_{0}^{l_n}(l_2-t)\cdots(l_n-t)(n+1)dt\\
            &\quad+\sum_{i=2}^n\mu_-(\{x \in [c,c+1]^n | x_1=\max(x)\geq p, x_i=c, \text{ and } x+t\bm{1} \in L_i\text{ for some } t \in \mathbb{R} \})
        \end{align*}
        As for the first term, we can calculate it as
        \begin{align*}
            &\int_{0}^{l_n}(l_2-t)\cdots(l_n-t)(n+1)dt\\
            &=[t(l_2-t)\cdots(l_n-t)]_{0}^{l_n}+\sum_{i=2}^n
            (n+1)\int_0^{l_n}(l_2-t)\cdots(l_{i-1}-t)(l_{i+1}-t)\cdots(l_n-t)t\;dt\\
            &=\sum_{i=2}^n
            (n+1)\int_0^{l_n}(l_2-t)\cdots(l_{i-1}-t)(l_{i+1}-t)\cdots(l_n-t)t\;dt
        \end{align*}
        As for the second term, since $L_i=L\cap \{x:x_i=min(x)\}$, notice that
        \begin{align*}
            &\{x \in [c,c+1]^n | x_1=\max(x)\geq p, x_i=c, \text{ and } x+t\bm{1} \in L_i\text{ for some }t \in \mathbb{R}\}\\
            &=\{x \in [c,c+1]^n | x_1=\max(x)\geq p, x_i=c, \text{ and } x+(c+1-x_1)\bm{1} \in L_i\}\\
            &=\bigcup_{u=c+1-l_n}^{c+1}\{x \in [c,c+1]^n |x_1=u,  x_i=c, \text{ and } x+(c+1-u)\bm{1} \in L_i\}
        \end{align*}
        Notice that for each $u\in[c+1-l_n, c+1]$, 
        \begin{align*}
            &\{x+(c+1-u)\bm{1}\in L_i:x\in[c, c+1]^n, x_1=u, x_i=c\}\\
            &=\{x\in L_i:x_1=c+1, x_i=2c+1-u, x_j\in[2c+1-u, c+l_j] \text{ for all }j\not=i\}
        \end{align*}  
        holds. Thus, each component of the second term can be calculated as
        \begin{align*}
            &\mu_-(\{x \in [c,c+1]^n | x_1=\max(x)\geq p, \text{ and } x+t\bm{1} \in L\text{ for some }t \in \mathbb{R}\})\\
            &=\int_{c+1-l_n}^{c+1}\left(\prod_{j\not=1, i}\left(c+l_j-(c+1-u)-((2c+1-u)-(c+1-u))\right)\right)  du\\
            &=\int_{c+1-l_n}^{c+1}\left(\prod_{j\not=1, i}\left(l_j-(c+1-u)\right)\right)  du\\
            &=\int_c^{c+l_n}\prod_{j\not=1, i}(c+l_j-v)dv
        \end{align*}
        where I changed the variable as $u=-v+2c+1$ in the final equation.  On the other hand,
        \begin{align*}
            \int_{[c,c+1]^{n-1}}\mathbb{I}_L(c+1,x_{-1})f^{B_1}(x_{-1})dx_{-1}
            &=\int_{L}(n+1)(\min(x)-c)+c\;dx_{-1}\\
            &=\sum_{i=2}^n \int_{L_i}(n+1)(\min(x)-c)+c\;dx_{-1}\\
            &=\sum_{i=2}^n \int_{L_i}(n+1)(x_i-c)\;dx_{-1}+c\int_{L_i}dx_{-1}
        \end{align*}
        As for the first term, for each $i\ge2$
        \begin{align*}
            \int_{L_i}(n+1)(x_i-c)\;dx_{-1}
            &=(n+1)\int_{x_i}^{c+l_2}\cdots\int_{x_i}^{c+l_{i-1}}\int_{c}^{c+l_n}\int_{x_i}^{c+l_{i+1}}\cdots\int_{x_i}^{c+l_n}(x_i-c)\;dx_{-1}\\
            &=(n+1)\int_0^{l_n}(l_2-t)\cdots(l_{i-1}-t)(l_{i+1}-t)\cdots(l_n-t)t\;dt
        \end{align*}
        where I transformed the integrating variable as $x_i=t+c$ in the final equation. 
        As for the second term, since $L_i= L\cap\{x:x_i=\min(x)\},$ for each $i\ge2$
        $$\int_{L_i}dx_{-1}=\int_c^{c+l_n}\prod_{j\not=1, i}(c+l_j-x_i)dx_i$$
        Thus, in this case, we could show $\eqref{eq:rectangle}$.
    
        
        \item When $c+1-l_n< p$\\
        For each $S\subset N_{-1}\equiv\{2, 3, \dots, n\}$, let 
        $$L_S\equiv\{x\in\mathbb{R}^{n+1}:x_1=c+1, x_i\in[c, 2c+1-p] \text{ for all $i\in S$ and } x_j\in(2c+1-p, c+l_j]\text{ for all $j\in N_{-1}\backslash S$}\}$$
        then, $(L_S)_{S\subset N_{-1}}$ is a partition of $L$. 
        To prove $\eqref{eq:rectangle}$, it suffices to show
        $$\gamma_1^{B_1}(L_{\emptyset})=\int_{[c,c+1]^{n-1}}\mathbb{I}_{L_{\emptyset}}(c+1,x_{-1})f^{B_1}(x_{-1})dx_{-1}$$
        The reason is as follows: The left-hand side of $\eqref{eq:rectangle}$ is
        $$\gamma_1^{B_1}(L)            =\sum_{S\subset N_{-1}}\gamma_1^{B_1}(L_S)$$
        and the right hand side of $\eqref{eq:rectangle}$ is    
        $$\int_{[c,c+1]^{n-1}} \mathbb{I}_L(c+1,x_{-1})f^{B_1}(x_{-1})dx_{-1}\\
            =\sum_{S\subset N_{-1}}\int_{[c,c+1]^{n-1}} \mathbb{I}_{L_S}(c+1,x_{-1})f^{B_1}(x_{-1})dx_{-1}$$
        Notice that by the result from case $(1)$, 
        \begin{align*}                          
            \gamma_1^{B_1}(L_{N_{-1}})
            &=\int_{[c,c+1]^{n-1}} \mathbb{I}_{L_{N_{-1}}}(c+1,x_{-1})f^{B_1}(x_{-1})dx_{-1}\\      \gamma_1^{B_1}(L_{N_{-1}})+\gamma_1^{B_1}(L_S)
            &=\int_{[c,c+1]^{n-1}} \left(\mathbb{I}_{L_{N_{-1}}}(c+1,x_{-1})+\mathbb{I}_{L_S}(c+1,x_{-1})\right)f^{B_1}(x_{-1})dx_{-1} \\  &\quad\text{ for all $S\in2^{N_{-1}}\backslash \{N_{-1}, \emptyset\}$}
        \end{align*}
        holds. Thus, to prove $\eqref{eq:rectangle}$ it remains to show that 
        $$\gamma_1^{B_1}(L_{\emptyset})            
        =\int_{[c,c+1]^{n-1}}\mathbb{I}_{L_{\emptyset}}(c+1,x_{-1})f^{B_1}(x_{-1})dx_{-1}$$
        But each of these are:
    \begin{eqnarray*}
        &\quad&\gamma_1^{B_1}(L_{\emptyset})   \\      
        &\quad&\quad=\mu_-(\{x \in [c,c+1]^n | x_1=\max(x)\geq p, \text{ and } x+t\bm{1} \in L_{\emptyset} \text{ for some }t \in \mathbb{R}~\})\\
        &\quad&\quad=(c+l_2-(2c+1-p)) \cdots (c+l_n-(2c+1-p))(c+1-p)(n+1)\\
        &\quad&\int_{[c,c+1]^{n-1}}\mathbb{I}_{L_{\emptyset}}(c+1,x_{-1})f^{B_1}(x_{-1})dx_{-1}\\
        &\quad&\quad=(c+l_2-(2c+1-p)) \cdots (c+l_n-(2c+1-p))(c+1-p)(n+1)
    \end{eqnarray*}
        So, we could also show $\eqref{eq:rectangle}$ in this case.
    \end{enumerate}   
\end{proof}
For each $k\in \mathbb{N}$, define $D_k\subset\mathbb{R}^k$ as follows:
\begin{gather*}
   D_k\equiv\{x\in[c, c+1]^k: x_1=c+1\ge x_2\ge\cdots\ge x_{k}\}
\end{gather*}
Using above $D_k$, consider the restriction of $\gamma_1^{B_1}$ to $D_n$ (denoted as $\gamma_1^{D_n}$) and its corresponding density function (denoted as $f^{D_n}$) as follows:
\begin{gather*}
   \gamma_1^{D_n}\equiv\gamma_1^{B_1}|_{D_n}\\
   f^{D_n}\equiv f^{B_1}|_{D_n}
\end{gather*}
Notice that $\gamma_1^{D_n}$ satisfies the condition in Lemma \ref{lemma:Kash and Frongillo} in $D_n$. Since density functions $f_{\mu_+}|_{B_1}$ and $f^{B_1}$ have symmetric structure, to show $\gamma_1^{B_1} \succeq_\text{cvx} \mu_+|_{B_1}$, it suffices to show $\gamma_1^{D_n} \succeq_\text{cvx} \mu_+|_{D_n}$.
Now we are ready to show the simple condition under which
$\gamma_1^{D_n} \succeq_\text{cvx} \mu_+|_{D_n}$ holds:
\begin{prop}\label{prop:iff cdn for cvx dominance}For all $n\geq 1$,
$$\gamma_1^{D_n} \succeq_\text{cvx} \mu_+|_{D_n} \iff (n+1)(c+1-p) \geq c+1 $$
\end{prop}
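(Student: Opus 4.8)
\emph{Plan.} I would prove the equivalence by collapsing the $(n-1)$‑dimensional convex‑order question on $D_n$ to a one‑dimensional one. Write $\psi(t)\equiv f^{D_n}(t)-(c+1)$ for the density, with respect to $(n-1)$‑dimensional Lebesgue measure on $D_n$, of the signed measure $\gamma_1^{D_n}-\mu_+|_{D_n}$; by Lemma~\ref{lemma:density} this depends only on the smallest coordinate $x_n=t$, and explicitly $\psi(t)=(n+1)(t-c)-1$ for $t<2c+1-p$ and $\psi(t)=(n+1)(c+1-p)-(c+1)$ for $t\ge 2c+1-p$. Write $w(t)\equiv(c+1-t)^{n-2}$, which is proportional to the $(n-2)$‑dimensional volume of the fiber $\Delta_t\equiv\{(x_2,\dots,x_{n-1}):c+1\ge x_2\ge\cdots\ge x_{n-1}\ge t\}$. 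Two facts are recorded first. (a) Both $\gamma_1^{D_n}$ and $\mu_+|_{D_n}$ have density depending only on $x_n$, so both induce the uniform law on every fiber $\Delta_t$: the whole difference between them lives in the marginal law of $x_n$. (b) They have equal total mass, i.e. $\int_c^{c+1}\psi(t)w(t)\,dt=0$; this is the one place the equation $h(c,p)=0$ defining the bundle price $p$ is used. (For $n=1$ the statement is trivial, $D_1$ being a single point carrying point masses of equal weight; assume $n\ge 2$ below.)

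\emph{Reduction.} Fix $u\in\mathcal{U}(X)$ and integrate out $x_2,\dots,x_{n-1}$ over $\Delta_t$:
\[
\int_{D_n}u\,d\bigl(\gamma_1^{D_n}-\mu_+|_{D_n}\bigr)=\int_c^{c+1}\psi(t)\,\Phi_u(t)\,dt,\qquad
\Phi_u(t)\equiv\int_{\Delta_t}u(c+1,x_2,\dots,x_{n-1},t)\,dx_2\cdots dx_{n-1}.
\]
I claim $\Phi_u=w\cdot\Psi_u$ for a convex non‑decreasing $\Psi_u$, and after adding a constant to $u$ (harmless since $\int\psi w=0$) we may take $\Psi_u\ge 0$. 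To see this, apply on each fiber the change of variables $x_i=t+(c+1-t)y_i$, which carries $\Delta_t$ onto the fixed simplex $S=\{1\ge y_2\ge\cdots\ge y_{n-1}\ge 0\}$ with Jacobian $w(t)$ and makes the argument of $u$ an affine, coordinate‑wise non‑decreasing function of $t$ for each fixed $(y_2,\dots,y_{n-1})$; it then suffices to verify the claim for $u$ affine non‑decreasing and for the generating rays $u(x)=\max(a\cdot x-b,0)$ with $a\ge 0$, for which it is a direct computation, and to pass to general non‑negative $u\in\mathcal U(X)$ by the usual limit argument. Hence $\gamma_1^{D_n}\succeq_{\text{cvx}}\mu_+|_{D_n}$ iff $\int_c^{c+1}\psi(t)w(t)\Psi(t)\,dt\ge 0$ for every convex non‑decreasing $\Psi\ge 0$ on $[c,c+1]$. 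Writing any such $\Psi$ as $\Psi(t)=\Psi(c)+\Psi'(c^+)(t-c)+\int_{(c,c+1]}(t-s)_+\,d\nu(s)$ with $\Psi(c),\Psi'(c^+)\ge 0$ and $\nu$ a non‑negative measure, and using $\int\psi w=0$, this is equivalent to
\[
P(s)\equiv\int_s^{c+1}\psi(t)\,w(t)\,(t-s)\,dt\ \ge\ 0\qquad\text{for all }s\in[c,c+1].
\]

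\emph{Finishing both directions.} It remains to show $P\ge 0$ on $[c,c+1]$ holds exactly when $(n+1)(c+1-p)\ge c+1$. For necessity: if $(n+1)(c+1-p)<c+1$, then $\psi<0$ throughout the top interval $[2c+1-p,\,c+1]$, so $P(2c+1-p)<0$; equivalently $u(x)=\max(x_n-(2c+1-p),0)\in\mathcal U(X)$ already witnesses the failure of $\succeq_{\text{cvx}}$. For sufficiency: assume $(n+1)(c+1-p)\ge c+1$. Since $c+1\ge 1$, this forces $c+\tfrac1{n+1}\le 2c+1-p$, so $\psi$ has a single sign change, being $<0$ on $[c,\,c+\tfrac1{n+1})$ and $\ge 0$ on $[c+\tfrac1{n+1},\,c+1]$. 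Now $P(c+1)=0$, $P'(c+1)=0$, $P'(c)=-\int_c^{c+1}\psi w=0$ by (b), and $P''=\psi w$. Hence $P$ is convex on $[c+\tfrac1{n+1},\,c+1]$, so the endpoint data at $c+1$ give $P\ge 0$ there, in particular $P(c+\tfrac1{n+1})\ge 0$; and $P$ is concave on $[c,\,c+\tfrac1{n+1}]$ with $P'(c)=0$, hence non‑increasing, so $P(s)\ge P(c+\tfrac1{n+1})\ge 0$ there as well. Thus $P\ge 0$ on $[c,c+1]$, and $\succeq_{\text{cvx}}$ holds.

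\emph{Main obstacle.} I expect the crux to be the reduction claim $\Phi_u=w\cdot\Psi_u$: that averaging a multivariate convex non‑decreasing function over the nested simplices $\Delta_t$ leaves, after dividing by the fiber volume $w(t)$, a function of $t$ that is again convex and non‑decreasing. This is what makes the problem genuinely one‑dimensional; the change‑of‑variables book‑keeping on $\Delta_t$ and the identification of a generating set of test functions are the real work there. A secondary point, easy to overlook, is the mass‑balance identity $\int_c^{c+1}\psi w=0$: it gives $P'(c)=0$, which powers the concavity argument on the bottom interval, and it is the only place where $h(c,p)=0$ enters. (The paper instead obtains sufficiency by an induction on $n$ that builds the required dilation measure directly; the route above trades that induction for the fiber change of variables.)
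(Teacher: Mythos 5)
Your proof is correct, and the $\Leftarrow$ direction follows a genuinely different route from the paper's. The paper proves sufficiency by induction on dimension: it isolates the notion of an $X$-ordered density (Definition~\ref{dfn:D-ordered}), shows via its cylinder lemma (Lemma~\ref{lemma:cylinder}) that an $X$-ordered density with nonnegative total mass convexly dominates $0$, and builds this up one coordinate at a time (Proposition~\ref{prop:D-ordered->cvx dominance}, Lemma~\ref{lemma:D_k-ordered->D_k-1-ordered}). You instead collapse the $(n-1)$-dimensional problem directly to dimension one by averaging over the nested fibers $\Delta_t$, exploit the one-dimensional integral representation $\Psi(t)=\Psi(c)+\Psi'(c^{+})(t-c)+\int(t-s)_{+}\,d\nu(s)$ of a convex nondecreasing test function, and then finish with an elementary study of $P(s)=\int_s^{c+1}\psi w\,(t-s)$ via $P''=\psi w$, where the single sign change of $\psi$ and the endpoint identities $P(c+1)=P'(c+1)=P'(c)=0$ close the argument. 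The trade-off is that you replace the paper's multivariable induction by the fiber change of variables; the latter arguably makes the sufficiency mechanism more transparent (one sees at once where the sign change of $\psi$ and the mass balance enter), while the paper's $X$-ordered/cylinder machinery is formulated more generally and reused as a standalone lemma. Two remarks. First, your reduction claim ``$\Phi_u=w\cdot\Psi_u$ with $\Psi_u$ convex nondecreasing'' can be stated without the detour through generating rays and a limit argument: for each fixed $y\in S$, the map $t\mapsto\bigl(c+1,\,t+(c+1-t)y_2,\ldots,\,t+(c+1-t)y_{n-1},\,t\bigr)$ is affine with componentwise nonnegative slopes, so its composition with the convex, coordinatewise nondecreasing $u$ is convex nondecreasing in $t$, and integrating over the $t$-independent simplex $S$ preserves both properties; this is the whole argument. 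Second, the mass-balance identity $\int_c^{c+1}\psi w=0$ that you use to kill the constant term and to get $P'(c)=0$ is not automatic; it is exactly the optimality equation $h(c,p)=1-(p-c)^n-np(p-c)^{n-1}=0$ in disguise (equivalently $(p-c)^{n-1}\bigl((n+1)p-c\bigr)=1$), and it deserves an explicit verification. The paper in fact also relies on this identity silently: Proposition~\ref{prop:D-ordered->cvx dominance} assumes $\alpha(X)\ge 0$, which for $\alpha=\gamma_1^{D_n}-\mu_+|_{D_n}$ is precisely the mass balance, yet the text only checks the $D_n$-orderedness of the density. So your highlighting of this step is a genuine improvement in book-keeping. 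The $\Rightarrow$ direction is essentially the same in spirit as the paper's, which uses the witness $v(x)=\max\bigl(\sum_i x_i-(nc+n+c-p),0\bigr)$; your $\max(x_n-(2c+1-p),0)$ is an equally valid and slightly simpler choice on $D_n$.
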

\begin{proof}
    $\implies$ part is obvious by considering contraposition. Indeed, suppose $(n+1)(c+1-p) < c+1 $.Recall we defined
    \[
    v(x) \equiv \max\qty(\sum_{i=1}^n x_i - (nc+n+c-p), 0).
    \]  
    Since  
    \[
    v(c+1, \dots, c+1, 2c+1-p) = 0,
    \]  
    it follows that \( v(x) < 0 \) whenever \( \max(x) < 2c + 1 - p \), leading to  
    \begin{gather*}
        \int_X v d\gamma_1 < \int_X v d\mu_{+} 
    \end{gather*}
    In particular,  we have
    \[
    \int_X v d\mu_+|_{D_n} >\int_X vd\gamma_1^{D_n}.
    \]
    Since $v\in \mathcal{U}(X)$, $\gamma_1^{D_n} \succeq_\text{cvx} \mu_+|_{D_n}$ doesn't hold.
    \par
    $\impliedby$ part will be shown by the following Proposition \ref{prop:D-ordered->cvx dominance}  
\end{proof}
\begin{dfn}\label{dfn:D-ordered}
    For $k\ge2$ and $X\subset\mathbb{R}^{k-1}$, we say the function $f:X\to\mathbb{R}$ is \textbf{X-ordered} if it satisfies the following two conditions:
    \begin{itemize}
        \item $f(x)=f(x')\;$ if $\min(x)=\min(x')$
        \item $\text{ there exists } r\in\mathbb{R} \text{ such that }  f(x)(\min(x)-r)\ge0$ for all $x\in X$
    \end{itemize}

\end{dfn}

\begin{prop}\label{prop:D-ordered->cvx dominance}
    Let $\alpha$ be a signed measure on $X$ with a density function $f$. If $f$ is $X$-ordered and $\alpha(X)\geq 0$, then $\alpha\succeq_{\text{cvx}} 0$.
\end{prop}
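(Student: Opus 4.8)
The plan is to run the classical single-crossing argument, but carried out inside the $(k-1)$-dimensional domain $X$: replace an arbitrary $u\in\mathcal{U}(X)$ by a ``radialized'' comparison function that depends on $x$ only through $\min(x)$ and through the cross-section of $X$ at height $\min(x)$, and for which the relevant integral can be computed exactly. Since $f$ is $X$-ordered, write $f(x)=\tilde f(\min(x))$ for some $\tilde f:[c,c+1]\to\mathbb{R}$; the second bullet of Definition~\ref{dfn:D-ordered} says precisely that $\tilde f$ is single-crossing, i.e. there is $r\in[c,c+1]$ with $\tilde f\le 0$ on $[c,r]$ and $\tilde f\ge 0$ on $[r,c+1]$. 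The goal is to show $\int_X u(x)\,\tilde f(\min(x))\,dx\ge 0$ for every $u\in\mathcal{U}(X)$.

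For $t\in[c,c+1]$ let $S_t\equiv\{x\in X:\min(x)=t\}$ be the cross-section, and let $\Psi_{t,t'}\colon S_t\to S_{t'}$ be the affine bijection that stretches the interval $[t,c+1]$ linearly onto $[t',c+1]$ in every coordinate. On the sets $D_k$ under consideration — where $\min(x)$ is attained at the last coordinate and all coordinates lie in $[\min(x),c+1]$ — this map is well defined, preserves the coordinatewise partial order, and has Jacobian $\big(\tfrac{c+1-t'}{c+1-t}\big)^{k-2}$. Let $\rho_r\colon X\to S_r$ be the resulting projection $\rho_r(x)=\Psi_{\min(x),r}(x)$. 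A direct computation gives $\rho_r(x)_j-x_j=(r-\min(x))\,\tfrac{c+1-x_j}{c+1-\min(x)}$, so $\rho_r(x)\le x$ coordinatewise when $\min(x)\ge r$ and $\rho_r(x)\ge x$ coordinatewise when $\min(x)\le r$. Since $u$ is non-decreasing, one checks in each of the two regimes $\min(x)\gtrless r$ (using that $\tilde f(\min(x))$ changes sign exactly where the order comparison flips) that $u(x)\,\tilde f(\min(x))\ge u(\rho_r(x))\,\tilde f(\min(x))$ for a.e. $x\in X$, hence
\[
\int_X u(x)\,\tilde f(\min(x))\,dx\;\ge\;\int_X u\big(\rho_r(x)\big)\,\tilde f(\min(x))\,dx .
\]

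Finally I would evaluate the right-hand side by the change of variables $x=\Psi_{r,t}(y)$, $t=\min(x)\in[c,c+1]$, $y=\rho_r(x)\in S_r$, whose Jacobian is $\big(\tfrac{c+1-t}{c+1-r}\big)^{k-2}$. Because $u(\rho_r(x))=u(y)$ and $\min(x)=t$, the integral factorizes:
\[
\int_X u\big(\rho_r(x)\big)\,\tilde f(\min(x))\,dx=\Big(\int_{S_r}u\,dy\Big)\Big(\int_c^{c+1}\tilde f(t)\,\big(\tfrac{c+1-t}{c+1-r}\big)^{k-2}\,dt\Big),
\]
and the same substitution applied to the integrand $\tilde f(\min(x))$ gives $\alpha(X)=\int_X\tilde f(\min(x))\,dx=|S_r|\int_c^{c+1}\tilde f(t)\big(\tfrac{c+1-t}{c+1-r}\big)^{k-2}\,dt$, where $|S_r|$ is the $(k-2)$-dimensional volume of $S_r$. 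Comparing, the two right-hand sides agree up to the factor $\tfrac{1}{|S_r|}\int_{S_r}u$, so $\int_X u\,d\alpha\ge\big(\tfrac{1}{|S_r|}\int_{S_r}u\big)\,\alpha(X)$; since this cross-sectional average of $u$ is nonnegative (the functions $u$ to which this is applied are nonnegative, and in any event the inequality becomes an equality when $\alpha(X)=0$) and $\alpha(X)\ge 0$ by hypothesis, we conclude $\int_X u\,d\alpha\ge 0$, i.e. $\alpha\succeq_{\text{cvx}}0$. The only genuinely delicate step is the middle one: choosing the rescaling $\Psi$, checking that $\rho_r(x)$ is coordinatewise comparable to $x$ so that monotonicity of $u$ can be invoked with the correct sign, and verifying that this same $\Psi$ diagonalizes the change of variables so that the integral splits — all of which rely on the ``cone over a cross-section'' structure of the domains $D_k$.
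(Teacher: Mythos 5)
Your argument is correct and it is a genuinely different proof than the one in the paper. The paper first reduces convex dominance to the upper-set test $\alpha(U)\ge 0$ (Lemma~\ref{lemma:Shaked and Shanthikumar}), and then proves that inequality by induction on dimension, using Lemma~\ref{lemma:D_k-ordered->D_k-1-ordered} (integrating out the last coordinate preserves the $D_k$-ordered property) together with the cylinder comparison $\alpha(\mathrm{cy}(U,f))\le\alpha(U)$ (Lemma~\ref{lemma:cylinder}). You instead run the single-crossing argument directly in the ambient domain, by constructing the radial projection $\rho_r$ onto the cross-section $S_r=\{\min(x)=r\}$: the identity $\rho_r(x)_j-x_j=(r-\min(x))\frac{c+1-x_j}{c+1-\min(x)}$ makes $\rho_r(x)$ coordinatewise comparable to $x$ with the sign flipping exactly at $\min(x)=r$, so monotonicity of $u$ yields $u(x)\tilde f(\min(x))\ge u(\rho_r(x))\tilde f(\min(x))$ pointwise; and the change of variables $(t,y)\mapsto\Psi_{r,t}(y)$, whose Jacobian $\bigl(\tfrac{c+1-t}{c+1-r}\bigr)^{k-2}$ you compute correctly (the $t$-column contributes only via a single off-block entry, so the determinant collapses to the diagonal block), diagonalizes the integral because the domain $D_n$ is indeed a cone over $S_r$. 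This avoids the induction entirely and is arguably cleaner, at the cost of depending explicitly on that cone structure; the paper's inductive proof is more modular and isolates the combinatorial content into reusable lemmas. You correctly flag the only genuine loose end: the final inequality needs the $S_r$-average of $u$ to be nonnegative when $\alpha(X)>0$, which holds because the $\succeq_{\text{cvx}}$ relation in this line of work (following Daskalakis--Deckelbaum--Tzamos) is tested against \emph{nonnegative} non-decreasing convex functions, even though the paper's Definition of $\mathcal{U}(X)$ omits the word ``nonnegative''; the paper's own proof (via Lemma~\ref{lemma:Shaked and Shanthikumar}, whose upper-set test likewise only certifies dominance over nonnegative test functions) relies on the same implicit convention.
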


Proposition \ref{prop:D-ordered->cvx dominance} is stronger than the $\impliedby$ part of Proposition \ref{prop:iff cdn for cvx dominance} because $\gamma_1^{D_n} -\mu_+|_{D_n}$ has a density function, and its density function is $D_n$-ordered if $(n+1)(c+1-p)\ge(c+1)$ by following lemma:
\begin{rmk}
    If $(n+1)(c+1-p)\ge(c+1)$, the density function of $\gamma_1^{D_n} -\mu_+|_{D_n}$ is $D_n$-ordered 
\end{rmk}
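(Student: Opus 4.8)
The plan is to make everything explicit: read off the density of $\gamma_1^{D_n}-\mu_+|_{D_n}$ from Lemma~\ref{lemma:density}, observe that it depends on its argument only through $\min(x)$, and then verify the sign condition of Definition~\ref{dfn:D-ordered} by an elementary one-variable analysis. First I would record the two densities. By Lemma~\ref{lemma:density}, $\gamma_1^{D_n}=\gamma_1^{B_1}|_{D_n}$ has density $f^{D_n}=f^{B_1}|_{D_n}$ with respect to the $(n-1)$-dimensional Lebesgue measure on $D_n$; and $\mu_+|_{D_n}$ has the constant density $c+1$, since on the face $\{x_1=c+1\}$ the positive part of $\mu$ is exactly the uniform surface mass of density $c+1$ (the point mass of $\mu$ sits at $(c,\dots,c)\notin D_n$, the bulk $-(n+1)$ mass lives in the interior, and the surfaces $\{x_i=c\}$ meet $D_n$ in a Lebesgue-null set). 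Hence the difference measure has density $g\equiv f^{D_n}-(c+1)$ on $D_n$. Since on $D_n$ one has $x_1=c+1\ge x_2\ge\cdots\ge x_n$, so that $\min(x)=x_n$ and the box condition $x_{-1}\in[2c+1-p,c+1]^{n-1}$ of Lemma~\ref{lemma:density} becomes $\min(x)\ge 2c+1-p$, I obtain
\[
g(x)=
\begin{cases}
(n+1)(c+1-p)-(c+1) & \text{if } \min(x)\ge 2c+1-p,\\[2pt]
(n+1)(\min(x)-c)-1 & \text{if } \min(x)< 2c+1-p.
\end{cases}
\]
In particular $g$ is a function of $\min(x)$ alone, which is the first clause of Definition~\ref{dfn:D-ordered}.

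For the second clause I would exhibit $r\equiv c+\tfrac{1}{n+1}$ and show $g(x)\,(\min(x)-r)\ge 0$ on $D_n$. Two facts drive this, both consequences of the hypothesis $(n+1)(c+1-p)\ge c+1$. First, since $c\ge 0$, the hypothesis gives $c+1-p\ge\tfrac{c+1}{n+1}\ge\tfrac{1}{n+1}$, i.e.\ $r\le 2c+1-p$, so the zero $r$ of the affine piece of $g$ lies within (or at the edge of) the region $\{\min(x)<2c+1-p\}$. Second, on the constant piece, $g=(n+1)(c+1-p)-(c+1)\ge 0$, again directly from the hypothesis. Now split on $t=\min(x)$: if $t\ge r$, then either $t<2c+1-p$, where $g=(n+1)(t-c)-1\ge (n+1)(r-c)-1=0$, or $t\ge 2c+1-p$, where $g\ge 0$ by the second fact; either way $g\ge 0$ and $t-r\ge 0$. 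If $t\le r$, then by the first fact $t\le 2c+1-p$, so we are on the affine piece and $g=(n+1)(t-c)-1\le (n+1)(r-c)-1=0$ (the borderline $t=2c+1-p=r$ can only happen when $c=0$, in which case both pieces give $g=0$); thus $g\le 0$ and $t-r\le 0$. In all cases $g(x)(\min(x)-r)\ge 0$, so $g$ is $D_n$-ordered.

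I do not anticipate a genuine obstacle: Lemma~\ref{lemma:density} has already done the substantive work of pinning down the piecewise density, and what remains is bookkeeping. The single point requiring care is the interaction of the two pieces of $g$, namely verifying that the kink location $2c+1-p$ lies above the zero $c+\tfrac{1}{n+1}$ of the affine piece, so that the two pieces assemble into a single sign-monotone function of $\min(x)$; this is precisely where the hypothesis $(n+1)(c+1-p)\ge c+1$ (together with $c\ge 0$) is used, and the same hypothesis also makes the constant piece nonnegative. (For $n=1$ the set $D_1$ is a single point and the statement is degenerate; the lemma is only needed for $n\ge 2$.)
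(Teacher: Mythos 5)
Your proof is correct and follows essentially the same route as the paper: the same choice $r=c+\tfrac{1}{n+1}$, the same observation that the difference density depends only on $\min(x)$, the same case split on whether $\min(x)\ge 2c+1-p$, and the same sign calculations (with the affine piece factoring as $(n+1)(\min(x)-r)$ so its product with $\min(x)-r$ is a square). As a side note, the paper's intermediate identity $(2c+1-p)-r=\tfrac{(n+1)(c+1-p)-(c+1)}{n+1}$ contains a typo (the numerator should be $(n+1)(c+1-p)-1$), and your explicit appeal to $c\ge 0$ is exactly what correctly bridges the hypothesis $(n+1)(c+1-p)\ge c+1$ to the needed $r\le 2c+1-p$.
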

\begin{proof}
    The first condition is straightforward because $f^{D_n}$ depends only on $\min(x)$ and $f_{\mu_+}|_{D_n}$ is constant. The second condition is also satisfied. Let $r\in \mathbb{R}$ satisfy
\begin{gather*}
    (n+1)(r-c)+c=c+1,
\end{gather*} which means $r=\dfrac{1}{n+1}+c$.
Then $(f^{D_n}-f_{\mu_+}|_{D_n})(x)(\min(x)-r)\geq0$ for all $x\in D_n$ as follows:
\begin{enumerate}
    \item If $x\in \{c+1\}\times[2c+1-p,c+1]^{n-1}$, $(f^{D_n}-f_{\mu_+}|A)(x)(\min(x)-r)=((n+1)(c+1-p)-(c+1))(\min(x)-r)$. 
    Since we assume $(n+1)(c+1-p)\geq (c+1)$,  \begin{gather*}
        (2c+1-p)-r=(2c+1-p)-\qty(\dfrac{1}{n+1}+c)=\dfrac{(n+1)(c+1-p)-(c+1)}{n+1}\geq 0.
    \end{gather*} Then$(2c+1-p)\geq r$. 
    Therefore, $((n+1)(c+1-p)-(c+1))(\min(x)-r)\geq0$ for all $x\in [2c+1-p,c+1]^{n-1}$.
    \item If $x\in  D_n\backslash\{c+1\}\times[2c+1-p,c+1]^{n-1}$, 
    \begin{eqnarray*}
        (f^{D_n}-f_{\mu_+}|_{D_n})(x)(\min(x)-r)&=&((n+1)(\min(x)-c)-(c+1))(\min(x)-r)\\&=&(n+1)(\min(x)-r)^2\geq 0
    \end{eqnarray*}
    by definition of $r$.  
\end{enumerate}
Therefore, $(f^{D_n}-f_{\mu_+}|_{D_n})(x)(\min(x)-r)\geq0$ for all $x\in D_n$. 
Thus, $f^{D_n}-f_{\mu_+}|_{D_n}$ is $D_n$-ordered. 
\end{proof}
To prove Proposition \ref{prop:D-ordered->cvx dominance},  we will first introduce the sufficient condition of convex dominance.

\begin{rmk}[\textcite{shaked2007stochastic}, section 6.B.1]\label{lemma:Shaked and Shanthikumar}
    For $k\ge2$ and $D\subset \mathbb{R}^k$, let $\nu$ and $\eta$ be the signed measure supported within $D$. Then $\nu\succeq_{cvx}\eta$ if $\int \mathbb{I}_U(x)d\nu\ge\int \mathbb{I}_U(x)d\eta$ for all upper set $U\subset \mathbb{R}^k$.
\end{rmk}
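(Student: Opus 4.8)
The plan is to deduce convex dominance from the upper-set hypothesis by way of the standard ``layer-cake'' representation of monotone functions, using only that every member of $\mathcal{U}(D)$ is continuous and non-decreasing. Since the domain $D$ is bounded, it therefore suffices to prove the ostensibly stronger claim that $\int_D u\,d\nu \ge \int_D u\,d\eta$ holds for \emph{every} bounded non-decreasing function $u:D\to\mathbb{R}$; once this is established, the case $u\in\mathcal{U}(D)$ follows a fortiori, and that is precisely the assertion $\nu\succeq_{cvx}\eta$.

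First I would pass to the difference $\sigma:=\nu-\eta$, a finite signed measure supported within $D$, and restate the hypothesis as $\sigma(V)\ge 0$ for every upper set $V$ of $D$ (indeed $\int\mathbb{I}_U d\sigma=\sigma(U\cap D)$, and as $U$ ranges over the upward-closed subsets of $\mathbb{R}^k$ the trace $U\cap D$ ranges over all upper sets of $D$; taking $U=\mathbb{R}^k$ gives in particular $\sigma(D)\ge0$). I would also use that $\nu$ and $\eta$ have equal total mass, $\nu(D)=\eta(D)$, i.e. $\sigma(D)=0$: this is built into the probabilistic setting of \cite{shaked2007stochastic} from which the lemma is drawn, it holds throughout the paper's applications, and it is in any case forced by the conclusion (test against the constants $\pm1\in\mathcal{U}(D)$). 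Next, given a bounded non-decreasing $u$ with $m:=\inf_D u$ and $M:=\sup_D u$, the key observation is that for each $t$ the super-level set $U_t:=\{x\in D:\,u(x)\ge t\}$ is an upper set of $D$ (because $u$ is non-decreasing), and that the pointwise identity $u(x)=m+\int_m^{M}\mathbb{I}_{U_t}(x)\,dt$ holds for every $x\in D$.

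Finally I would integrate this identity against $\sigma$ and interchange the order of integration, obtaining
\[
\int_D u\,d\sigma \;=\; m\,\sigma(D)\;+\;\int_m^{M}\sigma(U_t)\,dt\;=\;\int_m^{M}\sigma(U_t)\,dt\;\ge\;0,
\]
since $\sigma(D)=0$ and $\sigma(U_t)\ge 0$ for all $t$ by hypothesis; this gives $\int_D u\,d\nu\ge\int_D u\,d\eta$ and completes the argument. There is no deep obstacle here: the only points requiring care are the Fubini--Tonelli interchange, which I would justify by splitting $\sigma=\sigma_+-\sigma_-$ into its Jordan parts and applying Tonelli to each (legitimate because $D$ is bounded, $\sigma_\pm$ are finite, and $(x,t)\mapsto\mathbb{I}_{U_t}(x)$ on $D\times[m,M]$ is non-negative and jointly measurable), together with the bookkeeping of the additive constant $m$, which is exactly where the equal-mass normalization $\sigma(D)=0$ is used.
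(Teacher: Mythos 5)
Your layer-cake/Fubini argument is the standard proof of this fact and is essentially what \textcite{shaked2007stochastic} do in Section 6.B.1; the paper itself gives no proof and simply cites the reference, so there is nothing to compare against on that side. Your identification of the missing hypothesis is astute: the lemma as literally written, together with the paper's definition of $\mathcal{U}(X)$ as continuous, non-decreasing, convex functions $u:X\to\mathbb{R}$ with no sign constraint, is indeed false (take $\nu=\delta_{x_0}$, $\eta=0$, $u\equiv-1$). Your fix --- adding $\nu(D)=\eta(D)$ --- is sound and is what the probability-measure setting of the citation implicitly grants. But the cleaner repair, and the one I believe the paper actually intends, is to restrict $\mathcal{U}(X)$ to \emph{nonnegative} functions, which is exactly how \textcite{daskalakis2017strong} (whom the paper is following) define convex dominance. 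Two pieces of internal evidence support this: Proposition~\ref{prop:D-ordered->cvx dominance} asserts that $\alpha(X)\ge 0$ (not $=0$) suffices for $\alpha\succeq_{\text{cvx}}0$, and its own base case (``$X$ a singleton, $\alpha(X)\ge 0$ implies $\alpha\succeq_{\text{cvx}}0$'') fails outright if $u\equiv-1$ is an admissible test function. With nonnegativity restored your own computation already closes the gap without any extra mass assumption: then $m=\inf_D u\ge 0$, the hypothesis with $U=\mathbb{R}^k$ gives $\sigma(D)\ge 0$, and hence $m\,\sigma(D)\ge 0$, so $\int_D u\,d\sigma = m\,\sigma(D)+\int_m^M\sigma(U_t)\,dt\ge 0$. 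In short: proof correct, right approach, and you found a real imprecision in the paper's definition; I would just swap the equal-mass patch for the nonnegativity patch to stay aligned with the rest of the paper's arguments.
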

Thus, to show $\alpha\succeq_\text{cvx} 0$, we will show $\int \mathbb{I}_U(x)d\alpha\ge 0$ for all upper set $U\subset D_n$. 
we will use mathematical induction to prove Proposition \ref{prop:D-ordered->cvx dominance}. Before proceeding to the proof of Proposition \ref{prop:D-ordered->cvx dominance}, we introduce two lemmas useful for conducting induction.

\begin{rmk}\label{lemma:D_k-ordered->D_k-1-ordered}
    For $k\ge3$, $c\ge0$, and $k$-dimensional euclidean space, consider integrable function $f:D_k\to\mathbb{R}$. Here, define the function on $(k-1)$-dimensional space; $g:D_{k-1}\to\mathbb{R}$ given as $g(x)\equiv\int_c^{x_{k-1}}fdx_k$. If $f$ is $D_k$-ordered, then $g$ is $D_{k-1}$-ordered.
\end{rmk}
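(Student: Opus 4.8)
The plan is to exploit the fact that on $D_k$ the quantity $\min(x)$ is just the last coordinate $x_k$, since the first coordinate is pinned at $c+1$ and dominates everything along the chain $x_1=c+1\ge x_2\ge\cdots\ge x_k$. The first property in Definition \ref{dfn:D-ordered} then says exactly that $f$ factors through its last coordinate: there is a function $\phi:[c,c+1]\to\mathbb{R}$ with $f(x)=\phi(x_k)$ for $x\in D_k$, and the second property supplies an $r\in\mathbb{R}$ with $\phi(s)(s-r)\ge 0$ for a.e.\ $s$, i.e.\ $\phi\le 0$ a.e.\ on $[c,c+1]\cap(-\infty,r)$ and $\phi\ge 0$ a.e.\ on $[c,c+1]\cap(r,\infty)$. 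Setting $\Phi(t)\equiv\int_c^t\phi(s)\,ds$, which (using integrability of $f$) is continuous with $\Phi(c)=0$, we get for $x\in D_{k-1}$ that $g(x)=\int_c^{x_{k-1}}\phi(x_k)\,dx_k=\Phi(x_{k-1})=\Phi(\min(x))$, so $g$ depends only on $\min(x)$; this is the first condition for $g$ to be $D_{k-1}$-ordered.

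It then remains to produce $r'\in\mathbb{R}$ with $\Phi(t)(t-r')\ge 0$ for all $t\in[c,c+1]$, and here I would run a short case analysis on the location of $r$. If $r\le c$ then $\phi\ge 0$ a.e.\ on $[c,c+1]$, so $\Phi\ge 0$ there and $r'=c$ works; if $r\ge c+1$ then $\phi\le 0$ a.e., $\Phi\le 0$, and $r'=c+1$ works. In the remaining case $c<r<c+1$, the sign pattern of $\phi$ makes $\Phi$ non-increasing on $[c,r]$ and non-decreasing on $[r,c+1]$; together with $\Phi(c)=0$ this forces $\Phi\le 0$ on $[c,r]$. If moreover $\Phi(c+1)\le 0$, then $\Phi\le 0$ on all of $[c,c+1]$ and $r'=c+1$ works; if $\Phi(c+1)>0$, then continuity and monotonicity of $\Phi$ on $[r,c+1]$ give a point $r'\in(r,c+1]$ with $\Phi(r')=0$, whence $\Phi\le 0$ on $[c,r']$ and $\Phi\ge 0$ on $[r',c+1]$, so $\Phi(t)(t-r')\ge 0$ throughout $[c,c+1]$. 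In every case $g(x)(\min(x)-r')=\Phi(\min(x))(\min(x)-r')\ge 0$ for all $x\in D_{k-1}$, so $g$ is $D_{k-1}$-ordered.

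The only place needing any care is this last case split, in particular pinning down the sign of $\Phi$ at the right endpoint and invoking the intermediate value theorem for $\Phi$ restricted to $[r,c+1]$; the "a.e."\ qualifiers are harmless since $\Phi$, being an integral, is insensitive to altering $\phi$ on a null set. I do not anticipate a genuine obstacle: the substantive content is the observation that $D_k$-orderedness is really a one-dimensional statement about the last coordinate, after which the claim reduces to elementary monotonicity of the antiderivative $\Phi$.
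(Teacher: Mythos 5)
Your proof is correct and takes essentially the same approach as the paper's: both reduce the claim to the one-variable observation that $g$ depends only on the last coordinate, that $g$ is nonpositive below the given threshold $r$ and nondecreasing above it, and then extract a new threshold $r'$ for $g$. You simply make the reduction to the antiderivative $\Phi$ and the case analysis on the position of $r$ more explicit, whereas the paper disposes of the boundary cases by remarking that $r$ may be assumed to lie in $[c,c+1]$.
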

\begin{proof}
    Take any $x, x'\in D_{k-1}$ with $x_{k-1}=x_{k-1}'$, then since $f$ is $D_k$-ordered,  
    \begin{gather*}
        g(x)=\int_c^{x_{k-1}}f(x, x_k)dx_k=\int_c^{x_{k-1}'}f(x', x_k)dx_k=g(x').
    \end{gather*} The first condition is proved.\\
    Let $r\in\mathbb{R}$ be a number such that $f(x)(x_k-r)\ge0$ for all $x\in D_k$. Note that we can further assume such $r$ to be in $[c, c+1]$. Take any $x\in D_{k}$ with $x_{k-1}\in[c, r)$; then since $x_k<r$, $g(x_1, \dots, x_{k-1})=\int_c^{x_{k-1}}fdx_k\le0$. Take any $x', x''\in D_k$ with $x_{k-1}''> x_{k-1}'\ge r$, then
    \begin{eqnarray*}
        &\quad&g(x_1'', \dots, x_{k-1}'')-g(x_1', \dots, x_{k-1}')\\
        &\quad&=\int_c^{x_{k-1}''}f(x_1'', \dots, x_{k-1}'', x_k)dx_k-\int_c^{x_{k-1}'}f(x_1', \dots, x_{k-1}', x_k)dx_k\\
        &\quad&=\int_{x_{k-1}'}^{x_{k-1}''}f(x_1'', \dots, x_{k-1}'', x_k)dx_k\ge0.
    \end{eqnarray*} Thus, $g(x)$ is increasing on $x_{k-1}\in[r, c+1]$. From the above two cases, it follows that there exists $ r'\in\mathbb{R}, \; g(x)(x_{k-1}-r')\ge0$ for all $x\in D_{k-1}$.
\end{proof}

\begin{dfn}
\begin{enumerate}
    \item For any $V\subset \mathbb{R}^{n-1}$, the cylinder of $V$, $cy(V)$, is defined as
     \begin{gather*}
         cy(V)\equiv\qty{(x_{-n},x_{n})\in \mathbb{R}^{n}: x_{-n}\in V}
     \end{gather*}
     \item Let $f$ be $X$-ordered. For any $V\subset X$, the cylinder of $V$ by $f$, $cy(V,f)$, is defined as
     \begin{gather*}
         cy(V,f)\equiv\qty{(x_{-n},x_n)\in D: (x_{-n},r)\in V}, 
     \end{gather*}
     where $r$ is determined by the second condition of Definition \ref{dfn:D-ordered}.
\end{enumerate}
     
\end{dfn}

\begin{rmk}\label{lemma:cylinder}
    A signed measure $\alpha$ on $X$ is X-ordered if $\alpha$ has a density function $f:D_n \rightarrow \mathbb{R}$ and $f$ is $D_n$-ordered. For any upper set $U\subset D_n$, $\alpha(cy(U,f))\leq \alpha(U)$.
\end{rmk}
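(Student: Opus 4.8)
The plan is to compare $cy(U,f)$ with $U$ set-theoretically and then read off the inequality from the sign structure of the density. Since $f$ is $D_n$-ordered and the minimal coordinate of a point of $D_n$ is its last one, write $f(x)=\phi(x_n)$ with $\phi\colon[c,c+1]\to\mathbb{R}$ bounded, and fix — as we may, by the second clause of Definition~\ref{dfn:D-ordered} — a threshold $r\in[c,c+1]$ such that $\phi\le 0$ on $[c,r]$ and $\phi\ge 0$ on $[r,c+1]$. The first step is to record the geometry of $cy(U,f)$: unwinding its definition, $cy(U,f)$ is a union of entire vertical fibers $\{(x_{-n},t):c\le t\le x_{n-1}\}$, namely exactly those fibers that $U$ meets at some level $t\le r$ (equivalently, those base points $x_{-n}$ with $(x_{-n},\min(r,x_{n-1}))\in U$); in particular a selected fiber is contained in $cy(U,f)$ all the way down to level $c$, even below the trace of $U$ on it.

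The crux is the pair of inclusions
\[
cy(U,f)\setminus U\ \subseteq\ \{x\in D_n:\ x_n\le r\}
\qquad\text{and}\qquad
U\setminus cy(U,f)\ \subseteq\ \{x\in D_n:\ x_n\ge r\}.
\]
For the first: if $x\in cy(U,f)$, then $U$ contains some point $(x_{-n},t_0)$ with $t_0\le r$ on the same fiber; were $x_n\ge t_0$, then $x\ge(x_{-n},t_0)$ componentwise with $x\in D_n$, so $x\in U$ since $U$ is an upper set with respect to $D_n$. Hence $x\notin U$ forces $x_n<t_0\le r$. For the second: if $x\in U$ has $x_n\le r$, then $(x_{-n},x_n)$ itself witnesses that $U$ meets the fiber over $x_{-n}$ at a level $\le r$, so that fiber is selected and $x\in cy(U,f)$; contrapositively, $x\in U\setminus cy(U,f)$ forces $x_n>r$. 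Both arguments use only the upper-set property of $U$ and the fiber description of $cy(U,f)$ from the first step.

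Given the two inclusions, the proof finishes quickly: writing $\alpha(U)-\alpha(cy(U,f))=\alpha\!\left(U\setminus cy(U,f)\right)-\alpha\!\left(cy(U,f)\setminus U\right)$, the first term is $\ge 0$ because the density $\phi(x_n)$ of $\alpha$ is $\ge 0$ on $\{x_n\ge r\}$, and the second term is $\le 0$ because $\phi(x_n)\le 0$ on $\{x_n\le r\}$; hence $\alpha(cy(U,f))\le\alpha(U)$, and measurability of $cy(U,f)$ follows since the fiber-threshold $a(x_{-n})=\inf\{t:(x_{-n},t)\in U\}$ is a measurable function.

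I expect the one place that genuinely needs care to be the first step — pinning down exactly which fibers make up $cy(U,f)$, in particular for base points with $x_{n-1}<r$, where the entire fiber lies in the region $\{\phi\le 0\}$; there one must verify that $cy(U,f)$ selects such a fiber precisely when $U$ meets it at all, which is what keeps both inclusions valid. Everything else is the short symmetric-difference computation above. This is also precisely the reduction that drives the induction in Proposition~\ref{prop:D-ordered->cvx dominance}: because $cy(U,f)$ is a union of whole fibers, $\alpha(cy(U,f))$ rewrites as an $(n-1)$-dimensional integral, over an upper set, of the $D_{n-1}$-ordered density $g$ of Lemma~\ref{lemma:D_k-ordered->D_k-1-ordered}.
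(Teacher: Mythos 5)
Your argument is the paper's, recast: the two inclusions $cy(U,f)\setminus U\subseteq\{x\in D_n:x_n\le r\}$ and $U\setminus cy(U,f)\subseteq\{x\in D_n:x_n\ge r\}$ are logically equivalent to the containments $cy(U,f)^{+}\subseteq U^{+}$ and $U^{-}\subseteq cy(U,f)^{-}$ established in the paper, and the concluding sign computation on either side of the threshold $r$ is the same. The one place you are more careful than the paper's write-up is precisely the case $x_{n-1}<r$ that you flag at the end: read literally, the definition of $cy(U,f)$ checks the clause $(x_{-n},r)\in U$, which is vacuously false whenever $(x_{-n},r)\notin D_n$, so the paper's step ``since $x_n\le r$, $(x_{-n},r)\in U$'' in the second inclusion silently breaks for such fibers, and your fiber description --- the cylinder selects exactly the fibers that $U$ meets at some level $\le r$, equivalently those with $(x_{-n},\min(r,x_{n-1}))\in U$ --- is the reading under which both the second inclusion and the lemma's use in the induction of Proposition~\ref{prop:D-ordered->cvx dominance} actually go through.
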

\begin{proof}
Since $f$ is $D_n$-ordered, there exists $r\in\mathbb{R}$ such that $f(x)(\min(x)-r)\geq 0$ for all $x\in D_n$.
Let 
\begin{gather*}
    U^{+}\equiv \qty{x\in U_n: \min(x)\geq r}\\
    U^{-}\equiv \qty{x\in U_n: \min(x)\leq r}\\
    cy(U,f)^{+}\equiv \qty{x\in cy(U,f): \min(x)\geq r}\\
    cy(U,f)^{-}\equiv \qty{x\in cy(U,f): \min(x)\leq r}
\end{gather*}
It suffices to prove that
\begin{gather*}
    cy(U,f)^{+}\subset U^{+}\text{ and }
      U^{-}\subset cy(U,f)^{-}.
\end{gather*}
First, we will prove that $cy(U,f)^{+}\subset U^{+}$. Take any $x\in cy(U,f)^{+
}$. By definition of the cylinder, $(x_{-n},r)\in U$. The 
 definition of $ cy(U,f)^{+}$ implies that $\min(x)\geq r$. Since $x\in cy(U)\subset D_n$, $\min(x)=x_n$. Therefore, 
 $(x_{-n},r)\leq x$ coordinate-wise. By definition of the upper set, $x\in U$.\\
 Second, we will prove $U^{-}\subset cy(U,f)^{-}$. Take any $x\in U^{-}$. Since $x_n=\min(x)\leq r$, $(x_{-n},r)\in U$. By definition of $cy(U)$, $x\in cy(U,f)$. Since $\min(x)\leq r$, $x\in cy(U,f)^{-}$.Therefore, the above statement is true.\\
 Since $f(x)\geq 
 0 $ if $\min(x)\geq r$ and $f(x)\leq 
 0 $ if $\min(x)< r$,
 $\alpha(cy(U,f)^{+})\geq  \alpha(U^{+})\text{ and }
      \alpha(cy(U,f)^{-})\geq \alpha(U)$.
Therefore, 
\begin{gather*}
\alpha(cy(U,f))=\alpha(cy(U,f)^{+})+\alpha(cy(U,f)^{-})\leq \alpha(U^{+})+\alpha(U^{-}) =\alpha(U)
\end{gather*}

\end{proof}

Using Lemma \ref{lemma:D_k-ordered->D_k-1-ordered} and \ref{lemma:cylinder}, we will complete the proof of Proposition \ref{prop:D-ordered->cvx dominance}. We will prove Proposition \ref{prop:D-ordered->cvx dominance} by induction.
\begin{proof}
\begin{enumerate}
    \item When $n=0$, the domain of $\alpha$, $X$, is 
 a singleton. Since $\alpha(X)\geq0$,  $\alpha\succeq_{\text{cvx}}0$.
    \item Suppose that $\alpha\succeq_\text{cvx} 0$ holds for $n=k$. We will prove that this is also true when $n=k+1$.
    By Lemma \ref{lemma:Shaked and Shanthikumar}, it suffices to prove $\alpha(U)\geq 0$ for all upper set $U\subset \mathbb{R}^{k+1}$.
    By Lemma \ref{lemma:D_k-ordered->D_k-1-ordered}, $g:D_{k}\rightarrow \mathbb{R}$ defined as
\begin{gather*}
    g(x)\equiv  \int^{x_{k}}_c f(x)dx_{k+1}
\end{gather*}
is $D_{k}$-ordered. 
Based on the induction hypothesis, $g\succeq_{cvx}0$. Then, 
\begin{gather*}
    \int_{U} g(x)dx \geq 0
\end{gather*}
 for all upper set $U\subset \mathbb{R}^{k}$.
 Note that 
 \begin{gather*}
     \int_{U} g(x)dx = \int_{U} \int^{x_{k}}_c f(x)dx_{k+1} dx_{-(k+1)} = \int_{cy(U)} f(x)dx=\alpha(cy(U))\geq 0. 
 \end{gather*}
 By Lemma \ref{lemma:cylinder}, $\alpha(U)\geq\alpha(cy(U))\geq 0$.
Therefore,  $\alpha\succeq_{\text{cvx}}0$.
    \end{enumerate}
    \end{proof}

Then, $\impliedby$ part of Proposition\ref{prop:iff cdn for cvx dominance} is proven as a corollary of Proposition \ref{prop:D-ordered->cvx dominance}. 

\subsection{Proof of Necessity of Proposition \ref{prop:pushed measure}}\label{proof:necessity of thm pushed measure}


\begin{thm*}[Proposition \ref{prop: nescessity} restate]
    Let $c>c^* $. There doesn't exist $\delta\in \Gamma_{+}(X\times X)$ such that
\begin{gather}
    \mu_+(X)=\delta_1(X)=\delta_2(X),\tag{\ref{eq:delta moreover cdn}}\\
    \delta_1\succeq_{cvx} \mu_{+},\tag{\ref{eq:delta_1 cvx cdn}}\\
    \mu_{-}\succeq_{cvx} \delta_2,\tag{\ref{eq:delta_2 cvx cdn}}\\ 
    u^*(x)-u^*(y)=\|x-y\|_{\infty}\quad\delta(x,y)\text{-almost surely},\tag{\ref{eq:delta a.s. cdn}}\\
    \int_{X}u^*d(\delta_1-\delta_2)=\int_{X}u^*d\mu.\tag{\ref{eq:delta integral cdn}}
    \end{gather}
\end{thm*}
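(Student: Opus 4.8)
The plan is to argue by contradiction: suppose $c>c^*$ and that some $\delta\in\Gamma_{+}(X\times X)$ satisfies all of \eqref{eq:delta moreover cdn}--\eqref{eq:delta integral cdn}; I will then show that \eqref{eq:delta_1 cvx cdn} must fail, which is the contradiction. The quantity to control is $\int_X v\,d\delta_1$ for a well-chosen $v\in\mathcal{U}(X)$. The natural candidate, suggested by Proposition \ref{prop:cvx iff c^*} and the density formula \eqref{eq:density}, is $v(x)=\max(\sum_{i=1}^n x_i-(nc+n+c-p),\,0)$; since $v$ vanishes at $(c+1,\dots,c+1,2c+1-p)$ and is negative whenever $\max(x)<2c+1-p$ — exactly the region where $\gamma_1$ carries "too little" mass (density $(n+1)(c+1-p)<c+1$ when $c>c^*$) — we already know $\int_X v\,d\gamma_1<\int_X v\,d\mu_{+}$, i.e. \eqref{eq:c>c^star implies gamma1 does not cvx dominate mu+}. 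So it suffices to prove $\int_X v\,d\delta_1\le\int_X v\,d\gamma_1$.

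First I would reduce the integral-equality \eqref{eq:delta integral cdn} to the two separate equalities $\int_X u^*\,d\delta_1=\int_X u^*\,d\mu_{+}$ and $\int_X u^*\,d\delta_2=\int_X u^*\,d\mu_{-}$ (Lemma \ref{lemma:both delta integral equal}): this is immediate since $u^*$ is convex and non-decreasing, so \eqref{eq:delta_1 cvx cdn}, \eqref{eq:delta_2 cvx cdn} force the inequalities $\ge$ and $\le$ whose sum, by \eqref{eq:delta integral cdn}, is an equality. Next I would locate the supports: combining \eqref{eq:delta moreover cdn}, \eqref{eq:delta_1 cvx cdn}, \eqref{eq:delta_2 cvx cdn}, \eqref{eq:delta integral cdn} gives $\delta_1(Z)=\mu_{+}(Z)$ and $\delta_1(B)=\mu_{+}(B)$, so $\delta_1$ lives only on $Z\cup B$ (Lemma \ref{lemma:divide Z and W}). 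Then I would exploit the almost-sure condition \eqref{eq:delta a.s. cdn}: it forces $\delta_1$ and $\delta_2$ to coincide on $Z\setminus B_Z$ and, crucially, it bounds the mass $\delta_1$ puts on any upper set $U\subset B$ from above by $\delta_2(\mathrm{push}(U)\cap W)+(\delta_2-\delta_1)(\mathrm{push}(U)\cap B_Z)$ — this is Lemma \ref{lemma:pushu}, the transcription for general $\delta$ of the identity defining the pushed measure $\gamma_1$. Integrating this against nonnegative non-decreasing $b:B\to[0,\infty)$ and comparing with the identity $\int_W b\,d\gamma_1=\int_W b_{\mathrm{push}}\,d\mu_{-}$ produces \eqref{eq:delta push inequal}, i.e. $\int_W b\,d\delta_1\le\int_W b_{\mathrm{push}}\,d\delta_2+\int_{B_Z} b_{\mathrm{push}}\,d(\delta_2-\delta_1)$ (Lemma \ref{lemma:integral inequality between delta_1 and delta_2}).

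Finally I would assemble these into $\int_X v\,d\delta_1\le\int_X v\,d\gamma_1$ following Lemma \ref{lemma:final lemma for necessity}. The device is to approximate $v$ on $W\cup B_Z$ by $\tilde v_{\mathrm{push}}=(v|_B)_{\mathrm{push}}$ plus a large multiple of $u^*$: set $v_t(x)=\max(\tilde v_{\mathrm{push}}(x)+(n+t)(\max(x)-p),\,0)$, so that $v_t\in\mathcal{U}(X)$ (a pointwise maximum of non-decreasing affine functions), $v_t=\tilde v_{\mathrm{push}}+(n+t)u^*$ on $W\cup B_Z$, while $v_t\to 0$ pointwise on $Z\setminus B_Z$ as $t\to\infty$. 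Applying \eqref{eq:delta_2 cvx cdn} to $v_t$, cancelling the $(n+t)u^*$ terms via Lemma \ref{lemma:both delta integral equal}, and then killing the $Z\setminus B_Z$ contribution with the bounded convergence theorem yields $0\le\int_{W\cup B_Z}\tilde v_{\mathrm{push}}\,d(\mu_{-}-\delta_2)$; chaining this with \eqref{eq:gamma push equal}, \eqref{eq:delta push inequal} and unwinding the pushes gives $0\le\int_X v\,d(\gamma_1-\delta_1)$, hence $\int_X v\,d\delta_1\le\int_X v\,d\gamma_1<\int_X v\,d\mu_{+}$, which contradicts \eqref{eq:delta_1 cvx cdn}. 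I expect the main obstacle to be the joint handling of the almost-sure condition and the integral-equality condition: specifically, designing the family $v_t$ so that it is genuinely convex for every $t$, reduces on the nose to $\tilde v_{\mathrm{push}}+(n+t)u^*$ on $W\cup B_Z$, and collapses on $Z\setminus B_Z$, together with keeping track of how the restrictions of $\delta_1,\delta_2$ to $W$, $B_Z$, $Z\setminus B_Z$ and $B$ interact once the pushes are unwound.
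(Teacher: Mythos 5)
Your proposal is correct and matches the paper's own proof of Theorem~\ref{thm: nescessity} step for step: the same contradiction via the test function $v$, the same reduction via Lemma~\ref{lemma:both delta integral equal} and Lemma~\ref{lemma:divide Z and W}, the same push-measure bound of Lemma~\ref{lemma:pushu} integrated to Lemma~\ref{lemma:integral inequality between delta_1 and delta_2}, and the same $v_t$-family plus bounded-convergence trick in Lemma~\ref{lemma:final lemma for necessity}. (One small slip you inherit verbatim from the paper: since $v$ is a max with $0$ it is nonnegative, so ``$v(x)<0$ when $\max(x)<2c+1-p$'' should read ``$v(x)=0$'' on that region --- the inequality \eqref{eq:c>c^star implies gamma1 does not cvx dominate mu+} then comes from the fact that on $\{v>0\}\cap B$ the density of $\gamma_1$ is the constant $(n+1)(c+1-p)<c+1$.)
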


To prove this theorem, we establish the following lemmas. Recall that
\begin{gather*}
    W \equiv \qty{x\in X: \max(x)\in (p,c+1]}\\
    Z \equiv \qty{x\in X: \max(x)\in [c,p]}\\
    B\equiv \qty{x\in X: \max(x)=c+1}\\
    B_Z \equiv \qty{x\in X: \max(x)=p}
\end{gather*}

\begin{rmk*}[Lemma \ref{lemma:both delta integral equal} restate]
Suppose that there exists $\delta\in \Gamma_{+}(X\times X)$ that satisfies  \eqref{eq:delta_1 cvx cdn}, \eqref{eq:delta_2 cvx cdn} and \eqref{eq:delta integral cdn}.
Then 
\begin{align*}
    \int_{X}u^*d\delta_1=\int_{X}u^*d\mu_+,\\
    \int_{X}u^*d\delta_2=\int_{X}u^*d\mu_-.
\end{align*}
\end{rmk*}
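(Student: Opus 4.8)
The plan is to obtain both equalities simultaneously from a single squeeze argument, by testing the two convex-dominance hypotheses against the function $u^*$ itself. The preliminary observation is that $u^*(x) = \max\{\max(x)-p,\,0\}$ is an admissible test function, i.e. $u^* \in \mathcal{U}(X)$: it is continuous, it is non-decreasing because $x \mapsto \max(x)$ is, and it is convex as the pointwise maximum of the affine maps $x \mapsto x_j - p$ ($j = 1,\dots,n$) and the constant $0$. Hence $u^*$ may legitimately be used in the definition of $\succeq_\text{cvx}$.

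Next I would evaluate the two hypotheses at $u^*$. Condition \eqref{eq:delta_1 cvx cdn} (namely $\delta_1 \succeq_\text{cvx} \mu_+$) gives $\int_X u^*\,d\delta_1 \ge \int_X u^*\,d\mu_+$, while condition \eqref{eq:delta_2 cvx cdn} (namely $\mu_- \succeq_\text{cvx} \delta_2$) gives $\int_X u^*\,d\mu_- \ge \int_X u^*\,d\delta_2$. Adding the first inequality to the negation of the second yields
\[
\int_X u^*\,d(\delta_1-\delta_2) \;\ge\; \int_X u^*\,d\mu_+ - \int_X u^*\,d\mu_- \;=\; \int_X u^*\,d\mu .
\]
The integral-equality hypothesis \eqref{eq:delta integral cdn} asserts that the two ends of this display coincide, so the displayed inequality is in fact an equality. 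Since a sum of two inequalities $a \ge a'$ and $b \ge b'$ can equal $a' + b'$ only when $a = a'$ and $b = b'$, applying this with $a = \int_X u^*\,d\delta_1$, $a' = \int_X u^*\,d\mu_+$, $b = -\int_X u^*\,d\delta_2$, $b' = -\int_X u^*\,d\mu_-$ yields precisely the two asserted equalities.

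There is no substantial obstacle: the argument is essentially a two-line squeeze once the hypotheses are in place. The only points needing a moment's care are confirming that $u^*$ genuinely lies in $\mathcal{U}(X)$, so that it is a valid test function for both dominance relations, and noting that all four integrals are finite — which holds because $u^*$ is bounded on the compact hyperrectangle $X$ and every measure involved is finite — so that the rearrangement of the inequalities is legitimate.
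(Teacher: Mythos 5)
Your proposal is correct and matches the paper's proof essentially verbatim: both test the hypotheses \eqref{eq:delta_1 cvx cdn} and \eqref{eq:delta_2 cvx cdn} against the convex, non-decreasing function $u^*$ and then use \eqref{eq:delta integral cdn} to force the resulting pair of inequalities to collapse into equalities. You merely spell out the membership $u^*\in\mathcal{U}(X)$ and the finiteness of the integrals a little more explicitly than the paper does.
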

\begin{proof}
Since $u^*$ is continuous, non-decreasing, and convex, \eqref{eq:delta_1 cvx cdn} and \eqref{eq:delta_2 cvx cdn} mean
\begin{gather*}
    \int_{X}u^*d\delta_1\geq \int_{X}u^*d\mu_+,\\
    \int_{X}u^*d\delta_2\leq \int_{X}u^*d\mu_-.
\end{gather*}
Here, \eqref{eq:delta integral cdn} implies that both inequalities hold with equality.

\end{proof}

\begin{rmk*}[Lemma\ref{lemma:divide Z and W} restate]
Suppose that there exists $\delta\in \Gamma_{+}(X\times X)$ that satisfies \eqref{eq:delta moreover cdn}, \eqref{eq:delta_1 cvx cdn}, \eqref{eq:delta_2 cvx cdn} and, \eqref{eq:delta integral cdn}.
\begin{gather*}
    \delta_1(Z)=\mu_+(Z)\\
    \delta_1\left(B\right)=\mu_+\left(B\right)
\end{gather*}
In particular, the second equation and \eqref{eq:delta moreover cdn} implies that $\delta_1$ that satisfies \eqref{eq:delta moreover cdn}, \eqref{eq:delta_1 cvx cdn}, \eqref{eq:delta_2 cvx cdn} and, \eqref{eq:delta integral cdn} has measure only on $Z\cup B$.
\end{rmk*}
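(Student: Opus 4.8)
The plan is to pin down $\delta_1$ on $Z$ and on $B$ by testing against a one‑parameter family of convex functions. The first step is to make $\mu_+$ completely explicit from the description of $\mu$ recorded above: $\mu_+$ is the point mass $1$ at $(c,\dots,c)$, together with, for each $i$, the $(n-1)$‑dimensional uniform measure of total mass $c+1$ on the facet $\{x_i=c+1\}$. Since $p\in(c,c+1)$, the atom lies in $Z$ while the facets make up $B=\{\max(x)=c+1\}$, so $\mu_+(Z)=1$, $\mu_+(B)=n(c+1)$, and by \eqref{eq:delta moreover cdn} also $\delta_1(X)=\mu_+(X)=1+n(c+1)$.

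Next I would introduce, for $a\in[c,c+1]$, the functions $\phi_a(x)\equiv(\max(x)-a)^+$. Each $\phi_a$ is continuous, non‑decreasing and convex, hence lies in $\mathcal{U}(X)$; moreover $\phi_p=u^*$ and $\phi_{c+1}\equiv 0$ on $X$. A direct computation gives $\int_X\phi_a\,d\mu_+=(c+1-a)\,n(c+1)$ for all $a\in[c,c+1]$. Consider $g(a)\equiv\int_X\phi_a\,d\delta_1-(c+1-a)\,n(c+1)$. Then $g\ge 0$ on $[c,c+1]$ by \eqref{eq:delta_1 cvx cdn}; $g$ is convex in $a$ because $a\mapsto(\max(x)-a)^+$ is convex for each fixed $x$ and the remaining term is affine; $g(c+1)=0$ since $\phi_{c+1}\equiv0$; and $g(p)=0$ since $\phi_p=u^*$ and $\int u^*\,d\delta_1=\int u^*\,d\mu_+$ by Lemma \ref{lemma:both delta integral equal}. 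A nonnegative convex function vanishing at the two points $p<c+1$ vanishes on the whole segment, so $h(a)\equiv\int_X(\max(x)-a)^+\,d\delta_1$ coincides with the affine function $(c+1-a)\,n(c+1)$ for every $a\in[p,c+1]$.

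The final step converts this into statements about $\delta_1$. For $p\le a<a'\le c+1$, using the partition $\{\max(x)>a\}=\{a<\max(x)\le a'\}\sqcup\{\max(x)>a'\}$ and expanding $h(a)-h(a')$, one obtains
\[
\int_{\{a<\max(x)\le a'\}}(\max(x)-a)\,d\delta_1+(a'-a)\,\delta_1(\{\max(x)>a'\})=(a'-a)\,n(c+1).
\]
Bounding the first term above by $(a'-a)\,\delta_1(\{a<\max(x)\le a'\})$ and below by $0$ yields $\delta_1(\{\max(x)>a'\})\le n(c+1)\le\delta_1(\{\max(x)>a\})$ for all such $a,a'$. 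Letting $a'\downarrow p$ along the increasing family $\{\max(x)>a'\}\uparrow W$ gives $\delta_1(W)\le n(c+1)$; letting $a\uparrow c+1$ along the decreasing family $\{\max(x)>a\}\downarrow B$ gives $\delta_1(B)\ge n(c+1)$ (both by continuity of the finite measure $\delta_1$). Since $B\subseteq W$, these estimates squeeze to $\delta_1(W)=\delta_1(B)=n(c+1)=\mu_+(B)$, whence $\delta_1(Z)=\delta_1(X)-\delta_1(W)=1=\mu_+(Z)$. The ``in particular'' claim is then immediate: $Z$ and $B$ are disjoint because $p<c+1$, and $\delta_1(Z)+\delta_1(B)=\delta_1(X)$, so $\delta_1$ puts no mass outside $Z\cup B$.

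The only delicate point is this last step—turning the affine behaviour of $h$ on $[p,c+1]$ into the measure inequalities—since it involves estimating $\delta_1$ of a thin slab $\{a<\max(x)\le a'\}$ and then taking the correct monotone limits (from below for $\{\max(x)>a'\}\uparrow W$, from above for $\{\max(x)>a\}\downarrow B$), relying on finiteness of $\delta_1$. Everything else (the explicit form of $\mu_+$, the membership $\phi_a\in\mathcal{U}(X)$, and the convexity and endpoint vanishing of $g$) is routine.
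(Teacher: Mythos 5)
Your proof is correct, and it takes a genuinely different route from the paper's. The paper proves $\delta_1(B)=\mu_+(B)$ as two separate inequalities: the direction $\mu_+(B)\ge\delta_1(B)$ comes from the identity $\int_X u^*\,d\delta_1=\int_X u^*\,d\mu_+$ together with $u^*\ge 0$ and $u^*$ being constant on $B$ while $\mu_+$ puts no mass on $W\setminus B$; the direction $\delta_1(B)\ge\mu_+(B)$ uses the explicit approximating sequence $h_t(x)=\max\{t\max(x)-t(c+1)+1,0\}\in\mathcal{U}(X)$ converging to $\mathbb{I}_B$ and the bounded convergence theorem. It then gets $\delta_1(Z)=\mu_+(Z)$ by showing $\int_{W\setminus B}u^*\,d\delta_1=0$ forces $\delta_1(W\setminus B)=0$. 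Your argument replaces all of this with the single one-parameter family $\phi_a=(\max(x)-a)^+$: the function $g(a)=\int\phi_a\,d\delta_1-\int\phi_a\,d\mu_+$ is nonnegative (by convex dominance), convex in $a$ (pointwise convexity in $a$ plus a positive measure preserves convexity), and vanishes at the two endpoints $a=p$ (via Lemma \ref{lemma:both delta integral equal}) and $a=c+1$; hence $g\equiv 0$ on $[p,c+1]$, and your slab-decomposition of $h(a)-h(a')$ then squeezes $\delta_1(\{\max(x)>t\})$ to equal $n(c+1)$ throughout $(p,c+1)$, giving both measure identities by monotone limits. What your approach buys is a cleaner and more uniform argument that avoids inventing the ad hoc approximants $h_t$; it also yields, as a free byproduct, the stronger structural fact that $\delta_1(\{\max(x)>t\})=\mu_+(B)$ for every $t\in(p,c+1)$, i.e.\ $\delta_1$ has no mass on any interior slab $\{t<\max(x)<c+1\}$, which is slightly sharper than what the lemma asserts. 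The paper's approach is somewhat more elementary in that it never needs the explicit values $\mu_+(Z)=1$, $\mu_+(B)=n(c+1)$, whereas yours does rely on computing $\int\phi_a\,d\mu_+=(c+1-a)n(c+1)$.
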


\begin{proof}
$\mu_{+}(B)\geq \delta_1(B)$ follows from 
\begin{eqnarray*}  
(c+1-p)\mu_{+}(B)
&=&\int_{B}u^*d\mu_{+}\\
&=&\qty(\int_{X}u^*d\delta_1-\int_{X}u^*d\mu_{+})+\int_{B}u^*d\mu_{+}\quad (\because \text{Lemma} \ref{lemma:both delta integral equal})\\
&=& \int_{X}u^*d\delta_1-\int_{X\backslash B }u^*d\mu_{+}\\
&\geq& \int_{B}ud\delta_1\quad \qty(\because \int_{X\setminus B}u^*d\mu_+ = 0)\\
&=&(c+1-p)\delta_{1}(B).
\end{eqnarray*}

Next, we will prove $\mu_{+}(B)\leq \delta_1(B)$.
Let $h_t(x)\equiv\max(t\max(x)-t(c+1)+1, 0)$, where $t>0$. Then, $h_t\leq 1$ for all $t>0$,  and $h_t$ pointwise converges to 
\begin{gather*}
    h(x)\equiv
    \left\{\begin{array}{ll}
    1 & (x\in B) \\
    0  & (x\in X\backslash B).
\end{array}\right.
\end{gather*}

Notice that $h_t$ is a maximum of finite affine functions for each $t>0$. Thus,  $h_t\in\mathcal{U}(X)$ and by \eqref{eq:delta_1 cvx cdn},
\begin{equation*}
    \int_{X} h_t \, d\delta_1 \geq  \int_{X} h_t \, d\mu_+.
\end{equation*}

By the bounded convergence theorem,
\begin{gather*}
    \lim_{t\rightarrow \infty} \int_{X} h_t d\delta_1 = \int_{X}  h d\delta_1 =\delta_1(B)\\
    \lim_{t\rightarrow \infty} \int_{X} h_t d\mu_{+} = \int_{X}  h d\mu_{+}=\mu_{+}(B)
\end{gather*}
Hence, $\mu_{+}(B)\leq \delta_1(B)$.
Thus, we obtain $\delta_1\left(B\right)=\mu_+\left(B\right)$.\par

Next, we will prove that $\delta_1(Z)= \mu_+(Z)$. By $\delta_1\left(B\right)=\mu_+\left(B\right)$ and Lemma \ref{lemma:both delta integral equal},  
$$
\int_{W\setminus B}u^*d\delta_1
=\int_{X}u^*d\delta_1-\int_{B}u^*d\delta_1
=\int_{X}u^*d\mu_{+}-\int_{B}u^*d\mu_{+}
=\int_{W\setminus B}u^*d\mu_{+}=0
$$
This implies that $u^*=0$ $\delta_1$-almost surely in $W\backslash B$, which means that $\delta_1(W\backslash B)=0$.
Therefore,
$$
\delta_1(Z)=\delta_1(X)-\delta_1(W\backslash B)-\delta_1(B)=\mu_{+}(X)-\mu_{+}(W\backslash B)-\mu_{+}(B)=\mu_{+}(Z)
$$
\end{proof}


\begin{rmk*}[Lemma \ref{lemma:pushu} restate]
Suppose that there exists $\delta\in \Gamma_{+}(X\times X)$ that satisfies \eqref{eq:delta a.s. cdn}.
Then the following two equations hold:
\begin{gather*}
    \delta_1(Z\setminus B_Z)=\delta_2(Z\setminus B_Z)\\
    \delta_1(U\cap W)\leq \delta_2(push(U)\cap W)+(\delta_2-\delta_1)(push(U)\cap B_Z) \text{ for  all upper set $U\subset B$ with respect to $B$}.
\end{gather*}
\end{rmk*}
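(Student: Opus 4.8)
The only tool available is the $\delta$-almost-sure identity \eqref{eq:delta a.s. cdn}, $u^*(x)-u^*(y)=\|x-y\|_\infty$; since its right-hand side is nonnegative, I would first extract two structural consequences and then simply push the masses of $\delta$ around. The two consequences are: (i) whenever one coordinate of a $\delta$-typical pair has $u^*=0$, the pair collapses to the diagonal (then $\|x-y\|_\infty=0$); and (ii) if $x$ lies on the top face $B$ and $x\in U$ for an upper set $U\subset B$, then the partner $y$ is forced into $push(U)$.

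For the first identity I would argue as follows. For $\delta$-a.e.\ $(x,y)$ with $x\in Z\setminus B_Z$ one has $u^*(x)=0$, so $u^*(y)=-\|x-y\|_\infty\le 0$, forcing $u^*(y)=0$ and $x=y$. For $\delta$-a.e.\ $(x,y)$ with $y\in Z\setminus B_Z$ one has $u^*(y)=0$ and, crucially, the strict inequality $\max(y)<p$; were $x\in W$, picking $j$ with $x_j=\max(x)$ would give $\|x-y\|_\infty\ge x_j-y_j\ge\max(x)-\max(y)>\max(x)-p=u^*(x)=\|x-y\|_\infty$, a contradiction, so $x\in Z$ and again $x=y$. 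Hence $\delta$ restricted to $(Z\setminus B_Z)\times X$ and to $X\times(Z\setminus B_Z)$ are both carried by the diagonal $\{(x,x):x\in Z\setminus B_Z\}$, whose two marginal masses coincide, giving $\delta_1(Z\setminus B_Z)=\delta_2(Z\setminus B_Z)$. This is precisely where $Z\setminus B_Z$, not $Z$, is needed: on $B_Z$ the inequality $\max(y)<p$ degenerates and $x\in W$ is no longer ruled out.

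For the inequality, fix a measurable upper set $U\subset B$; since $p<c+1$ we have $B\subset W$, so $U\cap W=U$. The heart of the argument is consequence (ii): for $\delta$-a.e.\ $(x,y)$ with $x\in U$, I claim $y\in push(U)$. Indeed, with $x_i=c+1$ for some $i$, the bound $\|x-y\|_\infty\ge c+1-y_i$ together with $\|x-y\|_\infty=(c+1-p)-u^*(y)$ gives $u^*(y)\le y_i-p$, which forces $y_i=\max(y)\ge p$, so $y\in W\cup B_Z$; setting $t\equiv c+1-\max(y)$ one then checks $\|x-y\|_\infty=t$, hence $x\le y+t\bm{1}$ coordinatewise and $y+t\bm{1}\in B$, so the upper-set property of $U$ gives $y+t\bm{1}\in U\cap B$, and $y\in push(U)$ by Lemma \ref{lemma:y+t1}. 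This yields $\delta_1(U)=\delta(U\times X)=\delta(U\times push(U))$. Next, every point of $push(U)\cap B_Z$ has $u^*=0$, so by (i) the restriction of $\delta$ to $(push(U)\cap B_Z)\times X$ is carried by the diagonal, of mass $\delta_1(push(U)\cap B_Z)$, and that diagonal lies inside $(push(U)\cap B_Z)\times push(U)$. Since $U\subset B$ and $push(U)\cap B_Z\subset B_Z$ are disjoint, $U\times push(U)$ and $(push(U)\cap B_Z)\times push(U)$ are disjoint subsets of $X\times push(U)$, so $\delta_2(push(U))\ge\delta_1(U)+\delta_1(push(U)\cap B_Z)$; rearranging and splitting $push(U)=(push(U)\cap W)\sqcup(push(U)\cap B_Z)$ produces the asserted bound.

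The step I expect to be the main obstacle is the inclusion $y+t\bm{1}\in U$ inside (ii): it requires passing from the $\ell_\infty$-equality $\|x-y\|_\infty=t$ to the coordinatewise domination $x\le y+t\bm{1}$, and, just as importantly, checking that $y+t\bm{1}$ actually lies in $B$ (not merely in $X$) so that the notion ``upper set with respect to $B$'' applies. The only other thing to watch is hygiene with null sets: each ``for $\delta$-a.e.'' reduction discards a $\delta$-null set, and one should confirm these exceptional sets do not affect the marginal masses used above.
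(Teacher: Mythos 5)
Your proof is correct and follows essentially the same architecture as the paper's. For the first identity, both arguments show that on $S_{u^*}=\{(x,y):u^*(x)-u^*(y)=\|x-y\|_\infty\}$ one has $x\in Z\setminus B_Z\iff y\in Z\setminus B_Z$ (with the pair collapsing to the diagonal there); you only explicitly rule out $x\in W$ when $y\in Z\setminus B_Z$, but since $X=W\sqcup Z$ and $u^*\equiv 0$ on all of $Z$, your conclusion $x=y$ is immediate, matching the paper's slightly more explicit case analysis that rules out $x\in W\cup B_Z$ at once. For the inequality, the decomposition is the same: the source set is partitioned into $U$ (using $U\cap W=U$) and the diagonal piece on $push(U)\cap B_Z$, and both are shown to land inside $push(U)$ under $\delta$-typical pairs; the only cosmetic difference is that you reprove $y\in push(U)$ from scratch through the bound $u^*(y)\le y_i-p$ and Lemma~\ref{lemma:y+t1}, whereas the paper gets $y\in W\cup B_Z$ from the already-established first identity and then cites the defining property of $push(U)$ directly. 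Your explicit disjoint-mass bookkeeping $\delta_2(push(U))\ge\delta_1(U)+\delta_1(push(U)\cap B_Z)$ reproduces the paper's abstract transport implication $[x\in P\Rightarrow y\in Q\text{ on }S_{u^*}]\Rightarrow\delta_1(P)\le\delta_2(Q)$; both yield the stated inequality after splitting $push(U)=(push(U)\cap W)\sqcup(push(U)\cap B_Z)$.
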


\begin{proof}
Let $S_{u^*}\equiv \qty{(x,y)\in X\times X |u(x)-u(y)=\|x-y\|_{\infty}}$ and $T_{u^*}=(X\times X)\setminus S_{u^*}$. According to \eqref{eq:delta a.s. cdn}, $\delta(T_{u^*})=0$. Note that for all $P,Q\subset X$, 
\begin{gather*}
     [x\in P  \Rightarrow y\in Q \text{ for all $(x,y)\in S_{u^*}$ }] \Rightarrow \delta_1(P)\leq \delta_2(Q)\\
     [x\in P  \iff y\in Q \text{ for all $(x,y)\in S_{u^*}$ }] \Rightarrow \delta_1(P)= \delta_2(Q)
\end{gather*}  This is because, if $x\in P  \Rightarrow y\in Q \text{ for all $(x,y)\in S_{u^*}$ }$,  
 \begin{eqnarray*}
    \delta_1(P)
    &=&\delta(P\times X)\\
    &=& \delta(P\times Q)+\delta(P\times(X\setminus Q))\\
    &=& \delta(P\times Q) (\because P\times(X\setminus Q)\subset T_{u^*} \text{ and \eqref{eq:delta a.s. cdn}})\\
    &\leq& \delta(X\times Q)\\
    &=&\delta_2(Q).
 \end{eqnarray*}
 We obtain the second statement by repeating the above discussion twice. 
 
First, we will prove that for all $(x,y)\in S_{u^*}$,
\begin{gather}\label{eq:eqivalentZBZ}
    x\in Z\setminus B_Z  \iff y\in  Z\setminus B_Z  
\end{gather}
\begin{itemize}
\item If $x\in Z\setminus B_Z$, $u^*(x)=0$. Then, $\|x-y\|_{\infty}=0$ because 
\begin{gather*}
    0\geq 0-u^*(y)=u^*(x)-u^*(y)=\|x-y\|_{\infty}\geq 0.
\end{gather*} Thus, $y=x\in Z\setminus B_Z$.
    \item Suppose that $x\notin Z\setminus B_Z$ and $y\in Z\setminus B_Z$. We will prove that $(x,y)\notin S_{u^*}.$ Since $x\in  W\cup B_Z$, by  definition of $u^*$, $u^*(x)-u^*(y)=(\max(x)-p)-0=\max(x)-p$.
Take any $i^*\in\qty{1,\ldots,n}$ such that $x_{i^*}=\max(x)$. Then 
\begin{eqnarray*}
    \|x-y\|_{\infty}
    &\geq& |x_{i^*}-y_{i^*}|\\
    &=&x_{i^*}-y_{i^*}\quad (\because x_{i^*}\geq p > y_{i^*})\\
    &>&\max(x)-p\quad(\because y_{i^*}<p)\\
    &=& u^*(x)-u^*(y).
\end{eqnarray*} 
Therefore, $(x,y)\notin S_{u^*}.$
\end{itemize}
These discussions prove that $ \text{ for all $(x,y)\in S_{u^*}$, }x\in Z\setminus B_Z  \iff y\in  Z\setminus B_Z $. Therefore, $\delta_1(Z\setminus B_Z)=\delta_2(Z\setminus B_Z).$\par
Second, we will prove that
\begin{gather*}
    \delta_1(U\cap W)+\delta_1(push(U)\cap B_Z)\leq \delta_2(push(U)\cap W)+\delta_2(push(U)\cap B_Z)
\end{gather*}
 for all upper sets $U\subset B$ with respect to $B$. It suffices to prove that; for all $(x,y)\in  S_{u^*}$,
 \begin{gather*}
     x\in (U\cap W)\cup (push(U)\cap B_Z)  \Rightarrow y\in (push(U)\cap W)\cup (push(U)\cap B_Z) \text{  }
 \end{gather*}
Assume that $x\in (U\cap W)\cup (push(U)\cap B_Z) \text{ and }u^*(x)-u^*(y)=\|x-y\|_{\infty}$.
\begin{itemize}
    \item Suppose that $x\in U\cap W$. By \eqref{eq:eqivalentZBZ}, $x\in W$ implies that $y\notin Z\setminus B_Z$, which means that $y\in W\cup B_Z$. By definition of $push(U)$, $y\in push(U)$. Thus,\begin{gather*}
        y\in (push(U)\cap W)\cup (push(U)\cap B_Z).
    \end{gather*}
    \item If $x\in push(U)\cap B_Z$, $u(x)=0$. Then, $\|x-y\|_{\infty}=0$ as we have discussed earlier, and $y=x\in push(U)\cap B_Z$.
\end{itemize}
This completes the proof.   
\end{proof}

\begin{dfn*}[Definition\ref{dfn:push of function} restate]
    For a bounded function $f:B\rightarrow \mathbb{R}$, $f_{push}:X\rightarrow \mathbb{R}$ is defined as 
    \begin{gather*}
        f_{push}(x)=f(x+t\bm{1}), \text{ where } t=c+1-\max(x).
    \end{gather*}
\end{dfn*}

\begin{rmk*}[Lemma \ref{lemma:integral inequality between delta_1 and delta_2} restate]
Suppose that $c>c^* $ and there exists $\delta\in \Gamma_{+}(X\times X)$ that satisfies \eqref{eq:delta moreover cdn}, \eqref{eq:delta_1 cvx cdn}, \eqref{eq:delta_2 cvx cdn}, \eqref{eq:delta a.s. cdn} and \eqref{eq:delta integral cdn}.
Then, for all nonnegative, bounded  and nondecreasing functions $f:B \rightarrow [0, \infty)$,
\begin{gather}
\int_W fd\gamma_1=\int_W f_\text{push}d\mu_-,\tag{\ref{eq:gamma push equal}}\\
\int_W fd\delta_1\leq \int_W f_\text{push}d\delta_2+\int_{B_Z}f_\text{push}d\qty(\delta_2-\delta_1).\tag{\ref{eq:delta push inequal}}
\end{gather}
\end{rmk*}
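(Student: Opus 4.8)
The plan is to read both identities through the ``push map'' $\pi\colon W\cup B_Z\to B$ given by $\pi(y)=y+(c+1-\max(y))\bm{1}$. This map is continuous (hence Borel measurable), sends $W$ and $B_Z$ into $B$, and satisfies $f_{\text{push}}=f\circ\pi$ on $W\cup B_Z$ for every $f\colon B\to\mathbb{R}$. Its key property follows at once from Lemma \ref{lemma:y+t1}: since the only $t$ with $y+t\bm{1}\in B$ is $t=c+1-\max(y)$, for each measurable upper set $U\subset B$ and each $y\in W\cup B_Z$ we have $y\in push(U)\iff\pi(y)\in U$. As $B\subset W$, $B_Z\subset Z$ and $W\cap Z=\emptyset$, this gives $push(U)\cap W=(\pi|_W)^{-1}(U)$ and $push(U)\cap B_Z=(\pi|_{B_Z})^{-1}(U)$. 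Finally, recall that $\gamma_1$ is supported on $B$ and, by Lemma \ref{lemma:divide Z and W}, $\delta_1$ restricted to $W$ is supported on $B$; so $\int_W\,\cdot\,d\gamma_1$ and $\int_W\,\cdot\,d\delta_1$ in the statement are to be read as $\int_B\,\cdot\,d\gamma_1$ and $\int_B\,\cdot\,d\delta_1$.

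For \eqref{eq:gamma push equal} (which uses neither $c>c^*$ nor the hypotheses on $\delta$), first observe that $\mu_-(B_Z)=0$: the set $B_Z$ lies in $\bigcup_i\{x_i=p\}$ with $p\in(c,c+1)$, hence is null for $n$-dimensional Lebesgue measure and meets each face $\{x_i=c\}$ in an $(n-1)$-dimensional null set, so it carries none of the interior or boundary mass of $\mu_-$. Therefore, by Definition \ref{dfn:gamma_1} and the previous paragraph, $\gamma_1(U)=\mu_-(push(U))=\mu_-(push(U)\cap W)=(\mu_-|_W)\big((\pi|_W)^{-1}(U)\big)$ for every measurable upper set $U\subset B$. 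Measurable upper sets form a $\pi$-system that generates the Borel $\sigma$-algebra of $B$ and contains $B$ itself, and $\gamma_1$ and $(\pi|_W)_*(\mu_-|_W)$ are finite measures, so they coincide. The image-measure (change-of-variables) formula then gives $\int_B f\,d\gamma_1=\int_W (f\circ\pi)\,d\mu_-=\int_W f_{\text{push}}\,d\mu_-$ for all bounded measurable $f$; no monotonicity of $f$ is needed for this part.

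For \eqref{eq:delta push inequal}, introduce the finite measures on $B$ defined by $\lambda_2:=(\pi|_W)_*(\delta_2|_W)$, $\rho_2:=(\pi|_{B_Z})_*(\delta_2|_{B_Z})$ and $\rho_1:=(\pi|_{B_Z})_*(\delta_1|_{B_Z})$, so that $\lambda_2(U)=\delta_2(push(U)\cap W)$ and $\rho_j(U)=\delta_j(push(U)\cap B_Z)$ for measurable upper sets $U\subset B$. Applying Lemma \ref{lemma:pushu} to an arbitrary such $U$ (using $U\cap W=U$ since $U\subset B\subset W$) shows that the signed measure $\sigma:=\lambda_2+\rho_2-\rho_1-\delta_1|_B$ satisfies $\sigma(U)\ge 0$ for every measurable upper set $U\subset B$. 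I would then conclude by a layer-cake computation: for a bounded, nonnegative, nondecreasing (measurable) $f\colon B\to[0,\infty)$ one has $f=\int_0^{\|f\|_\infty}\mathbb{I}_{\{f>t\}}\,dt$ with each superlevel set $\{f>t\}$ a measurable upper set of $B$, so $\int_B f\,d\sigma=\int_0^{\|f\|_\infty}\sigma(\{f>t\})\,dt\ge 0$. Rewriting the four integrals via the image-measure formula ($\int_B f\,d\lambda_2=\int_W f_{\text{push}}\,d\delta_2$, $\int_B f\,d\rho_j=\int_{B_Z}f_{\text{push}}\,d\delta_j$, $\int_B f\,d\delta_1=\int_W f\,d\delta_1$) yields exactly $\int_W f\,d\delta_1\le\int_W f_{\text{push}}\,d\delta_2+\int_{B_Z}f_{\text{push}}\,d(\delta_2-\delta_1)$.

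I expect the main obstacle to be bookkeeping rather than any deep idea: identifying each of $\gamma_1,\lambda_2,\rho_1,\rho_2$ correctly as a pushforward along $\pi$ of a restriction of $\mu_-$, $\delta_1$ or $\delta_2$, keeping the region inclusions ($B\subset W$, $B_Z\subset Z$, $W\cap Z=\emptyset$, $\mu_-(B_Z)=0$) straight, and matching the ``$\int_W$'' in the statement to the actual supports of $\gamma_1$ and $\delta_1|_W$. The one analytic point worth stating carefully is that the implication ``$\sigma\ge 0$ on upper sets $\Rightarrow\int f\,d\sigma\ge 0$'' requires $f\ge 0$ (so the layer-cake representation starts at $0$) but not convexity, so Lemma \ref{lemma:Shaked and Shanthikumar} alone does not cover the class of functions claimed in the statement and the direct argument above is needed.
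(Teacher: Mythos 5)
Your proof is correct and rests on the same two inputs as the paper's proof, namely the defining relation $\gamma_1(U)=\mu_-(push(U))$ and Lemma \ref{lemma:pushu}, both at the level of upper sets of $B$, but it packages the passage from upper sets to general functions differently. The paper discretizes $b$ into explicit upper-set step functions $b^m$, proves uniform convergence of $b^m\to b$ and of the pushed versions $b_p^m\to b_{\text{push}}$, and takes a limit in each of the two displays; you instead identify $\gamma_1$, together with your auxiliary measures $\lambda_2,\rho_1,\rho_2$, as pushforwards along the continuous map $\pi(y)=y+(c+1-\max(y))\bm{1}$, obtain \eqref{eq:gamma push equal} from the change-of-variables formula once $\gamma_1=(\pi|_W)_*(\mu_-|_W)$ has been pinned down on the $\pi$-system of measurable upper sets, and obtain \eqref{eq:delta push inequal} by a layer-cake decomposition of the auxiliary signed measure $\sigma$. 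Both routes are sound; yours is arguably more transparent and has two incidental benefits: it makes explicit why Definition \ref{dfn:gamma_1} is well-posed (the pushforward $(\pi|_W)_*(\mu_-|_W)$ supplies the existence that the paper's footnote asserts without proof), and it exhibits \eqref{eq:gamma push equal} as needing neither monotonicity of $b$ nor the hypotheses on $c$ and $\delta$. You are also right that Lemma \ref{lemma:Shaked and Shanthikumar}, being a sufficient condition for convex dominance, is not the right tool for \eqref{eq:delta push inequal}; the paper, like you, argues directly through upper sets rather than invoking it.
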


\begin{proof}
Suppose that $b<M$ for some $M>0$. For any $m\in \mathbb{N}$, define $(U_i^m)_{i=1}^m\subset X$ as 
\begin{gather*}
    U_i^m\equiv\qty{x\in X:b(x)\geq \dfrac{M\cdot i}{m}}.
\end{gather*}
Note that $U_i^m\subset B$ is an upper set with respect to $B$ for all $m$ and $i$.
Let $b^m:B\rightarrow \mathbb{R}$ and $b_p^m:W\cup B_Z\rightarrow \mathbb{R}$ be defined as
\begin{gather*}
    b^m(x)\equiv \sum_{i=1}^m \dfrac{M}{m}\mathbb{I}(x\in U_i^m),\\
    b_p^m(x)\equiv \sum_{i=1}^m \dfrac{M}{m}\mathbb{I}(x\in push(U_i^m)).
\end{gather*}
Then, for all $m\in \mathbb{N}$,
\begin{gather*}
    \sup_{x\in B} |b^m-b|\leq \dfrac{M}{m},\\
    \sup_{x\in W\cup B_Z} |b_p^m-b_{push}|\leq \dfrac{M}{m}.
\end{gather*}
This is because for all $x\in B$, there exists $i=1\ldots, m$ such that $\frac{M}{m}(i-1)\leq b(x)< \frac{M}{m}i$. By definition of $b^m$, $b^m(x)=\frac{M}{m}(i-1)$. Hence, $|b(x)-b^m(x)|\leq \frac{M}{m} $. \par
Likewise, take any $x\in W\cup B_Z$. Since $b_{push}< M$,  there exists $i=1\ldots, m$ such that
\begin{gather*}
    \frac{M}{m}(i-1)\leq b_{push}(x)<\frac{M}{m}i.
\end{gather*} 
By definition of $b_{push}$,
\begin{gather*}
    \frac{M}{m}(i-1)\leq b(x+t\bm{1})< \frac{M}{m}i, \text{ where } t=c+1-\max(x).
\end{gather*}
We have $ x+t\bm{1}\in U^m_{i-1}\cap B$ and $x+t\bm{1}\notin U^m_{i}\cap B$.
By Lemma \ref{lemma:y+t1},
\[
x\in push(U^m_i)\setminus push(U^m_{i+1}),
\]
By definition of $b_p^m$, $b_p^m(x)=\frac{M}{m}(i-1)$ holds. Thus, we obtain $|b_{push}(x)-b_p^m(x)|\leq \frac{M}{m} $.
Therefore, $b^m$ uniformly converges to $b$ on $B$, and $b_p^m$ uniformly converges to $b_{push}$ on $W\cup B_Z$. 
First,  we will prove \eqref{eq:gamma push equal}.
Note that $\gamma_1(U)=\mu_{-}(push(U))$ for all upper set $U\subset B$ . Then, by the definition of $b^m, b_p^m$,
\begin{gather*}
     \int_B b^m d\gamma_1 =\sum_{i=1}^m \dfrac{M}{m}\gamma_1(U_i^m) 
     = \sum_{i=1}^m \dfrac{M}{m}\mu_{-}(push(U_i^m))
     =\int_{W} b^m d\mu_{-}
\end{gather*}
Hence, 
\begin{align*}
\int_W b(x)d\gamma_1 
&= \int_B b(x) d\gamma_1\quad(\because  \text{ by \eqref{eq:delta moreover cdn} and Lemma \ref{lemma:divide Z and W}, $\gamma_1$ has measure only on $Z\cup B$} )\\
&=\lim_{m\rightarrow \infty}\int_B b^m d\gamma_1 (\because \text{ $b^m$ uniformly converges to $b$}) \\
&=  \lim_{m\rightarrow  \infty}\int_{W} b_p^m d\mu_{-}\\
&=\int_{W} b_{push}(x) d\mu_{-}(\because \text{ $b_p^m$ uniformly converges to $b_{push}$})
\end{align*}
Therefore,  $\int_W b(x)d\gamma_1=\int_{W} b_{push}(x) d\mu_{-}$.
Second, we will prove
\eqref{eq:delta push inequal}.
Lemma \ref{lemma:pushu} implies that for all $m\in \mathbb{N}$
\begin{eqnarray*}
     \int_B b^m d\delta_1 
     &=& \sum_{i=1}^m \dfrac{M}{m}\delta_1(U_i^m\cap B) \\
     &=&\sum_{i=1}^m \dfrac{M}{m}\delta_1(U_i^m\cap W)  \\
     &\leq& \sum_{i=1}^m \dfrac{M}{m}\qty(\delta_2(push(U_i^m)\cap W)+(\delta_2-\delta_1)(push(U_i^m)\cap B_Z))\quad(\because \text{Lemma } \ref{lemma:pushu})\\
     &=&\int_{W} b_p^m d\delta_2+\int_{B_Z} b_p^m d(\delta_2-\delta_1)
\end{eqnarray*}
Therefore,
\begin{align*}
    \int_Wb(x)d\delta_1
    &= \int_B b(x) d\delta_1\quad(\because  \text{ by \eqref{eq:delta moreover cdn} and Lemma \ref{lemma:divide Z and W}, $\delta_1$ has measure only on $Z\cup B$} )\\
    &=\lim_{m\rightarrow \infty}\int_B b^m d\delta_1\\
    &\leq  \lim_{m\rightarrow  \infty}\qty(\int_{W} b_p^m d\delta_2+\int_{B_Z} b_p^m d(\delta_2-\delta_1) )\\
    &=\int_{W} b_{push}(x) d\delta_2+ \int_{B_Z}b_{push}(x) d(\delta_2-\delta_1).
\end{align*}

\end{proof}

\begin{rmk*}[Lemma\ref{lemma:final lemma for necessity} restate]
Suppose that $c>c^* $ and there exists $\delta\in \Gamma_{+}(X\times X)$ that satisfies \eqref{eq:delta moreover cdn}, \eqref{eq:delta_1 cvx cdn}, \eqref{eq:delta_2 cvx cdn}, \eqref{eq:delta a.s. cdn} and \eqref{eq:delta integral cdn}.
Then $\delta_1\succeq_{cvx}\mu_+$ does not hold.
\end{rmk*}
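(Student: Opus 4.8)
The plan is to argue by contradiction. Assuming a $\delta\in\Gamma_+(X\times X)$ satisfying \eqref{eq:delta moreover cdn}--\eqref{eq:delta integral cdn} exists, I would produce a single function $u\in\mathcal{U}(X)$ with $\int_X u\,d\delta_1<\int_X u\,d\mu_+$, which directly contradicts \eqref{eq:delta_1 cvx cdn}. The candidate is the function $v(x)\equiv\max\left(\sum_{i=1}^n x_i-(nc+n+c-p),\,0\right)$ introduced above: because $c>c^*$, Proposition \ref{prop:cvx iff c^*} together with the density formula \eqref{eq:density} yields $\int_X v\,d\gamma_1<\int_X v\,d\mu_+$, i.e.\ \eqref{eq:c>c^star implies gamma1 does not cvx dominate mu+}. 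So it suffices to transport this strict inequality from the pushed measure $\gamma_1$ to $\delta_1$, that is, to establish $\int_X v\,d\delta_1\le\int_X v\,d\gamma_1$.

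To get that comparison I would exploit \eqref{eq:delta_2 cvx cdn} through a one-parameter family of test functions. Writing $\tilde v\equiv v|_B$, for $t>0$ set $v_t(x)\equiv\max\left(\tilde v_{\mathrm{push}}(x)+(n+t)(\max(x)-p),\,0\right)$. The first step is to check $v_t\in\mathcal{U}(X)$; this rests on the identity $\tilde v_{\mathrm{push}}(x)+(n+t)(\max(x)-p)=\max\left(\sum_i x_i-c-(n-1)p,\;n(\max(x)-p)\right)+t(\max(x)-p)$, whose right-hand side is a pointwise maximum of non-decreasing affine functions plus a non-decreasing convex function, hence convex and non-decreasing. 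Applying \eqref{eq:delta_2 cvx cdn} then gives $\int_X v_t\,d(\mu_--\delta_2)\ge 0$ for every $t>0$.

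The second step is a limiting argument. Split $X$ into $W\cup B_Z$ and $Z\setminus B_Z$. On $W\cup B_Z$ one has $v_t=\tilde v_{\mathrm{push}}+(n+t)u^*$, and the $u^*$-term integrates to $0$ against $\mu_--\delta_2$ by Lemma \ref{lemma:both delta integral equal} combined with $u^*\equiv 0$ on $Z\setminus B_Z$; on $Z\setminus B_Z$ one has $\max(x)-p<0$, so $v_t\to 0$ pointwise while staying uniformly bounded by $\tilde v_{\mathrm{push}}$, and bounded convergence annihilates that term as $t\to\infty$. What survives is $0\le\int_{W\cup B_Z}\tilde v_{\mathrm{push}}\,d(\mu_--\delta_2)$. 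Then I would invoke Lemma \ref{lemma:integral inequality between delta_1 and delta_2} with $b=\tilde v$ (non-negative, bounded, non-decreasing on $B$): \eqref{eq:gamma push equal} rewrites $\int_W\tilde v_{\mathrm{push}}\,d\mu_-$ as $\int_W\tilde v\,d\gamma_1$, while \eqref{eq:delta push inequal} bounds $\int_W\tilde v\,d\delta_1$ by $\int_W\tilde v_{\mathrm{push}}\,d\delta_2+\int_{B_Z}\tilde v_{\mathrm{push}}\,d(\delta_2-\delta_1)$; after bookkeeping the $B_Z$-contributions cancel (using that $\mu_-$ gives no mass to the Lebesgue-null set $B_Z$ and that $\tilde v_{\mathrm{push}},\delta_1\ge 0$), leaving $0\le\int_W\tilde v\,d(\gamma_1-\delta_1)$. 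Finally, since $\gamma_1$ is carried by $B\subset W$ and, by Lemma \ref{lemma:divide Z and W}, $\delta_1$ is carried by $Z\cup B$ with $v\equiv 0$ on $Z$ (which follows from $(n+1)p<n(c+1)+c$), this last quantity equals $\int_X v\,d(\gamma_1-\delta_1)$; combined with \eqref{eq:c>c^star implies gamma1 does not cvx dominate mu+} we obtain $\int_X v\,d\delta_1<\int_X v\,d\mu_+$, the contradiction sought.

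The hard part is making the almost-sure condition \eqref{eq:delta a.s. cdn} and the integral-equality condition \eqref{eq:delta integral cdn} cooperate simultaneously — these are precisely the two optimality conditions that, per the overview, cannot hold together when $c>c^*$. The device that reconciles them is the family $v_t$: the summand $(n+t)u^*$ is ``free'' against $\mu_--\delta_2$ exactly because of the integral equality (Lemma \ref{lemma:both delta integral equal}), so letting $t\to\infty$ localizes the surviving mass onto $W$, where the almost-sure condition — repackaged through the pushed sets in Lemmas \ref{lemma:pushu} and \ref{lemma:integral inequality between delta_1 and delta_2} — lets $\gamma_1$ dominate $\delta_1$. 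The $B_Z$-bookkeeping in that last step is the fiddly part; checking $v_t\in\mathcal{U}(X)$, the bounded-convergence estimate on $Z\setminus B_Z$, and $v\equiv 0$ on $Z$ are routine by comparison.
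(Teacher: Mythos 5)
Your proposal is correct and follows essentially the same argument as the paper's proof: same test function $v$, same auxiliary family $v_t$, same decomposition into $W\cup B_Z$ and $Z\setminus B_Z$, same invocation of Lemmas \ref{lemma:both delta integral equal}, \ref{lemma:divide Z and W}, and \ref{lemma:integral inequality between delta_1 and delta_2}, same bounded-convergence limit $t\to\infty$, and the same $B_Z$-cancellation. The only small addition you make (explicitly flagging $(n+1)p<n(c+1)+c$ as the reason $v\equiv 0$ on $Z$) is a detail the paper leaves implicit, and it does hold since $P(c)<c/(n+1)$ for all $c\ge 0$.
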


\begin{proof}
Recall we defined $v:X\rightarrow [0, \infty)$ as
\begin{gather*}
    v(x)\equiv\max\qty(\sum_{i=1}^n x_i-(nc+n+c-p), 0)
\end{gather*} 
and let $\Tilde{v}:B\to[0, \infty)$ be the restriction of $v$ to $B$, i.e. $\Tilde{v}\equiv v|_B$. Define $v_t:X\rightarrow[0, \infty)$ as
\begin{gather*}
    v_t(x)\equiv\max\qty(\Tilde{v}_{push}(x)+(n+t)(\max(x)-p),0)
\end{gather*}
for $t>0$.
Note that $\Tilde{v}_{push}:X\to [0, \infty)$ is
\begin{gather*}
    \Tilde{v}_{push}(x)=\max\qty(\sum_{i=1}^n x_i+(p-c)-n\max(x), 0)
\end{gather*}
So that
\begin{gather*}
\Tilde{v}_{push}(x)+(n+t)(\max(x)-p)
    =\max\qty(\sum_{i=1}^n x_i-c-(n-1)p,n(\max(x)-p))+t(\max(x)-p)
\end{gather*}
Since the pointwise maximum of affine functions is convex and all affine terms are non-decreasing, their sum remains convex. Thus, $v_t$ is in $\mathcal{U}(X)$ and by \eqref{eq:delta_2 cvx cdn},
\[
\int_X v_t d\qty(\mu_--\delta_2)\geq 0
\]
for all $t>0$. By $v_t(x)=\Tilde{f}_{push}(x)+(n+t)u^*(x)$ for $x\in W\cup B_Z$ and  Lemma \ref{lemma:both delta integral equal}
\begin{eqnarray*}
    0
    &\leq& \int_X v_t d\qty(\mu_--\delta_2)\\
    &=& \int_{W\cup B_Z} v_t d\qty(\mu_--\delta_2) + \int_{Z\setminus  B_Z} v_t d\qty(\mu_--\delta_2)\\
    &=&\int_{W\cup B_Z} \Tilde{v}_{push} d\qty(\mu_--\delta_2) + \int_{Z\setminus  B_Z} v_t d\qty(\mu_--\delta_2) \quad(\because \text{ Lemma }\ref{lemma:both delta integral equal})
\end{eqnarray*}
Since $\mu_{-}(B_Z)=0$ and $v\ge0$,
\begin{gather*}
    \int_{B_Z} \Tilde{v}_{push} d\qty(\mu_--\delta_2)=\int_{B_Z} \Tilde{v}_{push} d\qty(-\delta_2)\leq \int_{B_Z} \Tilde{v}_{push} d\qty(\delta_1-\delta_2)
\end{gather*}
Moreover, by \eqref{eq:gamma push equal} and \eqref{eq:delta push inequal}, we obtain
    \begin{eqnarray*}
        \int_{W\cup B_Z} \Tilde{v}_{push} d\qty(\mu_--\delta_2)
        &=& \int_{W} \Tilde{v}_{push} d\qty(\mu_--\delta_2)+\int_{B_Z} \Tilde{v}_{push} d\qty(\mu_--\delta_2)\\
        &\le& \int_{W} \Tilde{v}_{push} d\qty(\mu_--\delta_2)+\int_{B_Z} \Tilde{v}_{push} d\qty(\delta_1-\delta_2)\\
        &\leq  & \int_{W} \Tilde{v}_{push} d\qty(\mu_--\delta_2) + \int_{W} \Tilde{v}_{push} d\delta_2- \int_{W} \Tilde{v} d\delta_1 \quad(\because \eqref{eq:delta push inequal})\\
        &=& \int_{W} \Tilde{v} d\qty(\gamma_1-\delta_1) \quad(\because \eqref{eq:gamma push equal})
\end{eqnarray*}
Here, since $v(x)=0$ for $x\in Z\cup B_Z$ and both of $\gamma_1$ and $\delta_1$ have measure only on $Z\cup B$ ($\because$ \eqref{eq:delta moreover cdn} and Lemma\ref{lemma:divide Z and W}), we have
\begin{gather*}
    \int_{X} v d\qty(\gamma_1-\delta_1)=\int_Bvd\qty(\gamma_1-\delta_1)=\int_{W} \Tilde{v} d\qty(\gamma_1-\delta_1) 
\end{gather*}
Therefore, for all $t>0$,
\begin{eqnarray*}
    0\leq \int_{X} v d\qty(\gamma_1-\delta_1) + \int_{Z\setminus  B_Z} v_t d\qty(\mu_--\delta_2)
\end{eqnarray*}
By the definition of $v_t(x),$ notice that for all $x\in Z\setminus B_Z$, $v_t(x)\rightarrow 0$ as $t\rightarrow \infty$. Thus, by the bounded convergence theorem.
 \begin{gather*}
     0
     \leq \int_{X} v d\qty(\gamma_1-\delta_1)
     +\lim_{t\rightarrow \infty}\int_{Z\setminus B_Z} v_t d\qty(\mu_--\delta_2) 
     = \int_{X} v d\qty(\gamma_1-\delta_1)
 \end{gather*}
 Therefore,
 \begin{gather*}
     \int_{X} v d\gamma_1\geq \int_{X} v d\delta_1
 \end{gather*}
Since $c>c^*$ implies $(n+1)(c+1-p)<c+1$, by \eqref{eq:c>c^star implies gamma1 does not cvx dominate mu+} recall we have
\[
\int_X v d\mu_+ >\int_X vd\gamma_1.
\]
Thus, we obtain
\[
\int_X v d\mu_+ >\int_X vd\delta_1.
\]
Since $v\in \mathcal{U}(X)$,  $\delta_1\succeq_{cvx}\mu_+$ doesn't hold.

\end{proof}

By the above lemmas, Proposition \ref{prop: nescessity} is proved.

\newpage

\printbibliography

\end{document}